\newcommand{\RNum}[1]{\uppercase\expandafter{\romannumeral #1\relax}}
\theoremstyle{plain}
\newtheorem{theorem}{Theorem}
\newtheorem{corollary}[theorem]{Corollary}
\newtheorem{claim}{Claim}
\newtheorem{fact}[claim]{Fact}
\newtheorem{proposition}[theorem]{Proposition}
\theoremstyle{definition}
\newtheorem{definition}{Definition}
\newtheorem{example}{Example}
\theoremstyle{remark}
\newtheorem{remark}{Remark}
\theoremstyle{remark}
\def \vec {\normalfont{\text{vec}}}
\def \id {\mathbbm{1}}
\begin{document}
\title{Classification and implementation of unitary-equivariant and permutation-invariant quantum channels}

\author{Laura Mančinska}
\email{mancinska@math.ku.dk}
\affiliation{Centre for the Mathematics of Quantum Theory, University of Copenhagen}

\author{Elias Theil}
\email{edmt@math.ku.dk}
\affiliation{Centre for the Mathematics of Quantum Theory, University of Copenhagen}

\begin{abstract}
    Many quantum information tasks use inputs of the form $\rho^{\otimes m}$, which naturally induce permutation and unitary symmetries. We classify all quantum channels that respect both symmetries — i.e. unitary-equivariant and permutation-invariant quantum channels from $(\mathbb{C}^{d})^{\otimes m}$ to $(\mathbb{C}^{d})^{\otimes n}$ — via their extremal points. Operationally, each extremal quantum channel factors as \emph{unitary Schur sampling} $\rightarrow$ an \emph{irrep-level unitary-equivariant quantum channel} $\rightarrow$ the \emph{adjoint unitary Schur sampling}. We give a streaming implementation ansatz that uses an efficient streaming implementation of unitary Schur sampling together with a resource-state primitive, and we apply it to state symmetrization, symmetric cloning, and purity amplification. In these applications we obtain polynomial-time algorithms with exponential memory improvements in $m,n$. Further, for symmetric cloning we present, to our knowledge, the first efficient (polynomial-time) algorithm with explicit memory and gate bounds.
\end{abstract}

\maketitle

\tableofcontents

\section{Introduction}
A wide range of quantum tasks requires multiple copies of a given input state to perform some quantum operation. As an example, the quantum task of purity amplification \cite{Li_2025} takes $m$ copies of the noisy state
\begin{align}
    \rho_{\psi} = (1-\eta) \ketbra{\psi}{\psi} + \frac{\eta}{d}\id_{d}
\end{align}
and approximately outputs $n<m$ copies of the pure state, i.e. $\ketbra{\psi}{\psi}^{\otimes n}$.\footnote{For a more formal definition, see Section \ref{sec:examples}} To this end, the optimal algorithm makes use of two key symmetries present in this task. First, the input is given by $\rho_{\psi}^{\otimes m}$, so permuting the input registers doesn't change the outcome. In a similar way, permuting the registers for a given output doesn't change our measure of success, which is given by the fidelity with the ideal state $\ketbra{\psi}{\psi}^{\otimes n}$. We call this property \emph{permutation-invariance}. Second, acting on the input by some unitary as $(U\rho_{\psi} U^{*})^{\otimes m}$ leads to a rotated ideal output $(U\ketbra{\psi}{\psi}U^{*})^{\otimes n}$. We call this second property \emph{unitary-equivariance}.

Many tasks in quantum information theory are either unitary-equivariant like asymmetric cloning \cite{Nechita_2023}, (multi)port-based teleportation \cite{Ishizaka_2009,Kopszak_2021} or entanglement concentration \cite{Hayashi_2003}, or have both symmetries like spectrum estimation and related tasks \cite{ODonnell_2015}, state tomography \cite{ODonnell_2016,Haah_2017}, symmetric cloning \cite{Fan_2014}, (super)broadcasting \cite{Dang_2007,Chiribella_2007} and quantum majority vote \cite{Buhrman_2022}. The symmetries in these tasks translate directly to symmetries of the respective optimal protocols, which helps a lot with their analysis. We can understand the wide prevalence of these symmetries if we consider that any problem with input $\rho^{\otimes m}$ naturally exhibits the permutation symmetry, while any problem that is invariant under a local change of reference basis exhibits unitary symmetry.

\emph{Schur-Weyl duality} is a central tool for dealing with the symmetries listed above. It tells us that
\begin{align}
    (\mathbb{C}^{d})^{\otimes m} \stackrel{SU(d) , S_{m}}{\cong} \bigoplus_{\lambda \vdash_{d} m} \mathcal{P}_{\lambda}^{m} \otimes \mathcal{Q}_{\lambda}^{d} \, ,
\end{align}
where the $\mathcal{P}_{\lambda}^{m}$ and $\mathcal{Q}_{\lambda}^{d}$ are irreducible representations (irreps) of the groups $S_{m}$ and $SU(d)$ respectively, and the $\lambda \vdash_{d} m$ are the partitions that label them. The \emph{Schur transform}, which implements the above isometry, is part of most algorithms solving these problems. Its implementation on quantum computers, together with its extension to the \emph{mixed Schur transform}, were studied in the literature \cite{Bacon_2005,Krovi_2019,Grinko_2023,Nguyen_2023}.

There have been attempts to understand the shared structure and algorithmic tools for these problems beyond Schur-Weyl duality and the Schur transform. For permutation-invariant problems, recent work \cite{CerveroMartin_2024} has introduced the concept of \emph{unitary Schur sampling}. It uses the fact that a permutation-invariant input can be written via Schur-Weyl duality and Schur's lemma as
\begin{align}
    \label{equ:permutation_invariant_state}
    \rho \stackrel{SU(d) , S_{m}}{\cong} \bigoplus_{\lambda \vdash_{d} m} \frac{1}{\dim \mathcal{P}_{\lambda}^{m}} \id_{\mathcal{P}_{\lambda}^{m}} \otimes c_{\lambda} \rho_{\lambda} \, ,
\end{align}
where the $c_{\lambda}$ sum to $1$ and the $\rho_{\lambda}$ are states on $\mathcal{Q}_{\lambda}^{d}$. Since the register $\mathcal{P}_{\lambda}^{m}$ is in the maximally mixed state, it carries no information and can be discarded, which in turn leads to a general and efficient pre-processing algorithm that reduces memory requirements.

Unitary-equivariant quantum channels were discussed in \cite{Grinko_2023,Nguyen_2023,Nguyen_2024}, and general SDPs with unitary-equivariance were discussed in \cite{Grinko_2024}. These works make use of the so-called \emph{mixed Schur-Weyl duality} together with Schur's lemma to diagonalize the Choi matrix $C_{\Phi}$ of a given quantum channel $\Phi$ in a similar manner as Equation \eqref{equ:permutation_invariant_state}, which gives
\begin{align}
    C_{\Phi} \stackrel{SU(d) , \mathcal{A}_{m,n}^{d}}{\cong} \bigoplus_{\gamma\vdash_{d} (m,n)} X_{\gamma} \otimes \id_{\mathcal{Q}_{\gamma}^{d}} \, .
\end{align}
Here, the operators $X_{\gamma} \geq 0$ live on the irrep $\mathcal{P}_{\gamma}^{m,n,d}$ of the algebra of partially transposed permutations $\mathcal{A}_{m,n}^{d}$. For the case $\max(m,n) \ll d$, this leads to a large reduction in the number of SDP variables \cite{Grinko_2024}. Furthermore, an explicit implementation of quantum channels of this type for $n=1$ is given in \cite{Nguyen_2023} using the \emph{mixed Schur transform}.

We now turn towards quantum tasks with both permutation-invariance and unitary-equivariance. As discussed above, there is a wide range of such instances. We expect that the combined symmetry yields even more structure when it comes to classifying the optimal protocols and implementing them on a quantum computer. This motivates the study of general unitary-equivariant and permutation-invariant quantum channels, which have been discussed so far for the special cases $d=2$ and $n=1$ in \cite{Buhrman_2022} and for $\min(m,n)=1$ in \cite{Grinko_2024}.

In broader terms, equivariant quantum channels on irreps have been studied for $SU(2)$ in \cite{Nuwairan_2013,Aschieri_2024}, and for finte groups and special cases of irreps in \cite{Mozrzymas_2017}. Further, a full classification of group equivariant quantum operations\footnote{These are completely positive and trace non-increasing linear maps.} for continuous Lie groups has been given in \cite{Dariano_2004}. However, this result is very general and doesn't make use of the deeper understanding we have for the irreps of $SU(d)$ and $S_{m}$. Additionally, all classifications of equivariant quantum channels described above give a description in terms of the Choi matrix, without providing an understanding of how to implement them on a quantum computer.

We now list our main contributions:
\newline 

\noindent
\textbf{Contribution I: Classify and interpret.} (Theorem \ref{thm:result_1}) We classify all unitary-equivariant and permutation-invariant quantum channels from $(\mathbb{C}^{d})^{\otimes m}$ to $(\mathbb{C}^{d})^{\otimes n}$ in the Choi matrix formalism, via the set of all extremal quantum channels. In addition, we show that they factor as combinations of unitary Schur sampling \cite{CerveroMartin_2024}, unitary-equivariant quantum channels on irreps (studied e.g. in \cite{Aschieri_2024}), and the dual quantum channel of unitary Schur sampling, as depicted in Figure \ref{fig:factoring_circuit_diagram_start}
\newline

\begin{figure}[h!]
    \centering
    \scalebox{1}{\input{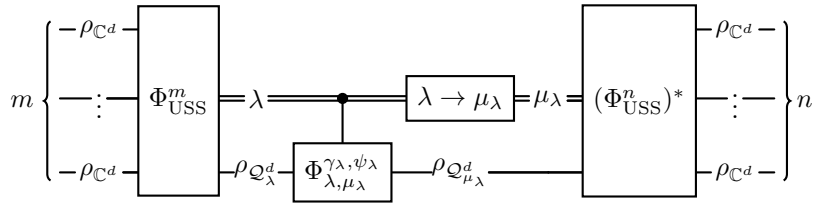}}
    
    \caption{Circuit diagram showing the factoring of quantum channels in Theorem \ref{thm:operational_interpretation_extremal_quantum channels}. Unitary Schur sampling $\Phi_{\rm USS}^{m}$ provides an output irrep label $\lambda$, together with a state on that output irrep $\rho_{\mathcal{Q}_{\lambda}^{d}}$. For each label $\lambda$, a unitary-equivariant quantum channel $\Phi_{\lambda,\mu_{\lambda}}^{\gamma_{\lambda},\psi_{\lambda}}$ maps the unitary irrep $\mathcal{Q}_{\lambda}^{d}$ into the unitary irrep $\mathcal{Q}_{\mu_{\lambda}}^{d}$, which is then transformed back via dual unitary Schur sampling. The superscripts $\gamma_{\lambda},\psi_{\lambda}$ denote a particular choice of quantum channel, explained in detail in Definition \ref{def:unitary_equivariant_quantum channel_irreps}.}
    \label{fig:factoring_circuit_diagram_start}
\end{figure}

While the mathematical description of optimal quantum channels for problems such as purity amplification and symmetric cloning have been studied well, their implementation as quantum circuits have received less attention. In \cite{Li_2025}, there is an efficient implementation, which is specific to purity amplification. For the case of more general unitary-equivariant and permutation-invariant quantum channels, only $d=2$ and $n=1$ has been discussed in \cite{Buhrman_2022}. In particular, to our knowledge, there exists no efficient known algorithm to implement symmetric cloning, apart from approaches with a specialized experimental setup \cite{Fan_2014}, or for particular parameter values like $d=2$ and $m=1$ \cite{Pelofske_2022}. This motivates our second and third contributions:
\newline

\noindent
\textbf{Contribution II: General ansatz.} (Theorem \ref{thm:result_2}) We provide a general ansatz to implement any unitary-equivariant and permutation-invariant quantum channel from $(\mathbb{C}^{d})^{\otimes m}$ to $(\mathbb{C}^{d})^{\otimes n}$ in a streaming manner. This ansatz reduces the task of implementing such a quantum channel in general to the implementation of unitary Schur sampling and its dual quantum channel, together with the implementation of a unitary-equivariant quantum channel on a unitary irrep via a special resource state. This resource state is in general hard to obtain. However, as we show with our examples, in many relevant cases it has a simple form and can be initialized in polynomial time.
\newline

\noindent
\textbf{Contribution III: Three example applications.} (Theorem \ref{thm:result_3}) 
We apply the general ansatz to the quantum tasks of purity amplification, symmetric cloning and state symmetrization. We introduce the task of state symmetrization, where a random permutation is applied to a given state. In all three cases, we obtain an efficient streaming algorithm with exponential improvement in $(m,n)$ for memory complexity over existing algorithms. For the case of state symmetrization, this result is surprising, as we would expect the need to store all $m$ registers to apply a random permutation. For the case of purity amplification, our algorithm outperforms the known algorithm in terms of memory and gate complexity in the regime $d \ll m$. For the case of symmetric cloning, we provide, to our knowledge, the first efficient algorithm.
\newline

This paper is structured as follows: First, we state the results (Section \ref{sec:technical_results}) and discuss them (Section \ref{sec:discussion}). Afterwards, we introduce notation and prelminaries (Section \ref{sec:mathematical_preliminaries_notation}). Then, we classify the unitary-equivariant and permutation-invariant quantum channels (Section \ref{sec:classification_and_operational_interpretation}). Finally, we give a general ansatz for implementation (Section \ref{sec:general_ansatz}) and apply this ansatz to three examples (Section \ref{sec:examples}).

\section{Technical results}
\label{sec:technical_results}

Let $\mathcal{C}^{d}(m,n)$ denote the set of completely positive and trace preserving linear maps, i.e. quantum channels, from $(\mathbb{C}^{d})^{\otimes m}$ to $(\mathbb{C}^{d})^{\otimes n}$. We define the main object of interest for this work. 

\begin{definition}
    A quantum channel $\Phi\in \mathcal{C}^{d}(m,n)$ is \textit{unitary-equivariant} if for all $U\in SU(d)$ we have
    \begin{align}
        \Phi(U^{\otimes m}A(U^{*})^{\otimes m})=U^{\otimes n}\Phi(A)(U^{*})^{\otimes n} \, .
    \end{align}
    Similarly, $\Phi$ is \textit{permutation-invariant} if for all $\sigma\in S_{m}$ and $\tau \in S_{n}$ we have
    \begin{align}
        \Phi(\sigma A \sigma^{*})=\Phi(A)=\tau\Phi(A)\tau^{*} \, .
    \end{align}
    The set of unitary-equivariant and permutation-invariant quantum channels is $\mathcal{C}_{up}^{d}(m,n) \subseteq \mathcal{C}^{d}(m,n)$.
\end{definition}

\subsection{Classification and operational interpretation}

We need notation from representation theory to state our first result. An overview of the general notation and mathematical background is given in Section \ref{sec:mathematical_preliminaries_notation}. For a partition $\lambda \vdash_{d} m$, let $\overline{\lambda}$ be the partition labelling the dual irrep of $\mathcal{Q}_{\lambda}^{d}$. Let further $\lambda +_{d} \mu$ denote the set of partitions $\gamma$ so that $\mathcal{Q}_{\gamma}^{d}$ is contained in the irrep decomposition of $\mathcal{Q}_{\lambda}^{d} \otimes \mathcal{Q}_{\mu}^{d}$, and let $c_{\lambda,\mu}^{\gamma}$ be the respective multiplicity for a given $\gamma \in \lambda +_{d} \mu$. We now state a full classification of the extremal points of $\mathcal{C}_{up}^{d}(m,n)$, which is equivalent to a full classification of $\mathcal{C}_{up}^{d}(m,n)$ by taking convex combinations. In addition, we give an operational interpretation in terms of unitary Schur sampling and extremal quantum channels on $SU(d)$ irreps.

\begin{theorem}[Corollary \ref{cor:classification_extremal_points_Cusd} and Theorem \ref{thm:operational_interpretation_extremal_quantum channels}]
    \label{thm:result_1}
    Each extremal point $\Phi \in \mathcal{C}_{up}^{d}(m,n)$ corresponds to a collection of touples $(\mu_{\lambda},\gamma_{\lambda},\ket{\psi_{\lambda}})$ for each $\lambda\vdash_{d} m$, so that
    \begin{align}
        \mu_{\lambda} \vdash_{d} n \quad , \quad \gamma_{\lambda} \in \overline{\lambda} +_{d} \mu_{\lambda} \quad , \quad \ket{\psi_{\lambda}} \in \mathbb{C}^{c_{\overline{\lambda},\mu_{\lambda}}^{\gamma_{\lambda}}} \, .
    \end{align}
    For a given collection of touples, the Choi matrix $C_{\Phi}$ is
    \begin{align}
        C_{\Phi} \stackrel{SU(d),S_{m},S_{n}}{\cong} \bigoplus_{\lambda\vdash_{d} m}\id_{\mathcal{P}_{\lambda}^{m}}\otimes \frac{1}{\dim \mathcal{P}_{\mu_{\lambda}}^{n}}\id_{\mathcal{P}_{\mu_{\lambda}}^{n}} \otimes \frac{\dim \mathcal{Q}_{\lambda}^{d}}{\dim \mathcal{Q}_{\gamma_{\lambda}}^{d}}\id_{\mathcal{Q}_{\gamma_{\lambda}}^{d}}\otimes \ketbra{\psi_{\lambda}}{\psi_{\lambda}} \, .
    \end{align}
    Such a quantum channel $\Phi$ is equivalent to
    \begin{align}
        \label{equ:factoring_quantum channels}
        \Phi=(\Phi_{\rm USS}^{n})^{*}\circ \left(\bigoplus_{\lambda\vdash_{d}m}\Phi_{\lambda,\mu_{\lambda}}^{\gamma_{\lambda},\psi_{\lambda}}\right) \circ \Phi_{\rm USS}^{m} \, .
    \end{align}
    Here, $\Phi_{\rm USS}^{m}$ is the unitary Schur sampling quantum channel on $(\mathbb{C}^{d})^{\otimes m}$ described in \cite{CerveroMartin_2024}, $(\Phi_{\rm USS}^{m})^{*}$ is its adjoint with respect to the trace inner product, and the $\Phi_{\lambda,\mu}^{\gamma,\psi}$ are extremal quantum channels in the set of unitary-equivariant quantum channels from $\mathcal{Q}_{\lambda}^{d}$ to $\mathcal{Q}_{\mu}^{d}$, see Definition \ref{def:unitary_equivariant_quantum channel_irreps} and Theorem \ref{thm:equivalence_of_unitary_equivariant_quantum channels}.
\end{theorem}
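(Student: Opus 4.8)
The plan is to combine the Choi–Jamiołkowski correspondence with a double application of Schur–Weyl duality (for the input) and mixed Schur–Weyl duality (for the combined input–output system), and then invoke Schur's lemma repeatedly. First I would recall that a quantum channel $\Phi \in \mathcal{C}^d(m,n)$ is unitary-equivariant and permutation-invariant precisely when its Choi matrix $C_\Phi$ on $(\mathbb{C}^d)^{\otimes m} \otimes (\mathbb{C}^d)^{\otimes n}$ commutes with $\overline{U^{\otimes m}} \otimes U^{\otimes n}$ for all $U \in SU(d)$ and with $\sigma \otimes \tau$ for all $\sigma \in S_m$, $\tau \in S_n$. Thus $C_\Phi$ lies in the commutant of the relevant joint group action. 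Using Schur–Weyl duality on the $m$-register and the $n$-register separately, and mixed Schur–Weyl for the $SU(d)$ action that acts as $\overline{U}^{\otimes m} \otimes U^{\otimes n}$, I would decompose the ambient space so that $C_\Phi$ becomes block-diagonal; Schur's lemma then forces each block to be a tensor product of identities on the various multiplicity and irrep spaces with a free positive operator on the one remaining multiplicity space — this is where the data $(\mu_\lambda, \gamma_\lambda, \ket{\psi_\lambda})$ enters. The cleanest bookkeeping is to first resolve the $S_m \times S_n$ symmetry (giving the $\id_{\mathcal{P}_\lambda^m} \otimes \id_{\mathcal{P}_\mu^n}$ factors together with $\lambda \vdash_d m$, $\mu \vdash_d n$), and then within each fixed $(\lambda,\mu)$ sector resolve the $SU(d)$ action on $\mathcal{Q}_\lambda^d{}^* \otimes \mathcal{Q}_\mu^d \cong \bigoplus_{\gamma \in \overline\lambda +_d \mu} \mathcal{Q}_\gamma^d \otimes \mathbb{C}^{c_{\overline\lambda,\mu}^\gamma}$, so that the free operator lives on $\bigoplus_{\gamma} \mathbb{C}^{c_{\overline\lambda,\mu}^\gamma}$.

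Next I would impose the constraints that $C_\Phi \ge 0$ and that the channel is trace-preserving, i.e. $\operatorname{Tr}_{\text{out}} C_\Phi = \id_{(\mathbb{C}^d)^{\otimes m}}$, translated into the Schur basis. Positivity says the free operator $X_\lambda$ on $\bigoplus_\gamma \mathbb{C}^{c_{\overline\lambda,\mu}^\gamma}$ is positive semidefinite; trace preservation pins down its normalization, yielding the scalar $\tfrac{1}{\dim \mathcal{P}_{\mu_\lambda}^n}\cdot\tfrac{\dim \mathcal{Q}_\lambda^d}{\dim \mathcal{Q}_{\gamma_\lambda}^d}$ (the partial trace over the output irrep and output permutation register contributes those dimension factors, using standard identities for partial traces of Clebsch–Gordan isometries). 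At this point the set $\mathcal{C}_{up}^d(m,n)$ is an affine image of the convex set $\bigoplus_\lambda \{X_\lambda \ge 0 : \operatorname{Tr} X_\lambda = 1, \ \mu \text{ fixed per }\lambda\}$; since the map to Choi matrices is injective and affine, extremality is preserved, and the extreme points are exactly those where for each $\lambda$ one fixes a single $\mu_\lambda$ and $X_\lambda = \ketbra{\psi_{\lambda}}{\psi_{\lambda}}$ is rank one supported on a single $\gamma_\lambda$-block (a rank-one state in a direct sum of simplices of density matrices is extremal iff it is a pure state concentrated in one summand). This gives the stated classification — Corollary \ref{cor:classification_extremal_points_Cusd}.

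For the operational factorization \eqref{equ:factoring_quantum channels}, I would compute the Choi matrix of the right-hand side and match it to the displayed block form. The key is that $\Phi_{\rm USS}^m$ has a Choi matrix supported, in the Schur basis, on $\bigoplus_\lambda \tfrac{1}{\dim\mathcal{P}_\lambda^m}\id_{\mathcal{P}_\lambda^m} \otimes (\text{copy map on } \mathcal{Q}_\lambda^d)$ — it discards the permutation multiplicity register and coherently keeps the irrep label and the $\mathcal{Q}_\lambda^d$ content — and similarly for $(\Phi_{\rm USS}^n)^*$ on the output side; composing channels corresponds to the standard link-product on Choi matrices, and the middle direct-sum channel $\bigoplus_\lambda \Phi_{\lambda,\mu_\lambda}^{\gamma_\lambda,\psi_\lambda}$ acts irrep-label-wise between $\mathcal{Q}_\lambda^d$ and $\mathcal{Q}_{\mu_\lambda}^d$ with Choi data governed by $(\gamma_\lambda,\ket{\psi_\lambda})$ as in Definition \ref{def:unitary_equivariant_quantum channel_irreps}. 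Carrying out the link-product contraction reproduces exactly the $\id_{\mathcal{P}_\lambda^m}\otimes\tfrac{1}{\dim\mathcal{P}_{\mu_\lambda}^n}\id_{\mathcal{P}_{\mu_\lambda}^n}\otimes\tfrac{\dim\mathcal{Q}_\lambda^d}{\dim\mathcal{Q}_{\gamma_\lambda}^d}\id_{\mathcal{Q}_{\gamma_\lambda}^d}\otimes\ketbra{\psi_\lambda}{\psi_\lambda}$ structure, with the dimension factors emerging from the normalizations of the two unitary-Schur-sampling channels and their adjoints. The main obstacle I anticipate is purely organizational rather than conceptual: keeping consistent conventions for the dual irrep $\overline{\lambda}$, the placement of complex conjugation in the mixed Schur–Weyl action on the Choi space, and the precise dimension factors coming out of the partial traces — in particular verifying that the $\dim \mathcal{Q}_\lambda^d / \dim \mathcal{Q}_{\gamma_\lambda}^d$ factor (and not some other ratio) is what trace preservation demands, and checking that the normalization of $(\Phi_{\rm USS}^n)^*$, which is an adjoint rather than a channel, is handled correctly in the link product so that the composite is genuinely trace-preserving.
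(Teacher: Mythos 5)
Your proposal follows essentially the same route as the paper: the classification is obtained exactly as in Theorem \ref{thm:classification_Cusd} and Corollary \ref{cor:classification_extremal_points_Cusd} (Schur--Weyl on each register, a Clebsch--Gordan decomposition of $\overline{\mathcal{Q}_{\lambda}^{d}}\otimes\mathcal{Q}_{\mu}^{d}$, Schur's lemma, then positivity and trace preservation followed by the rank-one extremality argument), and your link-product verification of the factorization is equivalent to the paper's check of the composite channel on the basis elements $(U_{\rm Sch}^{m,d})^{*}\ketbra{\lambda,p_{\lambda}^{m},q_{\lambda}^{d}}{\tilde{\lambda},\tilde{p}^{m},\tilde{q}^{d}}U_{\rm Sch}^{m,d}$ in Theorem \ref{thm:operational_interpretation_extremal_quantum channels}. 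One bookkeeping correction to carry through (of the kind you already flag): the Choi matrix of $\Phi_{\rm USS}^{m}$ carries an \emph{unnormalized} $\id_{\overline{\mathcal{P}_{\lambda}^{m}}}$, since tracing out a register gives the identity in the Choi picture, whereas the factor $\frac{1}{\dim\mathcal{P}_{\mu}^{n}}\id_{\mathcal{P}_{\mu}^{n}}$ comes from the adjoint $(\Phi_{\rm USS}^{n})^{*}$ appending the maximally mixed permutation register---this is precisely what produces the asymmetric normalization between the input and output permutation factors in the displayed block form.
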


The first part of Theorem \ref{thm:result_1} can be useful when optimizing over $\mathcal{C}_{up}^{d}(m,n)$ to solve unitary-equivariant and permutation-invariant problems, while the second part gives the operational interpretation.

\subsection{General ansatz}

For the second result of this work, we give an ansatz to implement the quantum channels described in Theorem \ref{thm:result_1}.

\begin{theorem}[Propositions \ref{prop:paths_embedding} and \ref{prop:implement_paths_embedding}, and Theorem \ref{thm:complexity_unitary_equivariant_permutation_invariant_quantum channels}]
    \label{thm:result_2}
    Let $\Phi\in\mathcal{C}_{up}^{d}(m,n)$ be an extremal quantum channel denoted by the touples $(\mu_{\lambda},\gamma_{\lambda},\ket{\psi_{\lambda}})$ as described in Theorem \ref{thm:result_1}, let $\sigma$ be a state on $S^{\otimes m}\subseteq (\mathbb{C}^{d})^{\otimes m}$, with $\dim S=r$, and let $l(\mu_{\lambda})\leq r'$ for all $\lambda$ with $l(\lambda)\leq r$. Then we implement the operation $\sigma\rightarrow\Phi(\sigma)$ in a streaming manner up to error $\epsilon$ in diamond norm with memory complexity $M$ and gate complexity $T$, where
    \begin{align}
        &M=O\big((r+r')d\log_{2}^{p}(d,m,n,1/\epsilon)\big) + \max\limits_{\lambda\vdash_{d} m} M_{\lambda,\mu_{\lambda}}^{\gamma_{\lambda},\psi_{\lambda}} \, ,\\
        \label{equ:gate_complexity_reduced}
        &T=O\big((mr^{3}+n(r')^{3})d\log_{2}^{p}(d,m,n,1/\epsilon)\big) + \max\limits_{\lambda\vdash_{d} m} T_{\lambda,\mu_{\lambda}}^{\gamma_{\lambda},\psi_{\lambda}} \, .
    \end{align}
    Here, $p\approx 1.44$, and $M_{\lambda,\mu}^{\gamma,\psi}$ and $T_{\lambda,\mu}^{\gamma,\psi}$ denote the memory and y necessary to implement the quantum channel $\Phi_{\lambda,\mu}^{\gamma,\psi}$. In addition, if given access to a problem-specific resource state $\ket{p_{\mu\rightarrow \lambda}^{k,l,d}}$, we have
    \begin{align}
        &M_{\lambda,\mu}^{\gamma,\psi} = O(d(\tilde{r}+k+l)\log_{2}^{p}(d,m,n,k,l,1/\epsilon)) \, ,\\
        &T_{\lambda,\mu}^{\gamma,\psi} = O\big((k+l)d\tilde{r}^{3} \log_{2}^{p}(d,m,n,k,l,1/\epsilon)\big) \, ,
    \end{align}
    The resource state $\ket{p_{\mu\rightarrow\lambda}^{k,l,d}}$ lives on $k+l+1$ registers, each encoding a staircase $\nu^{i}$, and $\tilde{r} = \max\limits_{0\leq i \leq k+l}l(\nu^{i})$, see Section \ref{subs:iterated_simple_CG_transforms}.
\end{theorem}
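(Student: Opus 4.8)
The plan is to implement $\Phi$ directly through the factorization \eqref{equ:factoring_quantum channels} of Theorem \ref{thm:result_1}, realizing the three pieces $\Phi_{\rm USS}^{m}$, $\bigoplus_{\lambda}\Phi_{\lambda,\mu_\lambda}^{\gamma_\lambda,\psi_\lambda}$ and $(\Phi_{\rm USS}^{n})^{*}$ one after another in a streaming fashion, and then bookkeeping the resources. For $\Phi_{\rm USS}^{m}$ I would invoke (a mild adaptation of) the streaming construction of \cite{CerveroMartin_2024}: the $m$ input registers are consumed one at a time while the device keeps only the running $SU(d)$-irrep label $\lambda^{(j)}$ in a classical register together with a state on $\mathcal{Q}_{\lambda^{(j)}}^{d}$, advancing from step $j$ to $j+1$ by one elementary one-box Clebsch--Gordan transform $\mathcal{Q}_{\lambda^{(j)}}^{d}\otimes\mathbb{C}^{d}\to\bigoplus_{\lambda^{(j+1)}}\mathcal{Q}_{\lambda^{(j+1)}}^{d}$ and discarding the $S_m$-multiplicity register (legitimate precisely because $\sigma$ is permutation-invariant, cf. \eqref{equ:permutation_invariant_state}). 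Since $\sigma$ is supported on $S^{\otimes m}$ with $\dim S=r$, every $\lambda^{(j)}$ satisfies $l(\lambda^{(j)})\le r$, so each elementary transform acts on a subspace of $\mathrm{poly}(d,r)$ dimension; taking the cost of one such transform to precision $\delta$ as $O(r^{3}d\log_2^{p}(d,m,1/\delta))$ gates and $O(rd\log_2^{p}(d,m,1/\delta))$ memory, the $m$ steps give the $mr^{3}d\log_2^{p}(\dots)$ gate term and the $rd\log_2^{p}(\dots)$ memory term. The dual $(\Phi_{\rm USS}^{n})^{*}$ is implemented symmetrically: run the inverse process over $n$ output registers, splitting off one $\mathbb{C}^{d}$ factor per inverse elementary transform and feeding uniformly random branch choices to synthesize the maximally mixed $\mathcal{P}_{\mu_\lambda}^{n}$ factor appearing in $C_\Phi$; the hypothesis $l(\mu_\lambda)\le r'$ for all relevant $\lambda$ keeps these $n$ transforms on $\mathrm{poly}(d,r')$-dimensional subspaces and yields the $r'$ terms.

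For the middle layer, conditioned on the label $\lambda$ produced by $\Phi_{\rm USS}^{m}$ I would implement the extremal unitary-equivariant irrep channel $\Phi_{\lambda,\mu_\lambda}^{\gamma_\lambda,\psi_\lambda}$ in the Stinespring form of Theorem \ref{thm:equivalence_of_unitary_equivariant_quantum channels}: up to equivalence it is the Clebsch--Gordan isometry $\mathcal{Q}_{\lambda}^{d}\hookrightarrow\mathcal{Q}_{\mu_\lambda}^{d}\otimes\mathcal{Q}_{\gamma_\lambda}^{d}$ along the multiplicity direction $\ket{\psi_\lambda}\in\mathbb{C}^{c_{\overline{\lambda},\mu_\lambda}^{\gamma_\lambda}}$, followed by discarding the $\gamma_\lambda$ register. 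Proposition \ref{prop:paths_embedding} is then the statement that this single large Clebsch--Gordan isometry factors as an ordered composition of $k+l$ elementary one-box transforms whose intermediate partitions trace out a path $\nu^{0},\dots,\nu^{k+l}$ from $\mu_\lambda$ to $\lambda$ in Young's lattice ($k$ boxes added, $l$ removed). Because a fixed-length string of elementary transforms only produces a coherent superposition over all such paths, one must inject the correct path data, and this is exactly the role of the resource state $\ket{p_{\mu\to\lambda}^{k,l,d}}$ on $k+l+1$ staircase registers: Proposition \ref{prop:implement_paths_embedding} asserts that consuming this state while applying the path-controlled elementary transforms reproduces the desired isometry, $\ket{\psi_\lambda}$-weight included. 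Since every $\nu^{i}$ has $l(\nu^{i})\le\tilde{r}$, each elementary transform costs $O(d\tilde{r}^{3}\log_2^{p}(\dots))$ and there are $k+l$ of them, giving $T_{\lambda,\mu}^{\gamma,\psi}$; the working memory holds one $\mathcal{Q}_{\nu^{i}}^{d}$-register together with the $k+l+1$ staircase registers of height $\le\tilde{r}$, giving $M_{\lambda,\mu}^{\gamma,\psi}$.

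Assembling the three layers yields Theorem \ref{thm:complexity_unitary_equivariant_permutation_invariant_quantum channels}: the gate count is the sum of the costs of the three sequential phases, the middle phase being classically controlled by $\lambda$ and hence contributing $\max_{\lambda}T_{\lambda,\mu_\lambda}^{\gamma_\lambda,\psi_\lambda}$, while bounding the peak memory by the sum of the (reusable) per-phase requirements gives $O((r+r')d\log_2^{p}(\dots))+\max_{\lambda}M_{\lambda,\mu_\lambda}^{\gamma_\lambda,\psi_\lambda}$. For the error bound, each of the $N=\mathrm{poly}(m,n,k,l)$ elementary transforms is compiled to diamond-norm error $\epsilon/N$; subadditivity of the diamond norm under composition and its stability under tensoring with the identity then give total error at most $\epsilon$, and since compiling one elementary transform to precision $\delta$ costs only $\mathrm{polylog}(1/\delta)$ gates, this merely inserts $\log_2(1/\epsilon)$ as one more argument inside the $\log_2^{p}(\dots)$ factors with the same exponent $p\approx 1.44$.

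I expect the main obstacle to be the middle layer, concentrated in Propositions \ref{prop:paths_embedding} and \ref{prop:implement_paths_embedding}: proving that the abstractly-characterized extremal equivariant isometry $\mathcal{Q}_{\lambda}^{d}\hookrightarrow\mathcal{Q}_{\mu_\lambda}^{d}\otimes\mathcal{Q}_{\gamma_\lambda}^{d}$ genuinely splits as an ordered product of one-box Clebsch--Gordan transforms along lattice paths, and that a single precomputable resource state on staircase registers drives this product exactly, including the right multiplicity weight. The delicate point is getting the path-superposition amplitudes to line up so that the contributions of the ``wrong'' paths cancel and the partial trace over the $\gamma_\lambda$ register returns precisely $\Phi_{\lambda,\mu_\lambda}^{\gamma_\lambda,\psi_\lambda}$ rather than some mixture; the Schur-sampling layers and the final resource accounting then follow from \cite{CerveroMartin_2024} and routine composition estimates.
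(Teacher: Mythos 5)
Your overall route is the paper's: factor $\Phi$ as in Theorem \ref{thm:result_1} (Equation \eqref{equ:factoring_quantum channels}), implement $\Phi_{\rm USS}^{m}$ by the streaming construction of \cite{CerveroMartin_2024} (Proposition \ref{prop:unitary_Schur_sampling_implementation}), realize the irrep-level channel by a resource-state-driven sequence of simple (dual) Clebsch--Gordan transforms with $\ket{p_{\mu\rightarrow\lambda}^{k,l,d}}=(J_{\mu\rightarrow\lambda}^{k,l,d})^{*}(\ket{p_{\gamma}^{m,n,d}}\otimes\ket{\psi})$ (Propositions \ref{prop:paths_embedding} and \ref{prop:implement_paths_embedding}, with exactness coming from the commutation relation \eqref{equ:commuting_isometries_relations}), and then add up costs with an $\epsilon/N$ compilation per elementary transform. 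That part of your sketch matches the paper's proof of Theorem \ref{thm:complexity_unitary_equivariant_permutation_invariant_quantum channels} essentially step for step.

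The one step that would fail as you describe it is the third layer: synthesizing the maximally mixed $\mathcal{P}_{\mu_{\lambda}}^{n}$ factor of $(\Phi_{\rm USS}^{n})^{*}$ by ``feeding uniformly random branch choices'' to the inverse elementary transforms. The maximally mixed state on $\mathcal{P}_{\mu}^{n}$ is the \emph{uniform} mixture over Gel'fand--Tsetlin paths (standard Young tableaux of shape $\mu$), and choosing uniformly among the admissible box removals at each step does not give the uniform distribution over paths: already for $\mu=(3,1)$ the three paths would receive probabilities $\tfrac{1}{2},\tfrac{1}{4},\tfrac{1}{4}$ instead of $\tfrac{1}{3}$ each, so the channel you implement is not $(\Phi_{\rm USS}^{n})^{*}$ and the composite is not $\Phi$. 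The paper's Proposition \ref{prop:symmetrization_quantum channel_implementation} avoids this by writing
\begin{align}
(\Phi_{\rm USS}^{n})^{*}(\rho)=\frac{1}{\dim\mathcal{P}_{\mu}^{n}}\sum_{p_{\mu}^{n}}\iota_{p_{\mu}^{n}}\,\rho\,(\iota_{p_{\mu}^{n}})^{*}
\end{align}
and invoking a dedicated classical sampler of a uniformly random Gel'fand--Tsetlin path with the correct (dimension-weighted, hook-walk-type) co-transition probabilities, following \cite{Greene_1979} (Algorithms \ref{alg:sample_one_box_removed}--\ref{alg:sample_GT_basis_vector}, Theorem \ref{thm:algorithm_sample_GT_basis}), reformulated so it runs in a streaming manner with $O(r^{2})$ arithmetic per removed box (Proposition \ref{prop:complexity_sample_one_box_removed_alternative}); only then does the stated $n(r')^{3}d\log_{2}^{p}(\cdot)$ gate count and $r'd\log_{2}^{p}(\cdot)$ memory survive. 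With that replacement --- classically sample the path with the correct distribution, then apply the path-conditioned inverse simple CG transforms --- your argument coincides with the paper's.
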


This theorem uses the operational interpretation given in Theorem \ref{thm:result_1} to provide an algorithm for a general $\Phi \in \mathcal{C}_{up}^{d}(m,n)$. The efficient streaming implementation of $\Phi_{\rm USS}^{m}$ has been discussed in \cite{CerveroMartin_2024}, and the respective algorithm can be adapted to $(\Phi_{\rm USS}^{n})^{*}$, see Proposition \ref{prop:symmetrization_quantum channel_implementation}. The main difficulty is the implementation of the quantum channels $\Phi_{\lambda,\mu}^{\gamma,\psi}$. An efficient implementation of general Clebsch--Gordan transforms $U_{\rm CG}^{\lambda,\mu,d}$ would solve this problem, see Theorems \ref{thm:complexity_unitary_equivariant_permutation_invariant_quantum channels} and \ref{thm:equivalence_of_unitary_equivariant_quantum channels}. However we expect that this is a hard problem, on not feasible for general $\lambda,\mu$. Therefore we reduce the problem to implementing a sequence of simple (dual) Clebsch--Gordan transforms $U_{\rm CG}^{\lambda, \ydiagram{1},d}$ and $U_{\rm CG}^{\lambda, \overline{\ydiagram{1}},d}$, see Propositions \ref{prop:paths_embedding} and \ref{prop:implement_paths_embedding}. We then transfer the complexity to the resource state input, given by superpositions over paths in the Bratteli diagram (see \cite{Grinko_2023,Nguyen_2023}) from $\lambda$ to $\mu$ or to $\gamma$. A detailed exposition of this approach is given in Section \ref{subs:iterated_simple_CG_transforms}.

Again, preparing the correct resource state for general quantum channels is likely hard, as it corresponds to finding arbitrary $\mathcal{A}_{k,l}^{d}$ irreps inside $\mathcal{A}_{m+k,n+l}^{d}$ irreps. However, the quantum channels that appear as solutions of unitary-equivariant and permutation-invariant problems often have more structure in the respective touples $(\mu_{\lambda},\gamma_{\lambda},\ket{\psi_{\lambda}})$. For example in the cases of purity amplification, symmetric cloning and state symmetrization, all the relevant multiplicity spaces are one-dimensional, and therefore $\ket{\psi_{\lambda}}$ is always trivial. Things also become easier if one or more of the involved irreps are symmetric irreps (symmetric cloning, purity amplification), the trivial irrep (state symmetrization), or if one label can be written as the sum of other labels (symmetric cloning has $\lambda + \gamma_{\lambda} = \mu_{\lambda}$).

The implementation of Theorem \ref{thm:result_2} uses mid-circuit measurements coming from \cite{CerveroMartin_2024}, and coherent floating-point arithmetic for the Clebsch-Gordan transforms \cite{CerveroMartin_2024,Bacon_2005,Grinko_2023,Nguyen_2023}. These requirements are challenging for current hardware and likely need good error-correction to be implemented.

\subsection{Application to three tasks}

Our third result gives applied instances of the ansatz.

\begin{theorem}[Corollaries \ref{cor:result_state_symmetrization}, \ref{cor:result_symmetric_cloning} and \ref{cor:result_purity_amplification}]
    \label{thm:result_3}
    We implement the following quantum channels in a streaming manner up to error $\epsilon$ in diamond norm:
    \begin{enumerate}
        \item The symmetrization quantum channel $\rho\rightarrow\Phi_{\rm sym}^{m}(\rho)$ for an input state $\rho$ on $S^{\otimes m}\subseteq (\mathbb{C}^{d})^{\otimes m}$ with $\dim S = r$, which has memory and gate complexity as given in Table \ref{tab:comparison_algorithms_symmetrization}.
        \item The symmetric cloning quantum channel $\ketbra{\psi}{\psi}^{\otimes m} \rightarrow \Phi_{\rm cl}^{m,n}(\ketbra{\psi}{\psi}^{\otimes m})$, which has memory and gate complexity as given in Table \ref{tab:comparison_algorithms_symmetric_cloning}.
        \item The purity amplification quantum channel $\rho \rightarrow \Phi_{\rm pa}^{m,1}(\rho)$, which has memory and gate complexity as given in Table \ref{tab:comparison_algorithms_purity_amplification}.
    \end{enumerate}
    A detailed description of these quantum channels is given in Section \ref{sec:examples}.
\end{theorem}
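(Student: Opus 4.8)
The plan is to realize each of the three channels as a concrete instance of Theorem~\ref{thm:result_2}: for each one I would (i) verify it lies in $\mathcal{C}_{up}^{d}(m,n)$ and read off its defining data $(\mu_{\lambda},\gamma_{\lambda},\ket{\psi_{\lambda}})$ from Theorem~\ref{thm:result_1}, (ii) identify the associated resource state $\ket{p^{k,l,d}_{\mu\to\lambda}}$ and show it has a closed form that is preparable in polynomial time and memory, and (iii) substitute the resulting parameters into the bounds of Theorem~\ref{thm:result_2}.

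\textbf{Identifying the data.} For state symmetrization $\Phi_{\rm sym}^{m}(\rho)=\frac{1}{m!}\sum_{\sigma\in S_{m}}\sigma\rho\sigma^{*}$ (so $n=m$): permutations commute with $U^{\otimes m}$ and the average is by construction $S_{m}$-bi-invariant, hence $\Phi_{\rm sym}^{m}\in\mathcal{C}_{up}^{d}(m,m)$. In the Schur basis it acts as the $\mathcal{P}_{\lambda}^{m}$-twirl, i.e. it traces out and re-randomizes the permutation register while acting as the identity on $\mathcal{Q}_{\lambda}^{d}$; hence $\mu_{\lambda}=\lambda$, $\gamma_{\lambda}$ is the trivial $SU(d)$ irrep (the Choi state of $\mathrm{id}_{\mathcal{Q}_{\lambda}^{d}}$ is the normalized maximally entangled vector, which spans the one-dimensional trivial component of $\mathcal{Q}_{\overline{\lambda}}^{d}\otimes\mathcal{Q}_{\lambda}^{d}$), and $\ket{\psi_{\lambda}}$ is the unique unit vector in $\mathbb{C}^{1}$; equivalently each $\Phi^{\gamma_{\lambda},\psi_{\lambda}}_{\lambda,\mu_{\lambda}}$ in \eqref{equ:factoring_quantum channels} is the identity channel, and one checks this reproduces the Choi matrix of $\Phi_{\rm sym}^{m}$. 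For symmetric cloning $\Phi_{\rm cl}^{m,n}$ the input is supported on the symmetric subspace $\mathcal{Q}_{(m)}^{d}$, so only $\lambda=(m)$ occurs; the optimal clone lands in the symmetric output irrep, giving $\mu_{(m)}=(n)$, while $\gamma_{(m)}$ is the unique label with $(m)+_{d}\gamma_{(m)}=(n)$ (the regime $\lambda+\gamma_{\lambda}=\mu_{\lambda}$ highlighted after Theorem~\ref{thm:result_2}) and $c^{\gamma_{(m)}}_{\overline{(m)},(n)}=1$, so $\ket{\psi_{(m)}}$ is again trivial. For purity amplification $\Phi_{\rm pa}^{m,1}$ we have $n=1$: every relevant output label is the fundamental irrep $\mu_{\lambda}=\ydiagram{1}$, the $\gamma_{\lambda}\in\overline{\lambda}+_{d}\ydiagram{1}$ are fixed by the single-box branching rule, all multiplicities equal one, and matching the fidelity-optimal channel of \cite{Li_2025} to \eqref{equ:factoring_quantum channels} is a finite computation; if that channel turns out to be a proper mixture rather than an extremal point, one decomposes it into finitely many extremal ones and implements the mixture by a classical coin.

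\textbf{Resource states.} For each example I would compute $\ket{p^{k,l,d}_{\mu\to\lambda}}$ from the iterated simple Clebsch--Gordan construction of Section~\ref{subs:iterated_simple_CG_transforms}. The decisive simplification is that in all three cases the chain of simple $U_{\rm CG}^{\nu,\ydiagram{1},d}$ / $U_{\rm CG}^{\nu,\overline{\ydiagram{1}},d}$ steps relating $\lambda$, $\mu_{\lambda}$ and $\gamma_{\lambda}$ can be taken to pass only through symmetric (single-row) diagrams, or, for symmetrization, it collapses entirely because the irrep channel is the identity, so the resource state is trivial. For symmetric cloning the path follows $(m)\to(m+1)\to\cdots\to(n)$ and the amplitudes of $\ket{p^{k,l,d}_{\mu\to\lambda}}$ are explicit ratios of the dimensions $\dim\mathcal{Q}^{d}_{(j)}=\binom{j+d-1}{d-1}$ together with binomial weights; likewise for purity amplification the path runs through $\ydiagram{1},(2),\dots$ (dualized where needed) with amplitudes of the same elementary type. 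In each case $\ket{p^{k,l,d}_{\mu\to\lambda}}$ is a short product of one-dimensional superpositions over the path registers with $k,l=O(m+n)$ and $\tilde r=O(d)$, and I would prove the amplitude formulas directly from the definition of the iterated transform and bound the preparation cost by standard coherent-arithmetic state loading.

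\textbf{Complexities and the main obstacle.} Substituting into Theorem~\ref{thm:result_2} the input/output support parameters $r,r'$ appropriate to each task (with $r'=r$ for symmetrization, and $r'=1$ for cloning and purity amplification since the output irrep is the fundamental or a symmetric diagram) together with the $k,l,\tilde r$ just computed, and handling the $(\Phi_{\rm USS}^{n})^{*}$ block via Proposition~\ref{prop:symmetrization_quantum channel_implementation}, yields the polynomial memory and gate bounds recorded in Tables~\ref{tab:comparison_algorithms_symmetrization}, \ref{tab:comparison_algorithms_symmetric_cloning} and \ref{tab:comparison_algorithms_purity_amplification}; the diamond-norm error $\epsilon$ is inherited additively from the errors of the three building blocks by subadditivity of diamond-norm error under composition. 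I expect the main obstacle to be the resource-state analysis: Theorem~\ref{thm:result_2} guarantees abstractly that a suitable $\ket{p^{k,l,d}_{\mu\to\lambda}}$ exists, but one must show that for these structured channels its amplitudes have the claimed closed form and that it can be loaded coherently within the stated memory, which is exactly where the combinatorics of symmetric irreps (and of the trivial irrep) does the real work. The surprising memory bound for state symmetrization in particular rests on the observation that the path structure collapses to the trivial label, so that after unitary Schur sampling no permutation information need be stored at all.
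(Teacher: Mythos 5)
Your high-level route is the same as the paper's: verify membership in $\mathcal{C}_{up}^{d}(m,n)$ (or restrict the input where needed), read off the data $(\mu_{\lambda},\gamma_{\lambda},\ket{\psi_{\lambda}})$ of Theorem~\ref{thm:result_1}, and substitute into Theorem~\ref{thm:complexity_unitary_equivariant_permutation_invariant_quantum channels}; your treatment of state symmetrization (identity irrep channels, $\mu_{\lambda}=\lambda$, $\gamma_{\lambda}=\emptyset$, trivial multiplicity) is exactly Corollary~\ref{cor:result_state_symmetrization}.

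However, the part you single out as ``the main obstacle'' is misdirected, and one identification is wrong. For cloning and purity amplification the paper's key observation is that the relevant path spaces are one-dimensional: $\mathcal{P}_{(n,\underline{0})\to(m,\underline{0})}^{0,n-m,d}$ (only one way to remove $n-m$ boxes from a single row and remain a staircase) and $\mathcal{P}_{\gamma_{\lambda,\min}\to\lambda}^{1,0,d}$ (only one way to add the specific box). Hence the resource state $\ket{p_{\mu\to\lambda}^{k,l,d}}$ is a single basis path label --- there are no ``amplitudes given by ratios of dimensions $\binom{j+d-1}{d-1}$'' to derive and no coherent state loading to cost out; such amplitudes sit inside the simple Clebsch--Gordan unitaries of Proposition~\ref{prop:superposition_CG_transforms}, not in the resource state. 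More seriously, your claim that the purity-amplification path runs through single-row diagrams $\ydiagram{1},(2),\dots$ is incorrect: by the structure imported from \cite{Li_2025}, the embedding is $\mathcal{Q}_{\lambda}^{d}\hookrightarrow\mathcal{Q}_{\gamma_{\lambda,\min}}^{d}\otimes\mathbb{C}^{d}$ with $\gamma_{\lambda,\min}$ equal to $\lambda$ with one box removed --- a general partition of length up to $d$ --- and the path is the single step $\gamma_{\lambda,\min}\to\lambda$; a single-row path would implement a different channel. Relatedly, your blanket parameters $k,l=O(m+n)$ and $\tilde r=O(d)$ do not reproduce Table~\ref{tab:comparison_algorithms_symmetric_cloning}: for cloning one needs $\tilde r=1$ (the path genuinely stays single-row), $r=r'=1$ for the Schur-sampling stages, and the fact that the unique path need not be stored in superposition, to land on $O(nd\log_{2}^{p}(d,n,1/\epsilon))$ gates; $\tilde r=O(d)$ would give $O(nd^{4})$, and the $d^{4}$ factor belongs only to purity amplification, where $\tilde r\le d$ is unavoidable. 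Finally, the step ``the optimal clone lands in the symmetric output irrep'' is not free: the paper imports the explicit optimal cloner of \cite{Werner_1998}, $\Phi_{\rm cl}^{m,n}(\rho)=\frac{d[m]}{d[n]}\Pi_{\rm sym}^{n,d}(\rho\otimes\id_{d}^{\otimes(n-m)})\Pi_{\rm sym}^{n,d}$, matches it to $(\Phi_{\rm USS}^{n})^{*}\circ\Phi_{(m,\underline{0}),(n,\underline{0})}^{(n-m,\underline{0}),1}\circ\Phi_{\rm USS}^{m}$ via Theorem~\ref{thm:equivalence_of_unitary_equivariant_quantum channels}, and notes that this map is trace preserving only on the symmetric subspace, so it is treated as a channel with restricted input rather than literally an element of $\mathcal{C}_{up}^{d}(m,n)$; your step (i) should be amended accordingly.
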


\begingroup
\begin{table*}[htbp!]
\renewcommand{\arraystretch}{1.5}
\begin{center}
    \begin{tabular}{ p{0.30\linewidth} p{0.35\linewidth} p{0.35\linewidth} } 
    \hline
        Algorithm & Memory & Gates \\ 
    \hline
        Random permutation & $O\big(m\log_{2}(d)\big)$ & $O\big(m\big)$ \\
    \hline
        This work & $O\big(rd\log_{2}^{p}(d,m,1/\epsilon)\big)$ & $O\big(mr^{3}d\log_{2}^{p}(d,m,1/\epsilon)\big)$ \\
    \hline
    \end{tabular}
\end{center}
\caption{Complexity for state symmetrization.
}
\label{tab:comparison_algorithms_symmetrization}
\end{table*}

\begin{table*}[htbp!]
\renewcommand{\arraystretch}{1.5}
\begin{center}
    \begin{tabular}{ p{0.23\linewidth} p{0.25\linewidth} p{0.32\linewidth} p{0.2\linewidth} } 
    \hline
        Algorithm & Memory & Gates & Remarks \\
    \hline
        NISQ Cloning \cite{Pelofske_2022} & $O\big(n\log_{2}^{p}(n,1/\epsilon)\big)$ & $O\big( n\big)$ & $m=1$, $d=2$ \\
    \hline
        Choi matrix state \cite{Nguyen_2023} & $O\big((n+d)\log_{2}^{p}(d,n,1/\epsilon)\big)$ & $O\big( nd^{4} \log_{2}^{p}(d,n,1/\epsilon) + (d^{6}+n)\big)$ & $m=O(1)$ \\
    \hline
        Direct implementation & $O\big(n\log_{2}(d)\big)$ & $O\big(((n+d)/k)^{2k} \log_{2}^{p}(1/\epsilon)\big)$ & Stinespring dilation and Solovay-Kitaev construction; $k=\min(n,d)$ \\
    \hline
        This work & $O\big(d\log_{2}^{p}(d,n,1/\epsilon)\big)$ & $O\big(nd\log_{2}^{p}(d,n,1/\epsilon)\big)$ & \\
    \hline
    \end{tabular}
\end{center}
\caption{Complexity for symmetric cloning.}
\label{tab:comparison_algorithms_symmetric_cloning}
\end{table*}

\begin{table*}[htbp!]
\renewcommand{\arraystretch}{1.5}
\begin{center}
    \begin{tabular}{  p{0.30\linewidth} p{0.35\linewidth} p{0.35\linewidth} } 
    \hline
        Algorithm & Memory & Gates \\ 
    \hline
        Quantum Fourier transform \cite{Li_2025} & $O\big(m\log_{2}(m)\big)$ & $O\big(\text{poly} (m,\log_{2}(d,1/\epsilon))\big)$ \\
    \hline
        This work & $O\big(d^{2}\log_{2}^{p}(d,m,1/\epsilon)\big)$ & $O\big(md^{4}\log_{2}^{p}(d,m,1/\epsilon)\big)$ \\
    \hline
    \end{tabular}
\end{center}
\caption{Complexity for purity amplification.
}
\label{tab:comparison_algorithms_purity_amplification}
\end{table*}
\endgroup

The result for state symmetrization is surprising, as we only need $\log(m)$ memory to apply a random permutation on a state on $m$ systems. We compare our algorithm for state symmetrization to the straightforward implementation via application of a random permutation.

For symmetric cloning, our result gives, to our best knowledge, the first implementation and analysis of memory and gate complexity for the problem of symmetric cloning with local dimension $d$. There are other approaches described in \cite{Fan_2014}, but they are either specific to certain physical implementations or without explicit analysis of memory and gate complexity. We compare our algorithm to various algorithms given in the literature and to a direct implementation via Stinespring dilation and an improved Solovay-Kitaev construction. The direct implementation has an output space of $(\mathbb{C}^{d})^{\otimes n}$, which gives a memory complexity of $O(n\log_{2}(d))$. In addition, the implementation via Stinespring dilation and a Solovay-Kitaev construction has $T=O(D^{2}\log_{2}(1/\epsilon))$, where $D$ is the maximum of the input and output subspace dimensions. In our case we take $D$ to be the dimension of the symmetric subspace, and we get
\begin{align}
    D = d[n] =
    \begin{pmatrix}
        n + d - 1 \\
        n
    \end{pmatrix}
    \geq ((n+d)/\min(n,d))^{\min(n,d)} \, .
\end{align}

Finally, our result for purity amplification gives an alternative to the algorithm presented in \cite{Li_2025} for the regime $d \ll m$, while their algorithm is advantageous in the regime $m \ll d$.

\section{Discussion}
\label{sec:discussion}
We believe the results presented in this work to be useful in multiple ways.
\newline 

\noindent
\textbf{Classification.} Our classification is a valuable mathematical tool when tackling new problems with unitary and permutation symmetries, such as broadcasting problems, purity amplification with $n>1$ or majority vote with local dimension $d>2$.
\newline 

\noindent
\textbf{General ansatz.} We expect that our general ansatz will provide an ansatz for the implementation of the quantum channels that optimize the problems listed above.
\newline 

\noindent
\textbf{Three example applications.} We provide three new explicit algorithms that are polynomial time in $n,m,d$ for the tasks of state symmetrization, symmetric cloning and purity amplification. We reiterate that our algorithm for symmetric cloning is, to our knowledge, the first efficient algorithm for general $d$. In the case of purity amplification, we provide an alternative algorithm for the regime $d\ll m,n$, and in the case of state symmetrization we obtain a surprising exponential advantage in memory for the same regime.
\newline

Future work includes the applications of broadcasting, purity amplification and majority vote, but also other directions.
\newline 

\noindent
\textbf{General CG transforms.} The quantum channels $\Phi_{\lambda,\mu}^{\gamma,\psi}$ can be implemented in particular, if we can implement general Clebsch-Gordan transforms. Maybe there is actually a way to implement them efficiently on a quantum computer.
\newline 

\noindent
\textbf{The case $m,n \ll d$.} There have been interesting approaches to the Schur transform and quantum channels in $\mathcal{C}_{up}^{d}(m,n)$ for the case $m \ll d$ \cite{Krovi_2019,Li_2025}, upcoming work of Grinko et al. The idea is to encode the alphabet of $d$ symbols into a smaller alphabet of at most $m$ symbols, which gives memory and time complexity of $\text{poly}(\log_{2}(d))$ in $d$. This approach works well for the simple CG transform, but potentially runs into problems with the simple dual CG transform. It is interesting how far the results of this paper then are translatable into the setting $m,n \ll d$. We know that purity amplification works \cite{Li_2025}. We suspect the same from state symmetrization, and in general any problem where the path described in Section \ref{sec:general_ansatz} only uses the simple CG transform. Whether it can be extended to a more general setting, which would include symmetric cloning, remains open.

\section{Mathematical preliminaries and notation}
\label{sec:mathematical_preliminaries_notation}

We first review some general notation and mathematical preliminaries.
\newline

\noindent \emph{General. } We use \emph{iff} instead of the expression \emph{if and only if}. 
\newline

\noindent \emph{Algorithms. } We use $\log_{q}(x,y,z)$ to denote an expression of the form $\log_{q}(\text{poly}(x,y,z))$. Further, by \emph{gate complexity} we understand the number of elementary ($CNOT$, $H$, $Z^{\nicefrac{1}{4}}$) qubit gates, and by \emph{memory complexity} we understand the total number of qubit registers necessary to perform the calculation. Finally, by \emph{streaming}, we mean that an algorithm can access the input registers sequentially, and it can output the output registers sequentially.

\subsection{Linear algebra and quantum channels}
We discuss our operator notation and the Choi formalism.
\newline

\noindent \emph{Bounded linear operators. } For vector space $V_{1},V_{2}$, let $\mathcal{B}(V_{1},V_{2})$ be the set of bounded linear operators from $V_{1}$ to $V_{2}$. If $V_{1}=V_{2}=V$, we write $\mathcal{B}(V)$ instead. For $A\in \mathcal{B}(V)$ we denote by $\overline{A}$ the complex conjugate of $A$, by $A^{T}$ the transpose and by $A^{*} = \overline{A}^{T}$ the adjoint. Both complex conjugate and transpose depend on the choice of basis, and we indicate this choice where relevant. In addition, we denote the identity on $V$ as $\id_{V}$. Let finally $W_{1} \subseteq V_{1}$ be a subspace of $V_{1}$ and $A\in \mathcal{B}(V_{1},V_{2})$. Then we denote by $A|_{W_{1}}\in \mathcal{B}(W_{1},V_{2})$ the restriction of $A$ to $W_{1}$.
\newline

\noindent \emph{Direct sums and tensor products. } Given two vector space $V_{1},V_{2}$, we denote their direct sum by $V_{1}\oplus V_{2}$ and their tensor product by $V_{1} \otimes V_{2}$. For $m$ tensor factors of the same vector space we write $V^{\otimes m}$. We use the same notation for operators on the respective direct sums or tensor products. For $V=V_{1} \oplus V_{2}$ and $A\in\mathcal{B}(V_{1})$, we interpret $A\in\mathcal{B}(V)$ as $A \oplus 0$. In particular, we omit all zero terms in direct sums.
\newline

\noindent \emph{quantum channels. } We denote by $\mathcal{C}(V_{1},V_{2}) \subseteq \mathcal{B}(\mathcal{B}(V_{1}),\mathcal{B}(V_{2}))$ the set of all completely positive and trace preserving linear maps, i.e. quantum channels. If $V_{1}=(\mathbb{C}^{d})^{\otimes m}$ and $V_{2}=(\mathbb{C}^{d})^{\otimes n}$, we write $\mathcal{C}^{d}(m,n)$ instead. For $\Phi \in \mathcal{C}(V_{1},V_{2})$, we denote by $\Phi^{*} \in \mathcal{B}(\mathcal{B}(V_{2}),\mathcal{B}(V_{1}))$ the adjoint of $\Phi$ with respect to the trace inner product. Finally, we denote the Choi matrix of $\Phi$ by $C_{\Phi}$, which is given as
\begin{align}
    C_{\Phi} := \sum_{i=1}^{\dim V_{1}} \sum_{j=1}^{\dim V_{1}} \ketbra{i}{j} \otimes \Phi(\ketbra{i}{j}) \, .
\end{align}

\subsection{Representations and dual representations}

We assume familiarity with the notions of group and algebra representations and irreducible representations (irreps).\footnote{In the context of algebras, representations are usually called \emph{modules}, and irreps are called \emph{simple modules}. To unify notation, we use the words representation and irrep for both group and algebra representations.}
\newline

\noindent \emph{Dual representations. } We take a group representation $V$ of $G$ with the action $g \rightarrow r(g)$. Then the \emph{dual representation} is given by
\begin{align}
    g \rightarrow r(g^{-1})^{T} \, .
\end{align}
For unitary representations, we have $r(g^{-1})^{T} = \overline{r(g)}$, and we denote the vector space of this representation by $\overline{V}$.
\newline

\noindent \emph{Isomorphic representations. } Let $V_{1},V_{2}$ be representations of the group $G$ with corresponding actions $r_{1},r_{2}$. Let further $A\in \mathcal{B}(V_{1},V_{2})$ so that
\begin{align}
    \label{equ:operator_commutes_group_action}
    A r_{1}(g) = r_{2}(g) A \,
\end{align}
for all $g\in G$. Then we say that $A$ \emph{commutes} with the action of $G$.\footnote{Operators that fulfil Equation \eqref{equ:operator_commutes_group_action} are often called \emph{intertwiners}.} If there exists an isomorphism between $V_{1}$ and $V_{2}$ that commutes with the action of $G$, then we call the representations \emph{isomorphic}, and we write $V_{1} \stackrel{G}{\cong} V_{2}$. If the isomorphism commutes with the action of two groups $G_{1},G_{2}$ at the same time, we write $V_{1} \stackrel{G_{1},G_{2}}{\cong} V_{2}$.\footnote{If $r_{1},r_{2}$ are unitary representations, we can always choose this isomorphism to be an isometry.} We use a similar notation for algebra representations, and when an isomorphism commutes with both a group and an algebra action.
\newline

\subsection{Partitions and staircases}

Staircases (generalizations of partitions) index the irreps of the special unitary group $SU(d)$ and its commutant, the algebra partially transposed permutations $\mathcal{A}_{m,n}^{d}$. We give a general overview of them.
\newline

\noindent \emph{Staircases. } We define a \emph{staircase} $\gamma \vdash_{d} (m,n)$ to be a tuple of integers $\gamma \in \mathbb{Z}^{d}$ so that
\begin{align}
    \gamma_{1} \geq \gamma_{2} \geq ... \geq \gamma_{d} \, 
\end{align}
and
\begin{align}
    \sum_{i:\gamma_{i}\geq 0} \gamma_{i} \leq m \quad , \quad \sum_{i:\gamma_{i}\leq 0} \gamma_{i} \geq -n \quad , \quad \sum_{i} \gamma_{i} = m - n \, .
\end{align}
In the case of $n=0$, this leads to $\gamma_{d} \geq 0$. Such a staircase is called a \emph{partition of $m$}, and we denote this by $\gamma \vdash_{d} m$. For a staircase $\gamma$, we call the number of nonzero entries in $\gamma$ the \emph{length} of $\gamma$, and we denote it by $l(\gamma)$.
\newline

\noindent \emph{Indexing of staircases. } For a given staircase $\gamma \vdash_{d} (m,n)$, the entries $\gamma_{i}$ are marked by subscripts. When we want to index a family of staircases, we use superscripts, e.g. $\gamma^{1},\gamma^{2},...$, where each $\gamma^{i}$ is itself a staircase.
\newline

\noindent \emph{Skew Young diagrams. } A common way to visualize staircases is via \emph{skew Young diagram}. Each skew Young diagram is a collection of rows of boxes stacked on top of each other, starting with $\gamma_{1}$ at the top. The $i$-th row contains $|\gamma_{i}|$ boxes. If $\gamma_{i} \geq 0$, then the boxes extend to the right, and if $\gamma_{i} \leq 0$ then they extend to the left. When $n=0$, we have boxes only extending to the right, and we call the resulting diagrams simply \emph{Young diagrams}. A special case is given by $\gamma \vdash_{d} 1$, where the corresponding Young diagram is given by a single box $\ydiagram{1}$. Throughout this work, we use staircases and their respective skew Young diagrams interchangeably.

\begin{example}
    Let $\gamma \vdash_{4} (4,2)$ be given by $\gamma = (2,1,0,-1)$. Then the skew Young diagram corresponding to $\gamma$ is given by
    \begin{align}
        \ytableausetup{boxsize=1em}
        \gamma \sim \ydiagram{1+2,1+1,1+0,1} \, .
        \ytableausetup{boxsize=0.5em}
    \end{align}
\end{example}

\subsection{Representation theory of $SU(d)$}

One of the two central symmetries encountered in this work is given by the special unitary group on $\mathbb{C}^{d}$, which we label $SU(d)$. We recall the general irrep labels and Clebsch–Gordan rules—the backbone for coupling $SU(d)$ irreps.
\newline

\noindent \emph{Irreps of $SU(d)$. } The irreducible representations of $SU(d)$ are labeled by staircases $\gamma \vdash_{d} (m,n)$ for any $m,n \geq 0$. We denote the corresponding vector space by $\mathcal{Q}_{\gamma}^{d}$, and we fix the basis and action of $SU(d)$ to be the same as in \cite{Grinko_2023,Nguyen_2023,Vilenkin_1995}. Two irreps given by $\gamma^{1} \vdash_{d} m_{1}$ and $\gamma^{2} \vdash_{d} m_{2}$ are isomorphic iff we have that $\gamma^{1} = \gamma^{2} + (k,...,k)$ for some $k\in \mathbb{Z}$.\footnote{By continuity, we extend $\mathcal{Q}_{\gamma}^{d}$ to be a representation of the unitary group $U(d)$ and even the general linear group $GL(d)$. However, then the addition of entries $(k,...,k)$ corresponds to multiplying the representation by $\det^{k}$, and therefore irreps with different labels are no longer isomorphic.} This means in particular that we always find a partition $\lambda$ with only positive entries to describe a given irrep of $SU(d)$. An important irrep is the defining representation $\mathbb{C}^{d}$. Its label is given by $(1,0,...,0) \sim \ydiagram{1}$. For a given staircase $\gamma = (\gamma_{1},...,\gamma_{d})$, the dual irrep is labeled by $\overline{\gamma} = (-\gamma_{d},...,-\gamma_{1})$. For example, the dual defining representation $\overline{\mathbb{C}^{d}}$ is labeled by the staircase $\overline{(1)} = (0,...,0,-1) \sim \overline{\ydiagram{1}}$. We interchangeably write $\overline{\mathcal{Q}_{\gamma}^{d}} = \mathcal{Q}_{\overline{\gamma}}^{d}$. For example, the dual defining representation $\overline{\mathbb{C}^{d}}$ is labeled by the staircase $\overline{(1)} = (0,...,0,-1) \sim \overline{\ydiagram{1}}$.
\newline

\noindent \emph{Clebsch--Gordan transforms. } For two $SU(d)$ irreps $\mathcal{Q}_{\lambda}^{d}$ and $\mathcal{Q}_{\mu}^{d}$, we decompose their tensor product again into irreps
\begin{align}
    \label{equ:unitary_irreps_decomposition}
    \mathcal{Q}_{\lambda}^{d} \otimes \mathcal{Q}_{\mu}^{d} \stackrel{SU(d)}{\cong} \bigoplus_{\gamma \in \lambda +_{d} \mu} \mathcal{Q}_{\gamma}^{d} \otimes \mathbb{C}^{c_{\lambda,\mu}^{\gamma}} \, .
\end{align}
Here, $c_{\lambda,\mu}^{\gamma}$ is often called the \emph{Littlewood-Richardson coefficient}, and we denote by $\lambda +_{d} \mu$ the set of all $\gamma$ so that $c_{\lambda,\mu}^{\gamma} \geq 1$. An isometry $U_{\rm CG}^{\lambda,\mu,d}$ which achieves the isomorphism above is called a \emph{Clebsch-Gordan transform}, and it is unique up to a unitary rotation on the multiplicity spaces $\mathbb{C}^{c_{\lambda,\mu}^{\gamma}}$. For the case $\mu = \ydiagram{1}$, we make the same choice as in \cite{Grinko_2023,Nguyen_2023,Vilenkin_1995}. We can then extend this choice in a consistent way to general $\mu$ (see Proposition \ref{prop:multiplicity_irreps}), but the actual choice is irrelevant for the results of this work. Sometimes it is useful to have variable input irreps for a given Clebsch--Gordan transform. For the special cases of fixing $\mu = \ydiagram{1}$ (or $\mu = \overline{\ydiagram{1}}$), we define the simple (dual) Clebsch--Gordan transforms as
\begin{align}
    U_{\rm CG}^{m,n,d} := \bigoplus_{\nu \vdash_{d} (m,n)} U_{\rm CG}^{\nu, \ydiagram{1}, d} \quad , \quad U_{\rm dCG}^{m,n,d} := \bigoplus_{\nu \vdash_{d} (m,n)} U_{\rm CG}^{\nu, \overline{\ydiagram{1}}, d} \, .
\end{align}
We also remark here that for $\gamma \in \nu +_{d} \ydiagram{1}$, there exists $1\leq i \leq d$ so that $\gamma_{i} = \nu_{i} + 1$. This corresponds to appending a box to $\gamma$ in such a way that the result is again a staircase. Similarly, for $\gamma \in \nu +_{d} \overline{\ydiagram{1}}$, there exists $1\leq i \leq d$ so that $\gamma_{i} = \nu_{i} - 1$. This corresponds to removing a box. Furthermore, we always have $c_{\nu, \ydiagram{1}}^{\gamma}, c_{\nu, \overline{\ydiagram{1}}}^{\gamma} \in \{0,1\}$, so there is no multiplicity in these cases.
\newline

\begin{example}
    Let $\gamma \vdash_{4} (4,2)$ be given by $\gamma = (2,1,0,-1)$. Then the set $\gamma+_{d}\ydiagram{1}$ is given by
    \begin{align}
        \gamma+_{d}\ydiagram{1} = \left\{
        \ytableausetup{boxsize=1em}
        \ydiagram{1+2,1+1,1+0,1+0} \, , \, \ydiagram{1+2,1+1,1+1,1} \, , \, \ydiagram{1+2,1+2,1+0,1} \, , \, \ydiagram{1+3,1+1,1+0,1}
        \ytableausetup{boxsize=0.5em}
        \right\} \, .
    \end{align}
\end{example}

\noindent \emph{Equivariant embeddings. } We study the inverse of a Clebsch--Gordan transform, restricted to a specific irrep, also accounting for multiplicity. This yields an isometric embedding that commutes with the action of $SU(d)$. Let $\lambda, \mu, \gamma$ label irreps of $SU(d)$, and let $\ket{\psi}\in \mathbb{C}^{c_{\lambda,\mu}^{\gamma}}$. Then we define
\begin{align}
    \iota_{\lambda,\mu}^{\gamma,\psi} \in \mathcal{B}\left( \mathcal{Q}_{\gamma}^{d} , \mathcal{Q}_{\lambda}^{d} \otimes \mathcal{Q}_{\mu}^{d} \right) \, , \\
    \iota_{\lambda,\mu}^{\gamma,\psi} := (U_{\rm CG}^{\lambda,\mu,d})^{*}|_{\mathcal{Q}_{\gamma}^{d} \otimes \ket{\psi}}
\end{align}
These embeddings will be helpful to understand the action of unitary-equivariant quantum channels from irrep to irrep in Section \ref{subs:operational_interpretation}.

\subsection{Representation theory of $\mathcal{A}_{m,n}^{d}$}

The other main symmetry in this work is given by the group $S_{m}$ of permutations on $m$ distinct elements. This symmetry can be extended to the algebra $\mathcal{A}_{m,n}^{d}$ of partially transposed permutations on $(\mathbb{C}^{d})^{\otimes (m+n)}$, which is important for mixed Schur-Weyl duality. We give a quick overview of the irreps together with a canonical basis. For more information on the representation theory of $\mathcal{A}_{m,n}^{d}$ and the Gel'fand-Tsetlin basis, see \cite{Grinko_2024}.
\newline

\noindent \emph{Permutations and partially transposed permutations. } \noindent For $\sigma \in S_{m}$ we denote by $\sigma \in \mathcal{B}((\mathbb{C}^{d})^{\otimes m})$ the operator permuting the $m$ systems as
\begin{align}
    \sigma = \sum_{i_{1},...,i_{m}} \ketbra{i_{\sigma^{-1}(1)},...,i_{\sigma^{-1}(m)}}{i_{1},...,i_{m}} \, .
\end{align}
Whether $\sigma$ is a group element or an operator will be clear from context. For $\sigma \in S_{m+n}$, the \emph{partially transposed permutation} $\sigma^{\Gamma} \in \mathcal{B}((\mathbb{C}^{d})^{\otimes (m+n)})$ is given by taking the partial transpose on the last $n$ systems, that is
\begin{align}
    \label{equ:partially_transposed_permutations_action}
    \sigma^{\Gamma} = \sum_{i_{1},...,i_{m+n}} \ketbra{i_{\sigma^{-1}(1)},...,i_{\sigma^{-1}(m)},i_{m+1},...,i_{m+n}}{i_{1},...,i_{m},i_{\sigma^{-1}(m+1)},...,i_{\sigma^{-1}(m+n)}} \, .
\end{align}
We define the algebra generated by these partially transposed permutations as $\mathcal{A}_{m,n}^{d}$. The permutations in $S_{m+n}$ are generated by the adjacent swaps $\pi_{i,i+1}$ for $1\leq i \leq m+n-1$. Translating this result to $\mathcal{A}_{m,n}^{d}$, we see that the algebra is generated by the swaps $\pi_{i,i+1}$ for $1\leq i < m$ and $m+1 < i \leq m+n-1$, together with $\pi_{m,m+1}^{\Gamma}$, which is the projection onto the maximally entangled state between sites $m$ and $m+1$. From this intuition we immediately see that the algebra $\mathcal{A}_{m,n}^{d}$ contains the algebras generated by $S_{m}$ acting on $(\mathbb{C}^{d})^{\otimes m}$ and $S_{n}$ acting on $(\mathbb{C}^{d})^{\otimes n}$ respectively as subalgebras. In particular, for $n=0$ we find that $\mathcal{A}_{m,0}^{d}$ is just the algebra generated by $S_{m}$.
\newline

\noindent \emph{Irreps of $\mathcal{A}_{m,n}^{d}$. } The irreps of $S_{m}$ are labeled by staircases $\gamma \vdash_{d} (m,n)$. We denote the corresponding vector space by $\mathcal{P}_{\gamma}^{m,n,d}$, and we fix the basis and action of $\mathcal{A}_{m,n}^{d}$ to be the same as in \cite{Grinko_2023}. Since the matrices involved are all real-valued, the dual action of $\mathcal{A}_{m,n}^{d}$ is the same as the normal action, and we find $\mathcal{P}_{\gamma}^{m,n,d} = \overline{\mathcal{P}_{\gamma}^{m,n,d}}$. For $n=0$, we interpret $\mathcal{P}_{\gamma}^{m,0,d}$ as an irrep of $S_{m}$, which doesn't depend on the dimension $d$, and we simply write $\mathcal{P}_{\gamma}^{m}$.
\newline

\noindent \emph{Gel'fand-Tsetlin basis for $\mathcal{P}_{\gamma}^{m,n,d}$. }
The basis described in \cite{Grinko_2023} has the advantage, that it is adapted to the inclusions  $\mathcal{A}_{m,n-1}^{d} \subseteq \mathcal{A}_{m,n}^{d}$. If we restrict to such a subalgebra, a given irrep $\mathcal{P}_{\gamma}^{m,n,d}$ becomes reducible again and we find
\begin{align}
    \mathcal{P}_{\gamma}^{m,n,d} \stackrel{\mathcal{A}_{m,n-1}^{d}}{\cong} \bigoplus_{\substack{\nu \vdash_{d} (m,n-1): \\ \gamma \in \nu +_{d} \overline{\ydiagram{1}}}} \mathcal{P}_{\nu}^{m,n-1,d} \, .
\end{align}
Similarly, if $n=0$, we have
\begin{align}
    \mathcal{P}_{\gamma}^{m} \stackrel{\mathcal{A}_{m-1,0}^{d}}{\cong} \bigoplus_{\substack{\nu \vdash_{d} m-1: \\ \gamma \in \nu +_{d} \ydiagram{1}}} \mathcal{P}_{\nu}^{m-1} \, .
\end{align}
This means the respective restrictions are \emph{multiplicity-free}. Looking at the following family of subalgebras
\begin{align}
    \mathcal{A}_{1,0}^{d} \subseteq \mathcal{A}_{2,0}^{d} \subseteq ... \subseteq \mathcal{A}_{m,0}^{d} \subseteq \mathcal{A}_{m,1}^{d} \subseteq ... \subseteq \mathcal{A}_{m,n}^{d} \, ,
\end{align}
we get
\begin{align}
    \mathcal{P}_{\gamma}^{m,n,d} \stackrel{\mathcal{A}_{m,n-1}^{d}}{\cong} ... \stackrel{\mathcal{A}_{1,0}^{d}}{\cong} \bigoplus_{\substack{\gamma^{m+n-1} \vdash_{d} (m,n-1): \\ \gamma \in \gamma^{m+n-1} +_{d} \overline{\ydiagram{1}}}} ... \bigoplus_{\substack{\gamma^{1} \vdash_{d} 1: \\ \gamma^{1} \in \gamma^{2} +_{d} \ydiagram{1}}} \mathbb{C} \, .
\end{align}
Here we used the fact that $\mathcal{A}_{1,0}^{d}$ is spanned by the identity and therefore only has trivial, one-dimensional irreps. This is a decomposition of $\mathcal{P}_{\gamma}^{m,n,d}$ into orthogonal, one-dimensional vector space. Each of these subspaces is spanned by a base vector of the \emph{Gel'fand-Tsetlin basis} for $\mathcal{P}_{\gamma}^{m,n,d}$. Each such vector is labeled by a sequence (also called a path) of staircases $(\gamma^{0},...,\gamma^{m+n})$ with the properties
\begin{align}
    &\gamma^{0} = \emptyset \, , \\
    &\gamma^{i} \in \gamma^{i-1} +_{d} \ydiagram{1} \quad \text{for} \quad 1 \leq i \leq m \, , \\
    &\gamma^{i} \in \gamma^{i-1} +_{d} \overline{\ydiagram{1}} \quad \text{for} \quad m+1 \leq i \leq m + n \, , \\
    &\gamma^{m+n} = \gamma \, .
\end{align}
In the case $n=0$, all $\gamma^{i} \vdash_{d} i$ are partitions, and going to $\gamma^{i+1}$ corresponds to adding a box to $\gamma^{i}$.

\begin{example}
    Let $\gamma \vdash_{4} (4,2)$ be given by $\gamma = (2,1,0,-1)$. Then an example for a Gel'fand-Tsetlin base vector of $\mathcal{P}_{\gamma}^{4,2,4}$ is given by $\ket{p_{\gamma}^{4,2,4}}$, where the label $p_{\gamma}^{4,2,4}$ is given by the sequence
    \begin{align}
        p_{\gamma}^{4,2,4} \sim \left(
        \ytableausetup{boxsize=1em}
        \emptyset \, , \, 
        \ydiagram{1,0,0,0} \, , \, \ydiagram{1,1,0,0} \, , \, \ydiagram{2,1,0,0} \, , \, \ydiagram{2,1,1,0} \, , \, \ydiagram{1+2,1+1,1+1,1} \, , \, \ydiagram{1+2,1+1,1+0,1}
        \ytableausetup{boxsize=0.5em}
        \right) \, .
    \end{align}
\end{example}

\subsection{Mixed Schur-Weyl duality}

Mixed Schur–Weyl tells us that we can simultaneously block-diagonalize unitary and partially transposed permutation actions. It can be viewed as a generalization of Schur-Weyl duality, and it replaces the permutation group with the algebra of partially transposed permutations. For more information about Schur-Weyl duality in quantum information, see \cite{Harrow_Th2005}. For more information about mixed Schur-Weyl duality, see \cite{Grinko_2023,Grinko_2024,Nguyen_2023}.
\newline

\noindent \emph{Mixed Schur-Weyl duality. } We take the space $(\mathbb{C}^{d})^{\otimes m} \otimes (\overline{\mathbb{C}^{d}})^{\otimes n}$ with the actions of $SU(d)$ and $\mathcal{A}_{m,n}^{d}$ given by
\begin{align}
    U \mapsto U^{\otimes m} \otimes \overline{U}^{\otimes n} \, \quad \, \sigma^{\Gamma} \mapsto \sigma^{\Gamma} \, .
\end{align}
It is easy to see that $U^{\otimes m} \otimes \overline{U}^{\otimes n}$ commutes with any permutation acting only on $(\mathbb{C}^{d})^{\otimes m}$ or $(\overline{\mathbb{C}^{d}})^{\otimes n}$. In addition, it is easy to verify
\begin{align}
    U \otimes \overline{U} \ketbra{\Psi}{ \Psi} = \ketbra{\Psi}{ \Psi} = \ketbra{\Psi}{ \Psi} U \otimes \overline{U} \, ,
\end{align}
where $\ket{\Psi}$ is the maximally entangled state. As discussed above, the algebra $\mathcal{A}_{m,n}^{d}$ is generated precisely by these permutations and the projection onto the maximally entangled state between sites $m$ and $m+1$. Therefore the actions of $SU(d)$ and $\mathcal{A}_{m,n}^{d}$ commute. In fact, \emph{mixed Schur-Weyl duality} tells us that these actions are maximally commuting, which is equivalent to the following vector space decomposition
\begin{align}
    (\mathbb{C}^{d})^{\otimes m} \otimes (\overline{\mathbb{C}^{d}})^{\otimes n} \stackrel{SU(d) , \mathcal{A}_{m,n}^{d}}{\cong} \bigoplus_{\gamma \vdash_{d} (m,n)} \mathcal{P}_{\gamma}^{m,n,d} \otimes \mathcal{Q}_{\gamma}^{d} \, .
\end{align}
We call the isometry $U_{\rm Sch}^{m,n,d}$ underlying the above equation the \emph{mixed Schur transform}. In general, the mixed Schur transform is uniquely defined up to a basis change on the irreps. This definition is again unique up to a phase on each subspace, and we follow the convention given in \cite{Grinko_2023,Nguyen_2023}. For the case $n=0$, we obtain the better known \emph{Schur-Weyl duality}, given by
\begin{align}
    (\mathbb{C}^{d})^{\otimes m} \stackrel{SU(d) , S_{m}}{\cong} \bigoplus_{\lambda \vdash_{d} m} \mathcal{P}_{\lambda}^{m} \otimes \mathcal{Q}_{\lambda}^{d} \, .
\end{align}
We call the respective isometry $U_{\rm Sch}^{m,0,d}$ the \emph{Schur transform} and we denote it by $U_{\rm Sch}^{m,d}$.

\section{Classification and operational interpretation}
\label{sec:classification_and_operational_interpretation}
\subsection{Classification of all unitary-equivariant and permutation-invariant quantum channels}

\begin{proposition}
    \label{prop:choi_matrix_equivariant}
    Let $\Phi\in \mathcal{C}^{d}(m,n)$, and let $C_{\Phi}$ bet the corresponding Choi matrix. Then $\Phi$ being unitary-equivariant is equivalent to
    \begin{align}
        \label{equ:Choi_matrix_commutes_unitary}
        (\overline{U}^{\otimes m} \otimes U^{\otimes n}) C_{\Phi} = C_{\Phi} (\overline{U}^{\otimes m} \otimes U^{\otimes n})
    \end{align}
    holding for all $U\in SU(d)$. Similarly, $\Phi$ being permutation-invariant is equivalent to
    \begin{align}
        \label{equ:Choi_matrix_commutes_permutation}
        (\overline{\sigma}\otimes \tau) C_{\Phi} = C_{\Phi} (\overline{\sigma}\otimes \tau) \, .
    \end{align}
    for all $\sigma\in S_{m}$ and $\tau\in S_{n}$.
\end{proposition}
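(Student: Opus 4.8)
The plan is to translate each symmetry condition on the channel $\Phi$ into the equivalent statement on $C_\Phi$ by a direct computation, using the defining formula $C_\Phi = \sum_{i,j}\ketbra{i}{j}\otimes\Phi(\ketbra{i}{j})$ together with the linearity of $\Phi$ and the standard "ricochet" identity for the maximally entangled vector. First I would recall that $C_\Phi$ is (up to normalization) the Choi state $(\mathrm{id}\otimes\Phi)(\ketbra{\Omega}{\Omega})$ with $\ket{\Omega}=\sum_i\ket{i}\ket{i}$, and that for any operator $A$ on the first factor one has $(A\otimes\id)\ket{\Omega} = (\id\otimes A^{T})\ket{\Omega}$, where the transpose is taken in the computational basis. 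This is the only nontrivial ingredient, and I would state it as a one-line lemma (or cite it as folklore) before using it.

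For the unitary case: starting from $(\id\otimes\Phi)(\ketbra{\Omega}{\Omega})$, I would insert $V\otimes\id$ on the left and $V^{*}\otimes\id$ on the right with $V=\overline{U}^{\otimes m}$. By the ricochet identity $V\otimes\id$ acting on $\ket{\Omega}$ equals $\id\otimes V^{T} = \id\otimes (\overline{U}^{\otimes m})^{T} = \id\otimes (U^{*})^{\otimes m}$ on the second factor (since transpose of $\overline{U}$ is $U^{*}$), so conjugating $C_\Phi$ by $\overline{U}^{\otimes m}\otimes\id$ is the same as conjugating by $\id\otimes (U^{*})^{\otimes m}$ inside the argument of $\Phi$; that is, it produces $\sum_{i,j}\ketbra{i}{j}\otimes\Phi\big((U^{*})^{\otimes m}\ketbra{i}{j}U^{\otimes m}\big)$. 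Now additionally conjugating $C_\Phi$ by $\id\otimes U^{\otimes n}$ on the output factor and using equivariance $\Phi(U^{\otimes m}A(U^{*})^{\otimes m}) = U^{\otimes n}\Phi(A)(U^{*})^{\otimes n}$ collapses this back to $C_\Phi$; running the chain of equalities in reverse gives the converse, since the map $U\mapsto\overline{U}^{\otimes m}\otimes U^{\otimes n}$ and the Choi correspondence are both bijective enough to undo each step. Concretely, $(\overline{U}^{\otimes m}\otimes U^{\otimes n})\,C_\Phi\,(\overline{U}^{\otimes m}\otimes U^{\otimes n})^{*} = C_{\Phi'}$ where $\Phi'(A) = (U^{*})^{\otimes n}\Phi(U^{\otimes m}A(U^{*})^{\otimes m})U^{\otimes n}$, and equivariance is exactly the statement $\Phi'=\Phi$, which by injectivity of the Choi map is exactly $C_{\Phi'}=C_\Phi$, i.e. Equation \eqref{equ:Choi_matrix_commutes_unitary} (noting $\overline{U}^{\otimes m}\otimes U^{\otimes n}$ is unitary so conjugation by it equals the commutation statement).

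For the permutation case the argument is the same but simpler, since no unitary rotation of the output is needed on one side: using the ricochet identity with $A=\overline{\sigma}$ on the input factor turns conjugation of $C_\Phi$ by $\overline{\sigma}\otimes\id$ into $\sum_{i,j}\ketbra{i}{j}\otimes\Phi(\sigma^{*}\ketbra{i}{j}\sigma)$ (using $\overline{\sigma}^{T}=\sigma^{*}=\sigma^{-1}$, as permutation operators are real), and conjugation by $\id\otimes\tau$ on the output factor gives $\sum_{i,j}\ketbra{i}{j}\otimes\tau\Phi(\sigma^{*}\ketbra{i}{j}\sigma)\tau^{*}$; invariance $\Phi(\sigma A\sigma^{*})=\Phi(A)=\tau\Phi(A)\tau^{*}$ then returns $C_\Phi$, and again reversing the steps gives the converse. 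I would handle the two clauses $\Phi(\sigma A\sigma^{*})=\Phi(A)$ and $\Phi(A)=\tau\Phi(A)\tau^{*}$ in one sweep by noting they are jointly equivalent to invariance of $C_\Phi$ under the product group action, but one can also split into the $\sigma$-only and $\tau$-only sub-statements if more transparency is wanted.

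The main obstacle — really the only point requiring care — is bookkeeping the transposes and complex conjugates: the factor $\overline{U}^{\otimes m}$ (rather than $U^{\otimes m}$ or $(U^{*})^{\otimes m}$) on the input leg of the Choi matrix is forced precisely by the ricochet identity and the basis-dependence of transpose/conjugate, and one must be consistent about which basis defines $\overline{(\cdot)}$ and $(\cdot)^{T}$ throughout (the computational basis, matching the definition of $C_\Phi$ and of $\ket{\Omega}$). Once that convention is pinned down, every step is a routine identity; no representation theory of $SU(d)$ or $\mathcal{A}_{m,n}^{d}$ is needed here, and in particular the proposition does not use Schur–Weyl duality at all — it is a purely formal restatement that sets up the later block-diagonalization.
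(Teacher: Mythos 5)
Your proposal is correct and is exactly the computation the paper compresses into its one-line proof (``checkable via Choi formalism''): translate the symmetry into a conjugation of $C_{\Phi}$ via the ricochet identity and use injectivity of the Choi map, with the transposes/conjugates tracked in the computational basis. The only quibble is a harmless bookkeeping swap in your ``concretely'' line, where the conjugated channel should read $\Phi'(A) = U^{\otimes n}\Phi\big((U^{*})^{\otimes m} A\, U^{\otimes m}\big)(U^{*})^{\otimes n}$; since the condition is quantified over all $U\in SU(d)$ and $U\mapsto U^{*}$ is a bijection of the group, the equivalence is unaffected.
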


\begin{remark}
    As remarked in Section \ref{sec:mathematical_preliminaries_notation}, we have $\overline{\sigma} = \sigma$. We still keep the dual notation to make clear that the action comes from the space $(\overline{\mathbb{C}^{d}})^{\otimes m}$.
\end{remark}

\begin{proof}
    Checkable via Choi formalism.
\end{proof}

Using Proposition \ref{prop:choi_matrix_equivariant} and Schur's lemma, we derive a description of the Choi matrix $C_{\Phi}$ of a unitary-equivariant and permutation-invariant quantum channel $\Phi \in \mathcal{C}_{up}^{d}(m,n)$. Equations \eqref{equ:Choi_matrix_commutes_unitary} and \eqref{equ:Choi_matrix_commutes_permutation} then tell us that $C_{\Phi}$ commutes with certain group actions of $SU(d)$, $S_{m}$ and $ S_{n}$. By decomposing $(\overline{\mathbb{C}^d})^{\otimes m}\otimes(\mathbb{C}^d)^{\otimes n}$ under the joint $\mathrm{SU}(d)\times S_m\times S_n$ action, Schur’s lemma forces the Choi matrix to become block-diagonal, with freedom only in multiplicity spaces.

We remember that $C_{\Phi} \in \mathcal{B}((\overline{\mathbb{C}^{d}})^{\otimes m} \otimes (\mathbb{C}^{d})^{\otimes n})$. By applying Schur-Weyl duality\footnote{We almost apply mixed Schur-Weyl duality, but the algebra generated by the actions of $S_{m}$ and $ S_{n}$ is not the full walled Brauer algebra, so we expect some multiplicity in the resulting irreps.} separately to $(\overline{\mathbb{C}^{d}})^{\otimes m}$ and $(\mathbb{C}^{d})^{\otimes n}$, we obtain
\begin{align}
    (\overline{\mathbb{C}^{d}})^{\otimes m}\otimes (\mathbb{C}^{d})^{\otimes n} \stackrel{SU(d) , S_{m} , S_{n}}{\cong} \bigoplus_{\lambda\vdash_{d} m}\bigoplus_{\mu\vdash_{d} n}\left(\overline{\mathcal{P}_{\lambda}^{m}}\otimes \overline{\mathcal{Q}_{\lambda}^{d}}\right)\otimes\left(\mathcal{P}_{\mu}^{n}\otimes \mathcal{Q}_{\mu}^{d}\right) \, .
\end{align}
We rearrange the individual terms as $\overline{\mathcal{P}_{\lambda}^{m}}\otimes \mathcal{P}_{\mu}^{n} \otimes  \overline{\mathcal{Q}_{\lambda}^{d}} \otimes \mathcal{Q}_{\mu}^{d}$ and use the Clebsch--Gordan transform $U_{\rm CG}^{\overline{\lambda},\mu,d}$ to get
\begin{align}
    \label{equ:irreps_permutation_invariant_equivariant_quantum channels_Choi_matrix}
    (\overline{\mathbb{C}^{d}})^{\otimes m}\otimes (\mathbb{C}^{d})^{\otimes n} \stackrel{SU(d) , S_{m} , S_{n}}{\cong} \bigoplus_{\lambda\vdash_{d} m}\bigoplus_{\mu\vdash_{d} n}\overline{\mathcal{P}_{\lambda}^{m}}\otimes \mathcal{P}_{\mu}^{n}\otimes \left(\bigoplus_{\gamma\in \mu+_{d}\overline{\lambda}}\mathcal{Q}_{\gamma}^{d}\otimes \mathbb{C}^{c^{\gamma}_{\overline{\lambda},\mu}}\right) \, .
\end{align}
The underlying isometry operation here is
\begin{align}
    \label{equ:isometry_Choi_matrix}
    I=\left(\bigoplus_{\lambda\vdash_{d} m}\bigoplus_{\mu\vdash_{d} n}\id_{\overline{\mathcal{P}_{\lambda}^{m}}}\otimes \id_{\mathcal{P}_{\mu}^{n}}\otimes U_{\rm CG}^{\overline{\lambda},\mu,d} \right)(\overline{U_{\rm Sch}^{m,d}}\otimes U_{\rm Sch}^{n,d}) \, .
\end{align}
We are now ready to give a description of $\mathcal{C}_{up}^{d}(m,n)$.

\begin{theorem}
    \label{thm:classification_Cusd}
    Let $\Phi\in \mathcal{C}^{d}(m,n)$, and let $I$ be the isometry defined in Equation \eqref{equ:isometry_Choi_matrix}. Then $\Phi\in \mathcal{C}_{up}^{d}(m,n)$ iff
    \begin{align}
        \label{equ:us_quantum channel_form}
        I C_{\Phi}I^{*}= \bigoplus_{\lambda\vdash_{d} m}\bigoplus_{\mu\vdash_{d} n}\id_{\overline{\mathcal{P}_{\lambda}^{m}}}\otimes \frac{1}{\dim \mathcal{P}_{\mu}^{n}}\id_{\mathcal{P}_{\mu}^{n}} \otimes \left(\bigoplus_{\gamma\in \mu+_{d}\overline{\lambda}}\frac{\dim \mathcal{Q}_{\lambda}^{d}}{\dim \mathcal{Q}_{\gamma}^{d}}\id_{\mathcal{Q}_{\gamma}^{d}}\otimes M_{\gamma,\overline{\lambda},\mu}\right) \, ,
    \end{align}
    with $M_{\gamma,\overline{\lambda},\mu}\in\mathcal{B}(\mathbb{C}^{c^{\gamma}_{\overline{\lambda},\mu}})$ and
    \begin{enumerate}
        \item $\forall \, \gamma,\lambda,\mu: \quad M_{\gamma,\overline{\lambda},\mu}\geq 0$, and
        \item $\forall \, \lambda: \quad\sum_{\mu\vdash_{d} n}\sum_{\gamma\in \mu+_{d}\overline{\lambda}}\tr[M_{\gamma,\overline{\lambda},\mu}]=1$.
    \end{enumerate}
\end{theorem}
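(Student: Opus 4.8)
The plan is to combine Proposition \ref{prop:choi_matrix_equivariant}, Schur's lemma applied to the decomposition in Equation \eqref{equ:irreps_permutation_invariant_equivariant_quantum channels_Choi_matrix}, and a direct evaluation of the partial trace of $C_{\Phi}$. Recall first that $\Phi\in\mathcal{C}^{d}(m,n)$ is a quantum channel precisely when $C_{\Phi}\geq 0$ and $\tr_{(\mathbb{C}^{d})^{\otimes n}}(C_{\Phi})=\id_{(\overline{\mathbb{C}^{d}})^{\otimes m}}$ -- the standard Choi characterization of completely positive trace-preserving maps, with the input slot of $C_{\Phi}$ read as the conjugate space. By Proposition \ref{prop:choi_matrix_equivariant}, $\Phi$ moreover lies in $\mathcal{C}_{up}^{d}(m,n)$ iff $C_{\Phi}$ commutes with $\overline{U}^{\otimes m}\otimes U^{\otimes n}$ for all $U\in SU(d)$ and with $\overline{\sigma}\otimes\tau$ for all $\sigma\in S_{m}$, $\tau\in S_{n}$; equivalently, $C_{\Phi}$ lies in the commutant of the joint $SU(d)\times S_{m}\times S_{n}$ action underlying Equation \eqref{equ:irreps_permutation_invariant_equivariant_quantum channels_Choi_matrix}. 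Note also that the isometry $I$ of Equation \eqref{equ:isometry_Choi_matrix} is in fact a unitary, being a composition of the (unitary) Schur transforms $\overline{U_{\rm Sch}^{m,d}}\otimes U_{\rm Sch}^{n,d}$ with the (unitary) Clebsch--Gordan transforms $U_{\rm CG}^{\overline{\lambda},\mu,d}$, so conjugation by $I$ is a trace- and positivity-preserving bijection on operators.

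Next I apply Schur's lemma in the $I$-basis. Equation \eqref{equ:irreps_permutation_invariant_equivariant_quantum channels_Choi_matrix} says that $I$ carries the joint action to a block-diagonal one in which the block indexed by $(\lambda,\mu,\gamma)$, with $\lambda\vdash_{d}m$, $\mu\vdash_{d}n$ and $\gamma\in\mu+_{d}\overline{\lambda}$, is, as a representation of $SU(d)\times S_{m}\times S_{n}$, the tensor product $\overline{\mathcal{P}_{\lambda}^{m}}\otimes\mathcal{P}_{\mu}^{n}\otimes\mathcal{Q}_{\gamma}^{d}$ of the corresponding irreducible representations of $S_{m}$, $S_{n}$ and $SU(d)$, occurring with multiplicity space $\mathbb{C}^{c^{\gamma}_{\overline{\lambda},\mu}}$; distinct triples give non-isomorphic summands, since $\lambda$, $\mu$ and $\gamma$ are recovered separately from the $S_{m}$-, $S_{n}$- and $SU(d)$-isomorphism types. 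By Schur's lemma, an operator commutes with this action iff it has the form $\bigoplus_{\lambda\vdash_{d}m}\bigoplus_{\mu\vdash_{d}n}\id_{\overline{\mathcal{P}_{\lambda}^{m}}}\otimes\id_{\mathcal{P}_{\mu}^{n}}\otimes\bigoplus_{\gamma\in\mu+_{d}\overline{\lambda}}\id_{\mathcal{Q}_{\gamma}^{d}}\otimes\widetilde{M}_{\gamma,\overline{\lambda},\mu}$ for operators $\widetilde{M}_{\gamma,\overline{\lambda},\mu}\in\mathcal{B}(\mathbb{C}^{c^{\gamma}_{\overline{\lambda},\mu}})$, and writing $\widetilde{M}_{\gamma,\overline{\lambda},\mu}=\frac{1}{\dim\mathcal{P}_{\mu}^{n}}\frac{\dim\mathcal{Q}_{\lambda}^{d}}{\dim\mathcal{Q}_{\gamma}^{d}}M_{\gamma,\overline{\lambda},\mu}$ puts $IC_{\Phi}I^{*}$ into exactly the form of Equation \eqref{equ:us_quantum channel_form}. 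Since these prefactors are strictly positive and $I$ is unitary, $C_{\Phi}\geq 0$ iff $IC_{\Phi}I^{*}\geq 0$ iff each block is positive semidefinite iff each $M_{\gamma,\overline{\lambda},\mu}\geq 0$, which is condition 1.

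It remains to translate the trace-preservation condition, and this is where the prefactors get pinned down. Write $R:=\tr_{(\mathbb{C}^{d})^{\otimes n}}(C_{\Phi})\in\mathcal{B}((\overline{\mathbb{C}^{d}})^{\otimes m})$. Taking the partial trace over $(\mathbb{C}^{d})^{\otimes n}$ in the identities $C_{\Phi}=(\overline{\sigma}\otimes\id)C_{\Phi}(\overline{\sigma}\otimes\id)^{*}$ and $C_{\Phi}=(\overline{U}^{\otimes m}\otimes U^{\otimes n})C_{\Phi}(\overline{U}^{\otimes m}\otimes U^{\otimes n})^{*}$, and using that this partial trace is unchanged by conjugating the $(\mathbb{C}^{d})^{\otimes n}$ factor by a unitary, one gets $\overline{\sigma}R=R\,\overline{\sigma}$ and $\overline{U}^{\otimes m}R=R\,\overline{U}^{\otimes m}$ for all $\sigma$ and $U$. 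Since Schur-Weyl duality makes $(\overline{\mathbb{C}^{d}})^{\otimes m}\cong\bigoplus_{\lambda\vdash_{d}m}\overline{\mathcal{P}_{\lambda}^{m}}\otimes\overline{\mathcal{Q}_{\lambda}^{d}}$ multiplicity-free, Schur's lemma forces $R=\bigoplus_{\lambda\vdash_{d}m}a_{\lambda}\id_{\overline{\mathcal{P}_{\lambda}^{m}}\otimes\overline{\mathcal{Q}_{\lambda}^{d}}}$ for scalars $a_{\lambda}$, so $R=\id$ iff $a_{\lambda}=1$ for every $\lambda$. Letting $\Pi_{\lambda}$ be the orthogonal projector onto the $\lambda$-summand of $(\overline{\mathbb{C}^{d}})^{\otimes m}$, we have $a_{\lambda}=\tr[R\Pi_{\lambda}]/\tr[\Pi_{\lambda}]$ with $\tr[\Pi_{\lambda}]=\dim\mathcal{P}_{\lambda}^{m}\dim\mathcal{Q}_{\lambda}^{d}$, and $\tr[R\Pi_{\lambda}]=\tr[C_{\Phi}(\Pi_{\lambda}\otimes\id_{(\mathbb{C}^{d})^{\otimes n}})]=\tr[IC_{\Phi}I^{*}\cdot I(\Pi_{\lambda}\otimes\id_{(\mathbb{C}^{d})^{\otimes n}})I^{*}]$, where $I(\Pi_{\lambda}\otimes\id_{(\mathbb{C}^{d})^{\otimes n}})I^{*}$ is the projector onto the blocks whose first label equals $\lambda$. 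Inserting the block form of $IC_{\Phi}I^{*}$ together with $\tr[\id_{\overline{\mathcal{P}_{\lambda}^{m}}}\otimes\id_{\mathcal{P}_{\mu}^{n}}\otimes\id_{\mathcal{Q}_{\gamma}^{d}}\otimes\widetilde{M}_{\gamma,\overline{\lambda},\mu}]=\dim\mathcal{P}_{\lambda}^{m}\dim\mathcal{P}_{\mu}^{n}\dim\mathcal{Q}_{\gamma}^{d}\,\tr[\widetilde{M}_{\gamma,\overline{\lambda},\mu}]$, the prefactors cancel the $\dim\mathcal{P}_{\mu}^{n}$ and $\dim\mathcal{Q}_{\gamma}^{d}$ factors and leave $\tr[R\Pi_{\lambda}]=\dim\mathcal{P}_{\lambda}^{m}\dim\mathcal{Q}_{\lambda}^{d}\sum_{\mu\vdash_{d}n}\sum_{\gamma\in\mu+_{d}\overline{\lambda}}\tr[M_{\gamma,\overline{\lambda},\mu}]$, hence $a_{\lambda}=\sum_{\mu\vdash_{d}n}\sum_{\gamma\in\mu+_{d}\overline{\lambda}}\tr[M_{\gamma,\overline{\lambda},\mu}]$. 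Thus $R=\id$ is equivalent to condition 2, and chaining the equivalences above proves the theorem.

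I expect the main obstacle to be this last step: checking that the partial trace and the isometry $I$ interact as claimed (which happens only after pairing with $\Pi_{\lambda}$ and taking a full trace, not term-by-term), and verifying that the prefactors $\frac{1}{\dim\mathcal{P}_{\mu}^{n}}$ and $\frac{\dim\mathcal{Q}_{\lambda}^{d}}{\dim\mathcal{Q}_{\gamma}^{d}}$ are exactly the ones that turn the trace-preservation constraint into the clean normalization of condition 2. A secondary point requiring care is that $I$ genuinely intertwines the two sides of Equation \eqref{equ:irreps_permutation_invariant_equivariant_quantum channels_Choi_matrix} for all three symmetry groups simultaneously, so that Schur's lemma applies as stated; here one uses that $U_{\rm CG}^{\overline{\lambda},\mu,d}$ acts only on the $SU(d)$-irrep tensor factors and leaves the $S_{m}$- and $S_{n}$-irrep factors untouched.
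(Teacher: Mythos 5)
Your proposal is correct and follows essentially the same route as the paper: Proposition \ref{prop:choi_matrix_equivariant} plus the decomposition \eqref{equ:irreps_permutation_invariant_equivariant_quantum channels_Choi_matrix} and Schur's lemma give the block form, positivity gives condition 1, and trace preservation gives condition 2. The only (harmless) variation is in the last step: the paper pushes the partial trace through $U_{\rm CG}^{\overline{\lambda},\mu,d}$ blockwise and identifies the resulting operator via Schur's lemma on $\overline{\mathcal{Q}_{\lambda}^{d}}$, whereas you first note that $R=\tr_{(\mathbb{C}^{d})^{\otimes n}}[C_{\Phi}]$ inherits the $S_{m}$ and $SU(d)$ symmetries, hence is block-scalar by Schur--Weyl duality, and then fix the scalars $a_{\lambda}$ by tracing against the isotypic projectors — both computations yield the same normalization $\sum_{\mu}\sum_{\gamma}\tr[M_{\gamma,\overline{\lambda},\mu}]=1$.
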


\begin{proof}
    First we take $\Phi\in \mathcal{C}_{up}^{d}(m,n)$. We see that $C_{\Phi}\in\mathcal{B}\left((\overline{\mathbb{C}^{d}})^{\otimes m}\otimes (\mathbb{C}^{d})^{\otimes n}\right)$, and by Proposition \ref{prop:choi_matrix_equivariant} it commutes with the actions of $SU(d)$, $S_{m}$ and $S_{n}$. This means we use Equation~\eqref{equ:irreps_permutation_invariant_equivariant_quantum channels_Choi_matrix} together with Schur's lemma to find that
    \begin{align}
        I C_{\Phi}I^{*}= \bigoplus_{\lambda\vdash_{d} m}\bigoplus_{\mu\vdash_{d} n}\id_{\overline{\mathcal{P}_{\lambda}^{m}}}\otimes \id_{\mathcal{P}_{\mu}^{n}} \otimes \left(\bigoplus_{\gamma\in \mu+_{d}\overline{\lambda}}\id_{\mathcal{Q}_{\gamma}^{d}}\otimes A_{\gamma,\overline{\lambda},\mu}\right) \,
    \end{align}
    for some matrices $A_{\gamma,\overline{\lambda},\mu}$. The fact that $C_{\phi}\geq 0$ then tells us that all $A_{\gamma,\overline{\lambda},\mu}\geq 0$. Finally, we want to use
    \begin{align}
        \label{equ:classification_Cusd_trace_preserving}
        \tr_{(\mathbb{C}^{d})^{\otimes n}}[C_{\Phi}]=\id_{(\overline{\mathbb{C}^{d}})^{\otimes m}} \, .
    \end{align}
    Since both $U_{\rm CG}^{\overline{\lambda},\mu,d}$ and $\id_{\mathcal{Q}_{\gamma}^{d}}\otimes A_{\gamma,\overline{\lambda},\mu}$ commute with the action of $SU(d)$, we find that
    \begin{align}
        \tr_{\mathcal{Q}_{\mu}^{d}}[(U_{\rm CG}^{\overline{\lambda},\mu,d})^{*}(\id_{\mathcal{Q}_{\gamma}^{d}}\otimes A_{\gamma,\overline{\lambda},\mu})U_{\rm CG}^{\overline{\lambda},\mu,d}]
    \end{align}
    also commutes with the action of $SU(d)$. By Schur's lemma, this means that
    \begin{align}
        \tr_{\mathcal{Q}_{\mu}^{d}}[(U_{\rm CG}^{\overline{\lambda},\mu,d})^{*}(\id_{\mathcal{Q}_{\gamma}^{d}}\otimes A_{\gamma,\overline{\lambda},\mu})U_{\rm CG}^{\overline{\lambda},\mu,d}]=\tr[A_{\gamma,\overline{\lambda},\mu}]\frac{\dim \mathcal{Q}_{\gamma}^{d}}{\dim \overline{\mathcal{Q}_{\lambda}^{d}}}\id_{\overline{\mathcal{Q}_{\lambda}^{d}}} \, .
    \end{align}
    If we now take the partial trace over the output space, this corresponds to the partial trace over all $\mathcal{P}_{\mu}^{n}\otimes \mathcal{Q}_{\mu}^{d}$. This gives
    \begin{align}
        \label{equ:classification_Cusd_sum_condition}
        \overline{U_{\rm Sch}^{m,d}} \tr_{(\mathbb{C}^{d})^{\otimes n}}[C_{\Phi}] (\overline{U_{\rm Sch}^{m,d}})^{*} = \bigoplus_{\lambda\vdash_{d}m}\id_{\overline{\mathcal{P}_{\lambda}^{m}}}\otimes \id_{\overline{\mathcal{Q}_{\lambda}^{d}}}\left(\sum_{\mu\vdash_{d}n}\sum_{\gamma\in\mu+_{d}\overline{\lambda}}\tr[A_{\gamma,\overline{\lambda},\mu}]\dim \mathcal{P}_{\mu}^{n}\frac{\dim \mathcal{Q}_{\gamma}^{d}}{\dim \overline{\mathcal{Q}_{\lambda}^{d}}}\right) \, .
    \end{align}
    Equation~\eqref{equ:classification_Cusd_trace_preserving} now requires that the sums on the right side equal $1$ for each $\lambda\vdash_{d}m$. Rescaling $A_{\gamma,\overline{\lambda},\mu}$ to $M_{\gamma,\overline{\lambda},\mu}$ together with $\dim \overline{\mathcal{Q}_{\lambda}^{d}} = \dim \mathcal{Q}_{\lambda}^{d}$ gives the desired results. 

    Let now $\Phi$ be given by a Choi matrix as in Equation~\eqref{equ:us_quantum channel_form}. By reversing each argument from before, we see that $C_{\Phi}$ commutes with the action of $SU(d)$, $S_{m}$ and $ S_{n}$. We further see that $C_{\Phi}\geq 0$ and that Equation~\eqref{equ:classification_Cusd_trace_preserving} holds. Therefore we have $\Phi\in \mathcal{C}_{up}^{d}(m,n)$.
\end{proof}

Since all quantum channels in a given set can be written as convex combinations of extremal quantum channels, it is sufficient to study only these extremal quantum channels. On the implementation side, every quantum channel can be implemented as a probabilistic mixture of extremal quantum channels, so we only need to be able to implement the extremal quantum channels efficiently. This leads us to the following corollary.

\begin{corollary}
    \label{cor:classification_extremal_points_Cusd}
    Let $I$ be the isometry defined in Equation \eqref{equ:isometry_Choi_matrix}. The extremal points of $\mathcal{C}_{up}^{d}(m,n)$ are all quantum channels $\Phi$ with a Choi-matrix $C_{\Phi}$ of the form
    \begin{align}
        I C_{\Phi} I^{*} = \bigoplus_{\lambda\vdash_{d} m}\id_{\overline{\mathcal{P}_{\lambda}^{m}}}\otimes \frac{1}{\dim \mathcal{P}_{\mu_{\lambda}}^{n}}\id_{\mathcal{P}_{\mu_{\lambda}}^{n}} \otimes \frac{\dim \mathcal{Q}_{\lambda}^{d}}{\dim \mathcal{Q}_{\gamma_{\lambda}}^{d}}\id_{\mathcal{Q}_{\gamma_{\lambda}}^{d}}\otimes \ketbra{\psi_{\lambda}}{\psi_{\lambda}} \, .
    \end{align}
    Here, $\mu_{\lambda}\vdash_{d}n$ is arbitrary and $\gamma_{\lambda}\in \mu_{\lambda}+_{d}\overline{\lambda}$ for each $\lambda\vdash_{d}m$. Further, we have $\ket{\psi_{\lambda}} \in \mathbb{C}^{c^{\gamma_{\lambda}}_{\overline{\lambda},\mu_{\lambda}}}$.
\end{corollary}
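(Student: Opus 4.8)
The plan is to read the extremal points directly off the parametrization already established in Theorem~\ref{thm:classification_Cusd}, so that no new representation theory is needed. First I would note that $I$ is an isometry and $\Phi\mapsto C_{\Phi}$ is linear and injective (Choi--Jamio\l{}kowski), so that assigning to each $\Phi\in\mathcal{C}_{up}^{d}(m,n)$ the collection of multiplicity-space operators $(M_{\gamma,\overline{\lambda},\mu})_{\lambda,\mu,\gamma}$ is an affine isomorphism onto the convex set
\begin{align*}
    \mathcal{M} = \Big\{\, (M_{\gamma,\overline{\lambda},\mu}) \ :\ M_{\gamma,\overline{\lambda},\mu}\geq 0 \ \ \forall\, \gamma,\lambda,\mu, \qquad \sum_{\mu\vdash_{d} n}\ \sum_{\gamma\in\mu+_{d}\overline{\lambda}}\tr[M_{\gamma,\overline{\lambda},\mu}]=1 \ \ \forall\, \lambda\vdash_{d} m \,\Big\}.
\end{align*}
Since affine isomorphisms carry extremal points to extremal points, it is enough to describe the extremal points of $\mathcal{M}$.

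Next I would exploit the product structure: the two constraints defining $\mathcal{M}$ never mix operators with different values of $\lambda$, so $\mathcal{M}=\prod_{\lambda\vdash_{d} m}\mathcal{M}_{\lambda}$, where $\mathcal{M}_{\lambda}$ is the set of tuples $(M_{\gamma,\overline{\lambda},\mu})_{\mu\vdash_{d}n,\ \gamma\in\mu+_{d}\overline{\lambda}}$ of positive semidefinite operators with $\sum_{\mu,\gamma}\tr[M_{\gamma,\overline{\lambda},\mu}]=1$. Each $\mathcal{M}_{\lambda}$ is nonempty, and for a finite product of nonempty convex sets the extremal points are exactly the tuples that are extremal in every factor (a proper splitting in one coordinate lifts to one of the product, and conversely); I would record this as a one-line observation. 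It then remains to identify the extremal points of a single $\mathcal{M}_{\lambda}$. Here I would observe that $\mathcal{M}_{\lambda}$ is affinely isomorphic to the set of density operators on $\bigoplus_{\mu}\bigoplus_{\gamma\in\mu+_{d}\overline{\lambda}}\mathbb{C}^{c^{\gamma}_{\overline{\lambda},\mu}}$ that are block-diagonal with respect to the $(\mu,\gamma)$-grading, whose extremal points are precisely the rank-one projectors supported inside a single block: if some $M_{\gamma,\overline{\lambda},\mu}$ has rank at least two, split off one eigenvector; if two distinct blocks are nonzero, convex-combine the corresponding two-block decomposition; in either case the tuple is a proper convex combination, whereas a single normalized rank-one block is not.

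Finally I would translate back: an extremal $\Phi$ corresponds to a choice, for each $\lambda\vdash_{d} m$, of a pair $(\mu_{\lambda},\gamma_{\lambda})$ with $\mu_{\lambda}\vdash_{d}n$ and $\gamma_{\lambda}\in\mu_{\lambda}+_{d}\overline{\lambda}$, together with a unit vector $\ket{\psi_{\lambda}}\in\mathbb{C}^{c^{\gamma_{\lambda}}_{\overline{\lambda},\mu_{\lambda}}}$, with $M_{\gamma_{\lambda},\overline{\lambda},\mu_{\lambda}}=\ketbra{\psi_{\lambda}}{\psi_{\lambda}}$ and all other blocks zero; substituting this into the form given by Theorem~\ref{thm:classification_Cusd} yields exactly the claimed $I C_{\Phi} I^{*}$. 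I do not expect a genuine obstacle: since all the representation-theoretic work was already done in Theorem~\ref{thm:classification_Cusd}, the only thing that needs care is keeping the block/multiplicity bookkeeping straight in the reduction to block-diagonal density operators.
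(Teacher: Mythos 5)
Your proposal is correct and follows essentially the same route as the paper: both read the extremal points off the parametrization of Theorem~\ref{thm:classification_Cusd}, concluding that extremality forces, for each $\lambda\vdash_{d}m$, a single nonzero block $M_{\gamma_{\lambda},\overline{\lambda},\mu_{\lambda}}=\ketbra{\psi_{\lambda}}{\psi_{\lambda}}$ of unit trace and rank one. The only difference is that you spell out the affine-isomorphism, product-over-$\lambda$, and block-diagonal-density-operator bookkeeping that the paper's proof leaves implicit.
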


\begin{proof}
    In the proof of Theorem \ref{thm:classification_Cusd}, we have seen that the sums in Equation~\eqref{equ:classification_Cusd_sum_condition} have to equal $1$ for each $\lambda\vdash_{d}m$. Since $M_{\gamma,\overline{\lambda},\mu}\geq 0$, the extremal points that satisfy this property are those, where $\tr[M_{\gamma,\overline{\lambda},\mu}]$ is $1$ for one specific touple $(\gamma_{\lambda},\mu_{\lambda})$ and $0$ everywhere else. This corresponds exactly to the set of states on $\mathbb{C}^{c^{\gamma_{\lambda}}_{\overline{\lambda},\mu_{\lambda}}}$, and the extreme points are known to be the rank-$1$ projections.
\end{proof}

\subsection{Operational interpretation}
\label{subs:operational_interpretation}

While Theorem \ref{thm:classification_Cusd} and Corollary \ref{cor:classification_extremal_points_Cusd} are nice technical results, they don't provide any insight into what these quantum channels do or how to implement them on a quantum computer. We show that each extremal quantum channel given in Corollary \ref{cor:classification_extremal_points_Cusd} can be interpreted as the subsequent application of a sequence of simple quantum channels with a better operational interpretation. To begin, for $\gamma\vdash_{d} (m,n)$ we take the projections
\begin{align}
    &\Pi_{\lambda}^{m,n}\in \mathcal{B}(\bigoplus_{\nu\vdash_{d}m}\mathcal{P}_{\nu}^{m,n,d}\otimes\mathcal{Q}_{\nu}^{d}) \, ,\\
    &\Pi_{\gamma}^{m,n} :=\id_{\mathcal{P}_{\gamma}^{m,n,d}}\otimes\id_{\mathcal{Q}_{\gamma}^{d}} \, .
\end{align}
We now define these simple quantum channels.

\begin{definition}
    The \textit{unitary mixed Schur sampling quantum channel} $\Phi_{\rm USS}^{m,n}$ is defined as
    \begin{align}
        &\Phi_{\rm USS}^{m,n} \in \mathcal{C}\left((\mathbb{C}^{d})^{\otimes m}\otimes (\overline{\mathbb{C}^{d}})^{\otimes n} , \bigoplus_{\nu\vdash_{d}(m,n)}\mathcal{Q}_{\nu}^{d} \right) \, ,\\
        &\Phi_{\rm USS}^{m,n}(A):=\bigoplus_{\gamma\vdash_{d}m}\tr_{\mathcal{P}_{\gamma}^{m,n,d}}[\Pi_{\gamma}^{m,n}U_{\rm Sch}^{m,n,d}A(U_{\rm Sch}^{m,n,d})^{*}\Pi_{\gamma}^{m,n}] \, .
    \end{align}
    For the case of $n=0$, we simply write $\Phi_{\rm USS}^{m}$, and we call the quantum channel the \textit{unitary Schur sampling quantum channel}.
\end{definition}

\begin{remark}
    This quantum channel corresponds to first performing the (mixed) Schur transform, then measuring the irrep label $\gamma$ and finally tracing out the permutation (walled Brauer) register $\mathcal{P}_{\gamma}^{m,n,d}$. Unitary (mixed) Schur sampling is discussed in detail in \cite{CerveroMartin_2024}.
\end{remark}

\begin{proposition}
    \label{prop:dual_USS_quantum channel}
    The adjoint quantum channel $(\Phi_{\rm USS}^{m,n})^{*}$ is given by
    \begin{align}
        &(\Phi_{\rm USS}^{m,n})^{*} \in \mathcal{C} \left(\bigoplus_{\nu\vdash_{d}(m,n)}\mathcal{Q}_{\nu}^{d} , (\mathbb{C}^{d})^{\otimes m}\otimes (\overline{\mathbb{C}^{d}})^{\otimes n} \right) \, ,\\
        &(\Phi_{\rm USS}^{m,n})^{*}(A) =(U_{\rm Sch}^{m,n,d})^{*}\left(\bigoplus_{\nu\vdash_{d}(m,n)}\frac{1}{\dim \mathcal{P}_{\nu}^{m,n,d}}\id_{\mathcal{P}_{\nu}^{m,n,d}}\otimes A_\nu\right)U_{\rm Sch}^{m,n,d} \, ,
    \end{align}
    where $A_{\nu}$ is given by $A|_{\mathcal{Q}_{\nu}^{d}}$ with the output restricted to $\mathcal{Q}_{\nu}^{d}$ as well.
\end{proposition}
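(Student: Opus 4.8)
The plan is to compute the adjoint directly from the defining property $\tr[B^{*}\,\Phi_{\rm USS}^{m,n}(A)] = \tr[((\Phi_{\rm USS}^{m,n})^{*}(B))^{*}\,A]$ for all $A \in \mathcal{B}((\mathbb{C}^{d})^{\otimes m}\otimes(\overline{\mathbb{C}^{d}})^{\otimes n})$ and all $B \in \mathcal{B}(\bigoplus_{\nu}\mathcal{Q}_{\nu}^{d})$, and then verify that the claimed formula is completely positive and trace preserving so that it is a legitimate quantum channel. Since $\Phi_{\rm USS}^{m,n}$ is presented as a composition of three elementary maps — conjugation by the mixed Schur isometry $U_{\rm Sch}^{m,n,d}$, projection onto the blocks $\Pi_{\gamma}^{m,n}$, and the partial trace over $\mathcal{P}_{\gamma}^{m,n,d}$ — I would take the adjoint of each piece separately and compose in reverse order.

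First I would recall the adjoints of the three building blocks. The adjoint of conjugation $A \mapsto U A U^{*}$ by an isometry $U$ is $X \mapsto U^{*} X U$ (this is immediate from cyclicity of the trace and $U^{*}U = \id$). The adjoint of a projection $X \mapsto \Pi X \Pi$ is itself, since $\Pi^{*} = \Pi$. The adjoint of the partial trace $\tr_{\mathcal{H}_{1}}: \mathcal{B}(\mathcal{H}_{1}\otimes\mathcal{H}_{2}) \to \mathcal{B}(\mathcal{H}_{2})$ is the ampliation $Y \mapsto \id_{\mathcal{H}_{1}} \otimes Y$; here, however, $\Phi_{\rm USS}^{m,n}$ does \emph{not} take the bare partial trace but rather a \emph{normalized} version only implicitly — looking at the definition, $\Phi_{\rm USS}^{m,n}(A) = \bigoplus_{\gamma}\tr_{\mathcal{P}_{\gamma}^{m,n,d}}[\Pi_{\gamma}^{m,n} U_{\rm Sch}A U_{\rm Sch}^{*}\Pi_{\gamma}^{m,n}]$, so the adjoint of the inner $\tr_{\mathcal{P}}$ is simply $A_{\nu} \mapsto \id_{\mathcal{P}_{\nu}^{m,n,d}}\otimes A_{\nu}$, and the normalization factor $1/\dim\mathcal{P}_{\nu}^{m,n,d}$ in the claimed formula is exactly what appears when one also accounts for the fact that the channel is trace preserving; I will need to double-check whether the $1/\dim$ is forced by the adjoint computation itself or is instead a separate correction. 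Composing the three adjoints in reverse order gives the map $B \mapsto U_{\rm Sch}^{*}\big(\bigoplus_{\nu}\Pi_{\nu}(\id_{\mathcal{P}_{\nu}}\otimes B_{\nu})\Pi_{\nu}\big)U_{\rm Sch}$, and since $\Pi_{\nu}$ acts as the identity on the block $\mathcal{P}_{\nu}^{m,n,d}\otimes\mathcal{Q}_{\nu}^{d}$ it can be dropped, leaving $B \mapsto U_{\rm Sch}^{*}\big(\bigoplus_{\nu}\id_{\mathcal{P}_{\nu}^{m,n,d}}\otimes B_{\nu}\big)U_{\rm Sch}$.

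The subtle point — and the one I expect to be the main obstacle — is the appearance of the factor $1/\dim\mathcal{P}_{\nu}^{m,n,d}$ in the stated formula for $(\Phi_{\rm USS}^{m,n})^{*}$. The naive adjoint computation above produces the ampliation $\id_{\mathcal{P}_{\nu}}\otimes B_{\nu}$ \emph{without} the $1/\dim$ factor, so either the statement is using a slightly different normalization convention for the partial trace, or (more likely) the normalization is there precisely so that $(\Phi_{\rm USS}^{m,n})^{*}$ is itself a quantum channel — i.e., trace preserving and unital in the right sense. I would resolve this by checking both directions explicitly: (i) that the map as stated, $B\mapsto U_{\rm Sch}^{*}(\bigoplus_{\nu}\frac{1}{\dim\mathcal{P}_{\nu}^{m,n,d}}\id_{\mathcal{P}_{\nu}^{m,n,d}}\otimes B_{\nu})U_{\rm Sch}$, is completely positive (clear, being conjugation by an isometry of a positive operator, since $B\geq 0 \Rightarrow B_{\nu}\geq 0 \Rightarrow \id\otimes B_{\nu}\geq 0$) and trace preserving ($\tr[U_{\rm Sch}^{*}(\cdots)U_{\rm Sch}] = \sum_{\nu}\frac{1}{\dim\mathcal{P}_{\nu}}\cdot\dim\mathcal{P}_{\nu}\cdot\tr[B_{\nu}] = \sum_{\nu}\tr[B_{\nu}] = \tr[B]$, using $U_{\rm Sch}U_{\rm Sch}^{*}=\id$); and (ii) that it satisfies the adjoint identity against $\Phi_{\rm USS}^{m,n}$. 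For (ii), plugging into $\tr[B^{*}\Phi_{\rm USS}^{m,n}(A)]$, using that $\Phi_{\rm USS}^{m,n}(A)$ is supported blockwise on $\bigoplus_{\nu}\mathcal{Q}_{\nu}^{d}$ with $\nu$-block equal to $\tr_{\mathcal{P}_{\nu}}[(\id_{\mathcal{P}_{\nu}}\otimes \id_{\mathcal{Q}_{\nu}})(U_{\rm Sch}AU_{\rm Sch}^{*})_{\nu\nu}]$, and then re-expanding $\tr_{\mathcal{P}_{\nu}}$ via its adjoint, one should land on $\tr[((\Phi_{\rm USS}^{m,n})^{*}(B))^{*}A]$ — but only if the $1/\dim\mathcal{P}_{\nu}$ is interpreted consistently. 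I suspect the cleanest route is to \emph{not} chase the $1/\dim$ through the adjoint of a raw partial trace, but instead observe that $\Phi_{\rm USS}^{m,n}$ factors through the map $A \mapsto \bigoplus_{\nu}\tr_{\mathcal{P}_{\nu}}[(U_{\rm Sch}AU_{\rm Sch}^{*})_{\nu\nu}]$, compute that this map's adjoint is $B \mapsto U_{\rm Sch}^{*}(\bigoplus_{\nu}\id_{\mathcal{P}_{\nu}}\otimes B_{\nu})U_{\rm Sch}$ with \emph{no} $1/\dim$, and conclude that the statement's $1/\dim$ must be reconciled by noting that the stated $(\Phi_{\rm USS}^{m,n})^{*}$ is \emph{not} equal to this adjoint but rather differs by the rescaling that makes it a channel rather than merely a completely positive map — or, alternatively, that the paper's convention folds a $\dim\mathcal{P}_{\nu}$-dependent normalization into the definition of $U_{\rm Sch}^{m,n,d}A(U_{\rm Sch}^{m,n,d})^{*}$ in the $\Phi_{\rm USS}$ definition via the maximally-mixed-state reduction of Equation~\eqref{equ:permutation_invariant_state}. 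Clarifying exactly which of these is intended, and writing the adjoint identity out carefully with that convention in hand, is the one genuinely delicate step; everything else is a routine composition of the three elementary adjoints together with the observation that the $\Pi_{\nu}$'s act trivially on their own blocks.
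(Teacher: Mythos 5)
Your computation of the trace-inner-product adjoint is correct: composing the adjoints of conjugation by $U_{\rm Sch}^{m,n,d}$, of the block projections, and of the partial traces gives
\begin{align}
    \Psi(B) := (U_{\rm Sch}^{m,n,d})^{*}\Big(\bigoplus_{\nu\vdash_{d}(m,n)} \id_{\mathcal{P}_{\nu}^{m,n,d}}\otimes B_{\nu}\Big)U_{\rm Sch}^{m,n,d}\, ,
\end{align}
with no dimension factor, and only the diagonal blocks $B_{\nu}$ of $B$ ever enter (consistent with the statement's $A_{\nu}=A|_{\mathcal{Q}_{\nu}^{d}}$). Your CP and TP checks of the stated map are also fine. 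The gap is that you stop exactly at the point that constitutes the proposition: you list three possible reconciliations of the factor $1/\dim\mathcal{P}_{\nu}^{m,n,d}$ (a different partial-trace normalization, a rescaling, a convention hidden in $U_{\rm Sch}$) and explicitly leave "clarifying which is intended" open, so the proposal does not actually establish the stated formula. The paper's own proof is the one-liner "follows immediately from the definition," so it gives no more detail than your first paragraph does; the normalization question you raise is the entire content of the statement and must be settled, not flagged.

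The resolution is the one your own computation points to, and it is the one the paper uses downstream: the factor $1/\dim\mathcal{P}_{\nu}^{m,n,d}$ is \emph{not} produced by the Hilbert--Schmidt adjoint; the map in the proposition is the blockwise trace-preserving rescaling of $\Psi$, equivalently the Moore--Penrose pseudoinverse of $\Phi_{\rm USS}^{m,n}$. Indeed $\Phi_{\rm USS}^{m,n}\circ\Psi(B)=\bigoplus_{\nu}(\dim\mathcal{P}_{\nu}^{m,n,d})B_{\nu}$, so $\Psi\circ(\Phi_{\rm USS}^{m,n}\circ\Psi)^{-1}$ is exactly the stated map, and it satisfies $\Phi_{\rm USS}^{m,n}\circ(\Phi_{\rm USS}^{m,n})^{*}=\id$; this is precisely what the remark following the proposition asserts ("the adjoint quantum channel is also the pseudoinverse"), and it is the version needed later, e.g.\ in the proof of Theorem \ref{thm:operational_interpretation_extremal_quantum channels}, where $(\Phi_{\rm USS}^{n})^{*}$ appends the maximally mixed state $\frac{1}{\dim\mathcal{P}_{\nu}^{n}}\id_{\mathcal{P}_{\nu}^{n}}$ before the inverse Schur transform. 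So discard your "different partial-trace normalization" branch (a dead end), commit to this reading — on each block the claimed map is the unique trace-preserving multiple of the adjoint — and close the proof with the blockwise proportionality plus the CP/TP verification you already have. It is also worth stating explicitly, as your analysis correctly detects, that under the paper's literal definition of ${}^{*}$ the formula holds only after this blockwise renormalization, i.e.\ the proposition's use of the word "adjoint" is a mild abuse of notation rather than an identity for the raw trace-inner-product adjoint.
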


\begin{proof}
    Follows immediately from the definition.
\end{proof}

\begin{remark}
    This quantum channel corresponds to first appending the permutation (walled Brauer) register in the maximally mixed state and then performing the inverse (mixed) Schur transform. In addition, the adjoint quantum channel is also the pseudoinverse of $\Phi_{\rm USS}^{m,n}$, which makes $\Phi_{\rm USS}^{m,n}$ and $(\Phi_{\rm USS}^{m,n})^{*}$ into partial isometries. In particular, $(\Phi_{\rm USS}^{m})^{*}\circ\Phi_{\rm USS}^{m}$ is just the projection onto permutation symmetrized states, see Section \ref{subs:state_symmetrization}.
\end{remark}

\begin{definition}
    \label{def:unitary_equivariant_quantum channel_irreps}
    For $\lambda\vdash_{d}m$, $\mu\vdash_{d}n$, $\gamma\in \mu+_{d}\overline{\lambda}$ and $\ket{\psi} \in \mathbb{C}^{c^{\gamma}_{\overline{\lambda},\mu}}$ we define the quantum channel $\Phi_{\lambda,\mu}^{\gamma,\psi}$ implicitly via its Choi matrix $C_{\Phi_{\lambda,\mu}^{\gamma,\psi}}$ as
    \begin{align}
        &\Phi_{\lambda,\mu}^{\gamma,\psi} \in \mathcal{C}\left(\mathcal{Q}_{\lambda}^{d} , \mathcal{Q}_{\mu}^{d} \right) \, ,\\
        \label{equ:unitary_irrep_quantum channel}
        &C_{\Phi_{\lambda,\mu}^{\gamma,\psi}}:=\frac{\dim \mathcal{Q}_{\lambda}^{d}}{\dim \mathcal{Q}_{\gamma_{\lambda}}^{d}}(U_{\rm CG}^{\overline{\lambda},\mu,d})^{*}(\id_{\mathcal{Q}_{\gamma}^{d}}\otimes \ketbra{\psi}{\psi})U_{\rm CG}^{\overline{\lambda},\mu,d} \, .
    \end{align}
\end{definition}

\begin{remark}
    These quantum channels are precisely the extremal points in the set of unitary-equivariant quantum channels $\mathcal{B}(\mathcal{Q}_{\lambda}^{d})\rightarrow \mathcal{B}(\mathcal{Q}_{\mu}^{d})$. They have been investigated for the case of $SU(2)$ \cite{Nuwairan_2013,Aschieri_2024}, and they have different equivalent definitions (compare e.g. Equations 1.34-1.36 in \cite{Aschieri_2024}), which we have summarized in Theorem \ref{thm:equivalence_of_unitary_equivariant_quantum channels}. There, we find that
    \begin{align}
        \Phi_{\lambda,\mu}^{\gamma,\psi}(\rho) = \tr_{\overline{\mathcal{Q}_{\gamma}^{d}}}\left[\iota_{\mu,\overline{\gamma}}^{\lambda,\psi}\rho(\iota_{\mu,\overline{\gamma}}^{\lambda,\psi})^{*}\right] \, .
    \end{align}
    We operationally interpret this as embedding the state via $\iota_{\mu,\overline{\gamma}}^{\lambda,\psi}$ and then tracing out the register $\overline{\mathcal{Q}_{\gamma}^{d}}$.
\end{remark}

\begin{theorem}
    \label{thm:operational_interpretation_extremal_quantum channels}
    Let $\Phi$ be an extremal point of $\mathcal{C}_{up}^{d}(m,n)$, and let $\mu_{\lambda}\vdash_{d}n$, $\gamma_{\lambda}\in \mu_{\lambda}+_{d}\overline{\lambda}$ and $\ket{\psi_{\lambda}}\in\mathbb{C}^{c^{\gamma_{\lambda}}_{\overline{\lambda},\mu_{\lambda}}}$ be as in Corollary \ref{cor:classification_extremal_points_Cusd}. Then $\Phi$ is given by
    \begin{align}
        \label{equ:thm_operational_interpretation_quantum channel}
        \Phi=(\Phi_{\rm USS}^{n})^{*}\circ \left(\bigoplus_{\lambda\vdash_{d}m}\Phi_{\lambda,\mu_{\lambda}}^{\gamma_{\lambda},\psi_{\lambda}}\right) \circ \Phi_{\rm USS}^{m} \, .
    \end{align}
\end{theorem}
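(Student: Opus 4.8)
The plan is to prove the identity at the level of Choi matrices: since a quantum channel is uniquely determined by its Choi matrix, it suffices to show that the Choi matrix of the composition on the right-hand side of Equation~\eqref{equ:thm_operational_interpretation_quantum channel} equals the Choi matrix exhibited for the extremal channel with touples $(\mu_{\lambda},\gamma_{\lambda},\ket{\psi_{\lambda}})$ in Corollary~\ref{cor:classification_extremal_points_Cusd}. I will use repeatedly the elementary composition rule $C_{\mathcal{M}\circ\mathcal{N}}=(\id\otimes\mathcal{M})(C_{\mathcal{N}})$, together with the observation that every operation inside $\Phi_{\rm USS}^{m}$, inside $\bigoplus_{\lambda}\Phi_{\lambda,\mu_{\lambda}}^{\gamma_{\lambda},\psi_{\lambda}}$ and inside $(\Phi_{\rm USS}^{n})^{*}$ acts only on the ``output copy'' of the Choi matrix and hence commutes with any transformation of the ``input copy''.

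First I would compute the Choi matrices of the three building blocks. For $\Phi_{\lambda,\mu}^{\gamma,\psi}$ it is given outright by Equation~\eqref{equ:unitary_irrep_quantum channel}. For $\Phi_{\rm USS}^{m}$, one pushes its Schur transform through the unnormalized maximally entangled state $\ket{\widetilde\Omega}=\sum_{i}\ket{i}\ket{i}$ --- using the standard fact that $(\overline{U}\otimes U)\ket{\widetilde\Omega}=\ket{\widetilde\Omega}$ for unitary $U$, so that $(\overline{U_{\rm Sch}^{m,d}}\otimes U_{\rm Sch}^{m,d})\ket{\widetilde\Omega}$ is the maximally entangled state written in the Schur basis, which is, blockwise in $\lambda$, a product of maximally entangled states on $\mathcal{P}_{\lambda}^{m}$ and on $\mathcal{Q}_{\lambda}^{d}$ --- and then tracing out each $\mathcal{P}_{\lambda}^{m}$ on the output side. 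This yields
\begin{align}
    C_{\Phi_{\rm USS}^{m}}=(\overline{U_{\rm Sch}^{m,d}})^{*}\left(\bigoplus_{\lambda\vdash_{d}m}\dim\mathcal{Q}_{\lambda}^{d}\;\id_{\overline{\mathcal{P}_{\lambda}^{m}}}\otimes\proj{\Psi_{\lambda}}\right)\overline{U_{\rm Sch}^{m,d}}\, ,
\end{align}
where $\ket{\Psi_{\lambda}}$ is the $SU(d)$-invariant normalized maximally entangled state between the input-copy $\overline{\mathcal{Q}_{\lambda}^{d}}$ and the output-copy $\mathcal{Q}_{\lambda}^{d}$, and the prefactor $\dim\mathcal{Q}_{\lambda}^{d}$ is fixed by trace preservation, exactly as in the proof of Theorem~\ref{thm:classification_Cusd}. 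For $(\Phi_{\rm USS}^{n})^{*}$ I read off from Proposition~\ref{prop:dual_USS_quantum channel} that, block by block, it appends $\id_{\mathcal{P}_{\mu}^{n}}/\dim\mathcal{P}_{\mu}^{n}$ and conjugates the resulting output copy by $(U_{\rm Sch}^{n,d})^{*}$.

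Next I would assemble the composition in two layers. Applying $\id\otimes\bigl(\bigoplus_{\lambda}\Phi_{\lambda,\mu_{\lambda}}^{\gamma_{\lambda},\psi_{\lambda}}\bigr)$ to $C_{\Phi_{\rm USS}^{m}}$: on the $\lambda$-block the isometry $\overline{U_{\rm Sch}^{m,d}}$ acts on the input copy and thus commutes with $\Phi_{\lambda,\mu_{\lambda}}^{\gamma_{\lambda},\psi_{\lambda}}$, so the effect is to replace $\proj{\Psi_{\lambda}}$ by $(\id_{\overline{\mathcal{Q}_{\lambda}^{d}}}\otimes\Phi_{\lambda,\mu_{\lambda}}^{\gamma_{\lambda},\psi_{\lambda}})(\proj{\Psi_{\lambda}})=\tfrac{1}{\dim\mathcal{Q}_{\lambda}^{d}}\,C_{\Phi_{\lambda,\mu_{\lambda}}^{\gamma_{\lambda},\psi_{\lambda}}}$, where the last equality holds because feeding one half of a normalized maximally entangled state into a channel produces its normalized Choi matrix; the factor $\dim\mathcal{Q}_{\lambda}^{d}$ cancels the one carried by $C_{\Phi_{\rm USS}^{m}}$. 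Then applying $\id\otimes(\Phi_{\rm USS}^{n})^{*}$ inserts $\id_{\mathcal{P}_{\mu_{\lambda}}^{n}}/\dim\mathcal{P}_{\mu_{\lambda}}^{n}$ and conjugates the output copy by $(U_{\rm Sch}^{n,d})^{*}$. Substituting the explicit form of $C_{\Phi_{\lambda,\mu_{\lambda}}^{\gamma_{\lambda},\psi_{\lambda}}}$ from Equation~\eqref{equ:unitary_irrep_quantum channel}, the factor $(U_{\rm CG}^{\overline{\lambda},\mu_{\lambda},d})^{*}$ it contains combines with $\overline{U_{\rm Sch}^{m,d}}\otimes U_{\rm Sch}^{n,d}$ to reconstruct exactly the isometry $I$ of Equation~\eqref{equ:isometry_Choi_matrix}, and collecting the $\lambda$-terms gives
\begin{align}
    I\,C_{\Phi}\,I^{*}=\bigoplus_{\lambda\vdash_{d}m}\id_{\overline{\mathcal{P}_{\lambda}^{m}}}\otimes\frac{1}{\dim\mathcal{P}_{\mu_{\lambda}}^{n}}\id_{\mathcal{P}_{\mu_{\lambda}}^{n}}\otimes\frac{\dim\mathcal{Q}_{\lambda}^{d}}{\dim\mathcal{Q}_{\gamma_{\lambda}}^{d}}\id_{\mathcal{Q}_{\gamma_{\lambda}}^{d}}\otimes\ketbra{\psi_{\lambda}}{\psi_{\lambda}}\, ,
\end{align}
which is precisely the Choi matrix of Corollary~\ref{cor:classification_extremal_points_Cusd}. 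As Choi matrices determine channels, the claimed equality follows.

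I expect the main obstacle to be bookkeeping rather than a conceptual difficulty: keeping track of which Schur or Clebsch--Gordan transform acts on the input versus the output copy, using the identifications $\overline{\mathcal{Q}_{\lambda}^{d}}=\mathcal{Q}_{\overline{\lambda}}^{d}$ and $\overline{\mathcal{P}_{\lambda}^{m}}=\mathcal{P}_{\lambda}^{m}$ consistently, and checking that the dimension factors $\dim\mathcal{Q}_{\lambda}^{d}$, $\dim\mathcal{P}_{\mu_{\lambda}}^{n}$ and $\dim\mathcal{Q}_{\gamma_{\lambda}}^{d}$ telescope to exactly the coefficients in Corollary~\ref{cor:classification_extremal_points_Cusd}. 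The one genuinely substantive ingredient is the displayed form of $C_{\Phi_{\rm USS}^{m}}$ above; it rests only on $\Phi_{\rm USS}^{m}$ being unitary-equivariant and permutation-invariant together with Schur's lemma, argued as in Theorem~\ref{thm:classification_Cusd}.
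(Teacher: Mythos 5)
Your proposal is correct. The argument closes: the composition rule $C_{\mathcal{M}\circ\mathcal{N}}=(\id\otimes\mathcal{M})(C_{\mathcal{N}})$, the explicit form of $C_{\Phi_{\rm USS}^{m}}$ obtained by pushing the Schur isometry through the unnormalized maximally entangled state, the identity $(\id\otimes\Phi_{\lambda,\mu_{\lambda}}^{\gamma_{\lambda},\psi_{\lambda}})(\proj{\Psi_{\lambda}})=\tfrac{1}{\dim\mathcal{Q}_{\lambda}^{d}}C_{\Phi_{\lambda,\mu_{\lambda}}^{\gamma_{\lambda},\psi_{\lambda}}}$, and Proposition \ref{prop:dual_USS_quantum channel} together reproduce exactly the block form of Corollary \ref{cor:classification_extremal_points_Cusd} after conjugation by $I$, and injectivity of the Choi map finishes the proof. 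Your route differs from the paper's in organization rather than in ingredients: the paper defines $\Phi'$ as the right-hand side of Equation \eqref{equ:thm_operational_interpretation_quantum channel}, evaluates each factor on the matrix units $(U_{\rm Sch}^{m,d})^{*}\ketbra{\lambda,p_{\lambda}^{m},q_{\lambda}^{d}}{\tilde{\lambda},\tilde{p}_{\tilde{\lambda}}^{m},\tilde{q}_{\tilde{\lambda}}^{d}}U_{\rm Sch}^{m,d}$, and compares with the action of $\Phi$ computed from its Choi matrix, so it never needs $C_{\Phi_{\rm USS}^{m}}$ explicitly; you instead work entirely at the Choi level, which makes the dimension factors ($\dim\mathcal{Q}_{\lambda}^{d}$, $\dim\mathcal{P}_{\mu_{\lambda}}^{n}$, $\dim\mathcal{Q}_{\gamma_{\lambda}}^{d}$) telescope transparently, at the price of having to derive and justify the displayed Choi matrix of $\Phi_{\rm USS}^{m}$ (your direct maximally-entangled-state computation is the right justification; the side remark that it "follows as in Theorem \ref{thm:classification_Cusd}" is not literally applicable since $\Phi_{\rm USS}^{m}$ is not in $\mathcal{C}_{up}^{d}(m,n)$, though a Schur's-lemma argument adapted to its output space $\bigoplus_{\lambda}\mathcal{Q}_{\lambda}^{d}$ does work). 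One point worth making explicit in a full write-up: because $\Phi_{\rm USS}^{m}$ contains the projective measurement of $\lambda$, the intermediate Choi matrix is block-diagonal in $\lambda$ on the output copy, so applying the direct-sum channel $\bigoplus_{\lambda}\Phi_{\lambda,\mu_{\lambda}}^{\gamma_{\lambda},\psi_{\lambda}}$ blockwise is unambiguous even though that channel is only specified on each block.
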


\begin{remark}
    The $\lambda \vdash_{d} m$ is a classical output of $\Phi_{\rm USS}^{m}$. Therefore we interpret the resulting quantum channel as first applying $\Phi_{\rm USS}^{m}$, then a conditional application of an irrep-level unitary-equivariant quantum channel on the output of $\Phi_{\rm USS}^{m}$, and then appending a maximally mixed state on the permutation register via $(\Phi_{\rm USS}^{n})^{*}$.
\end{remark}

\begin{proof}
    Let $\Phi'$ be the quantum channel defined in Equation~\eqref{equ:thm_operational_interpretation_quantum channel}. We take a basis labeled by
    \begin{align}
        \ket{\lambda,p_{\lambda}^{m},q_{\lambda}^{d}}=\ket{p_{\lambda}^{m}}\otimes \ket{q_{\lambda}^{d}} \in \mathcal{P}_{\lambda}^{m}\otimes \mathcal{Q}_{\lambda}^{d} \subseteq \bigoplus_{\mu\vdash_{d}m}\mathcal{P}_{\mu}^{m}\otimes \mathcal{Q}_{\mu}^{d} \, .
    \end{align}
    In this basis, we find that
    \begin{align}
        \Phi_{\rm USS}^{m} \left((U_{\rm Sch}^{m,d})^{*} \ketbra{\lambda,p_{\lambda}^{m},q_{\lambda}^{d}}{\tilde{\lambda},\tilde{p}_{\tilde{\lambda}}^{m},\tilde{q}_{\tilde{\lambda}}^{d}}U_{\rm Sch}^{m,d}\right)=0
    \end{align}
    if $\lambda\neq\tilde{\lambda}$ or $p_{\lambda}^{m}\neq \tilde{p}_{\tilde{\lambda}}^{m}$ and, otherwise we get
    \begin{align}
        \Phi_{\rm USS}^{m} \left((U_{\rm Sch}^{m,d})^{*} \ketbra{\lambda,p_{\lambda}^{m},q_{\lambda}^{d}}{\lambda,p_{\lambda}^{m},\tilde{q}_{\lambda}^{d}}U_{\rm Sch}^{m,d}\right)=\ketbra{q_{\lambda}^{d}}{\tilde{q}_{\lambda}^{d}} \, .
    \end{align}
    Further, we obtain
    \begin{align}
        \left(\bigoplus_{\lambda\vdash_{d}m}\Phi_{\lambda,\mu_{\lambda}}^{\gamma_{\lambda},\psi_{\lambda}}\right)\left(\ketbra{q_{\lambda}^{d}}{\tilde{q}_{\lambda}^{d}}\right)= \frac{\dim \mathcal{Q}_{\lambda}^{d}}{\dim \mathcal{Q}_{\gamma_{\lambda}}^{d}} \tr_{\overline{\mathcal{Q}_{\lambda}^{d}}}\left[(U_{\rm CG}^{\overline{\lambda},\mu,d})^{*}(\id_{\mathcal{Q}_{\gamma}^{d}}\otimes \ketbra{\psi_{\lambda}}{\psi_{\lambda}})U_{\rm CG}^{\overline{\lambda},\mu,d}(\ketbra{\tilde{q}_{\lambda}^{d}}{q_{\lambda}^{d}}\otimes\id_{\mathcal{Q}_{\mu}^{d}})\right] \, .
    \end{align}
    Finally, by Proposition \ref{prop:dual_USS_quantum channel}, applying $(\Phi_{\rm USS}^{n})^{*}$ to the result just adds the maximally mixed state $\frac{1}{\dim\mathcal{P}_{\mu}^{n}}\id_{\mathcal{P}_{\mu}^{n}}$ in the permutation register, before performing the inverse Schur transform. Plugging all of the above together, we get the results for $\Phi' \left((U_{\rm Sch}^{m,d})^{*} \ketbra{\lambda,p_{\lambda}^{m},q_{\lambda}^{d}}{\tilde{\lambda},\tilde{p}_{\tilde{\lambda}}^{m},\tilde{q}_{\tilde{\lambda}}^{d}}U_{\rm Sch}^{m,d}\right)$. If we compute the results for $\Phi$ via the Choi formalism, we see that
    \begin{align}
        \Phi' \left((U_{\rm Sch}^{m,d})^{*} \ketbra{\lambda,p_{\lambda}^{m},q_{\lambda}^{d}}{\tilde{\lambda},\tilde{p}_{\tilde{\lambda}}^{m},\tilde{q}_{\tilde{\lambda}}^{d}}U_{\rm Sch}^{m,d}\right)=\Phi \left((U_{\rm Sch}^{m,d})^{*} \ketbra{\lambda,p_{\lambda}^{m},q_{\lambda}^{d}}{\tilde{\lambda},\tilde{p}_{\tilde{\lambda}}^{m},\tilde{q}_{\tilde{\lambda}}^{d}}U_{\rm Sch}^{m,d}\right) \, .
    \end{align}
    Since the vectors $\ket{\lambda,p_{\lambda}^{m},q_{\lambda}^{d}}$ form a basis, this means that $\Phi'=\Phi$.
\end{proof}

\section{General ansatz}
\label{sec:general_ansatz}

Theorem \ref{thm:operational_interpretation_extremal_quantum channels} tells us that we have to implement the quantum channels $\Phi_{\rm USS}^{m}$, $\Phi_{\lambda,\mu}^{\gamma,\psi}$ and $(\Phi_{\rm USS}^{n})^{*}$ to implement any extremal quantum channel in $\mathcal{C}_{up}^{d}(m,n)$. Our approach is based on algorithms for the Schur transform \cite{Bacon_2005} and the mixed Schur transform \cite{Nguyen_2023,Grinko_2024} via repeated simple Clebsch--Gordan transforms. In addition, we use the recent idea of unitary (mixed) Schur sampling \cite{CerveroMartin_2024} to make the algorithms streaming and memory efficient.

We indicate the $\rho_{A}$ on a register $A$ in the circuit diagrams that follow. This just notation and slightly wrong, as we do not refer to the reduced density matrix on the register $A$, and the different registers will in general be entangled.

A depiction of the original circuit diagram of the mixed Schur transform is given in Figure \ref{fig:Schur_transform_diagram}. The input lives on the space $(\mathbb{C}^{d})^{\otimes m}\otimes(\overline{\mathbb{C}^{d}})^{\otimes n}$, with $\gamma^{1}=\ydiagram{1}$. The $\gamma^{i}$ denote the staircase in the $i$-th step, and the $\rho_{\gamma^{i}}$ denote the (possibly still entangled) states living on $\mathcal{Q}_{\gamma^{i}}^{d}$. The output consists of a state on the unitary register $\rho_{\mathcal{Q}_{\gamma}^{d}}$, the final staircase $\gamma$ to keep track of the isotypic subspace and the collection of intermediate staircases $\gamma^{1},...,\gamma^{m+n-1}$, which make up the state on the permutation register $\rho_{\mathcal{P}_{\gamma}^{m,n,d}}$.

\begin{figure}[h!]
    \centering
    \scalebox{1}{\input{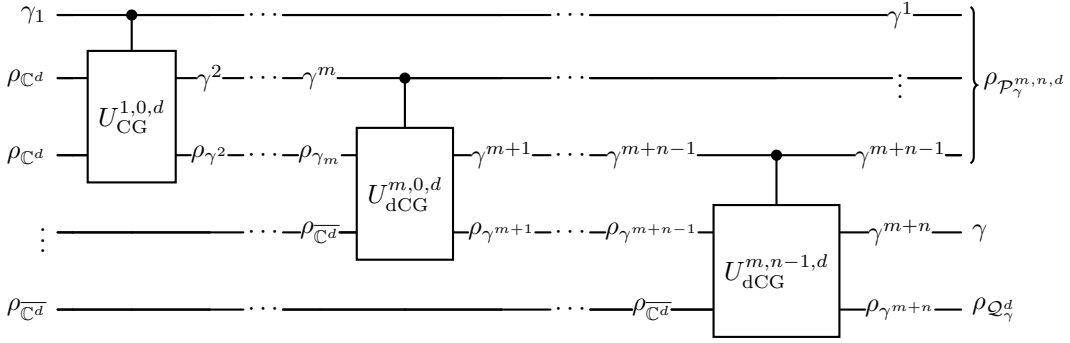}}
    \caption{Implementing the mixed Schur transform \cite{Nguyen_2023,Grinko_2024}.}
    \label{fig:Schur_transform_diagram}
\end{figure}

The idea behind unitary mixed Schur sampling is to supply the states $\rho_{\mathbb{C}^{d}}$ and $\rho_{\overline{\mathbb{C}^{d}}}$ only when they are necessary, and to measure the $\gamma^{i}$ when they're no longer required. The circuit diagram for unitary mixed Schur sampling is given in Figure \ref{fig:USS_diagram}. If we invert all gates in the unitary mixed Schur sampling circuit, we get the process described in Figure \ref{fig:dUSS_diagram}. By picking a uniformly random Gel'fand-Tsetlin base vector $\ket{p_{\gamma}^{m,n,d}}\in \mathcal{P}_{\gamma}^{m,n,d}$ for the state $\rho_{\mathcal{P}_{\gamma}^{m,n,d}}$, we efficiently implement $(\Phi_{\rm USS}^{m,n})^{*}$ in a streaming way.

\begin{figure}[h!]
    \centering
    \scalebox{1}{\input{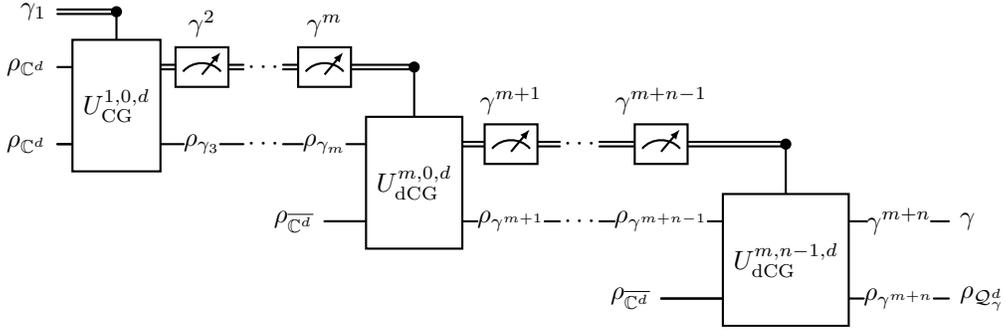}}
    \caption{Implementing unitary mixed Schur sampling \cite{CerveroMartin_2024}.}
    \label{fig:USS_diagram}
\end{figure}

\begin{figure}[h!]
    \centering
    \scalebox{1}{\input{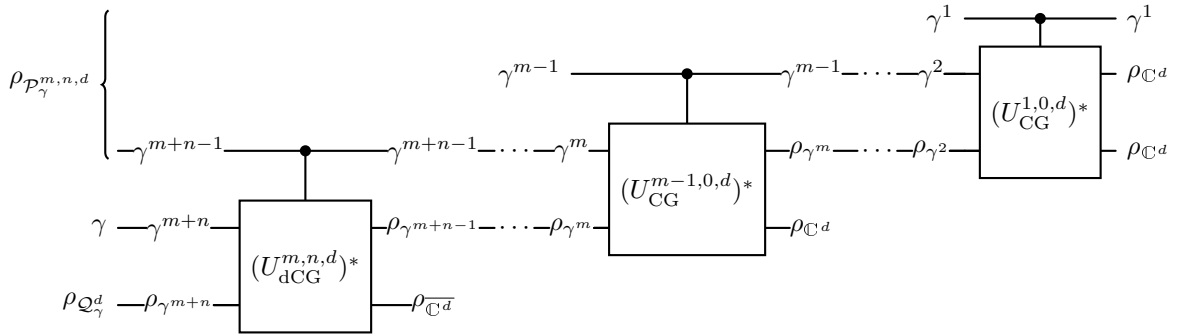}}
    \caption{Implementing the dual quantum channel to unitary mixed Schur sampling.}
    \label{fig:dUSS_diagram}
\end{figure}

Instead of applying the simple dual Clebsch--Gordan transform $n$ times and the simple Clebsch--Gordan transform $m$ times as shown in Figure \ref{fig:dUSS_diagram}, we use an arbitrary number of each, as depicted in Figure \ref{fig:embedding_diagram}. The state $\rho_{\mathcal{P}_{\mu \rightarrow \lambda}^{k,l,d}}$ lives on the vector space of all possible paths of staircases from $\mu$ to $\lambda$ in $(k,l)$ steps. We obtain any isotypic subspace $\mathcal{Q}_{\gamma}^{d}\otimes \mathcal{P}_{\gamma}^{k,l,d}$ on $(\mathbb{C}^{d})^{\otimes k}\otimes(\overline{\mathbb{C}^{d}})^{\otimes l}$, as long as $\lambda \in \mu+_{d}\gamma$, we only need the correct input for $\rho_{\mathcal{P}_{\mu\rightarrow \lambda}^{k,l,d}}$. This lets us embed $\mathcal{Q}_{\lambda}^{d}\hookrightarrow \mathcal{Q}_{\mu}^{d} \otimes \mathcal{Q}_{\gamma}^{d}$ in a unitary-equivariant manner, and the multiplicity of this embedding is reflected in the choice of input $\rho_{\mathcal{P}_{\mu\rightarrow \lambda}^{k,l,d}}$. We make the whole embedding into a streaming algorithm with a similar trick as with unitary Schur sampling (compare Figures \ref{fig:Schur_transform_diagram} and \ref{fig:dUSS_diagram}). According to Theorem \ref{thm:equivalence_of_unitary_equivariant_quantum channels}, we apply the quantum channel $\Phi_{\lambda,\gamma}^{\mu,\psi}$ by implementing this embedding and tracing out $\mathcal{Q}_{\gamma}^{d}$. We also implement the quantum channel $\Phi_{\lambda,\gamma}^{\mu,\psi}$ by tracing out $\mathcal{Q}_{\gamma}^{d}$ instead, but this means wenot perform the algorithm in a streaming manner. 

\begin{figure}[h!]
    \centering
    \scalebox{1}{\input{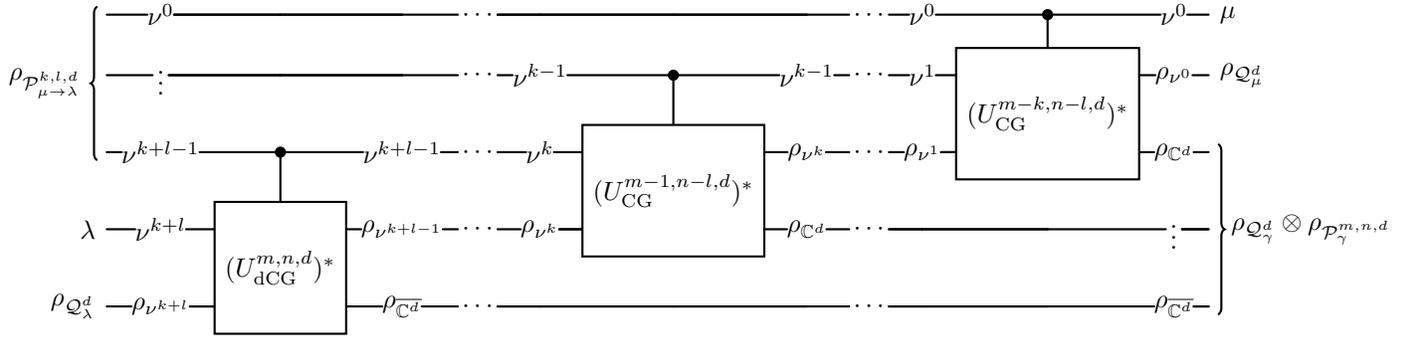}}
    \caption{Implementing the embedding $\iota_{\mu,\gamma}^{\lambda,\psi}$.}
    \label{fig:embedding_diagram}
\end{figure}

\subsection{Iterated simple Clebsch--Gordan transforms over paths}
\label{subs:iterated_simple_CG_transforms}

We realize that a series of simple Clebsch-Gordan transforms corresponds to the direct sum of several general Clebsch-Gordan transforms. We set $\nu^{0}:=\mu$ and get through repeated use of the simple (dual) Clebsch--Gordan transform that
\begin{align}
    \label{equ:iterated_CG_decomposition_simple}
    \mathcal{Q}_{\mu}^{d} \otimes (\mathbb{C}^{d})^{\otimes k} \otimes (\overline{\mathbb{C}^{d}})^{\otimes l} \stackrel{SU(d)}{\cong} \bigoplus_{\nu^{1}\in \nu^{0} +_{d} \ydiagram{1}} ... \bigoplus_{\nu^{k}\in \nu^{k-1} +_{d} \ydiagram{1}} \bigoplus_{\nu^{k+1}\in \nu^{k} +_{d} \overline{\ydiagram{1}}} ... \bigoplus_{\nu^{k+l}\in \nu^{k+l-1} +_{d} \overline{\ydiagram{1}}} \mathcal{Q}_{\nu^{k+l}}^{d} \, .
\end{align}
Each copy of an irrep $\mathcal{Q}_{\lambda}^{d}$ on the right hand side can be associated with a path $p_{\mu \rightarrow \lambda}^{k,l,d}=(\nu^{0},...,\nu^{k+l})$ with the properties
\begin{align}
    &\nu^{0} = \mu \, , \\
    &\nu^{i+1} \in \nu^{i} +_{d} \ydiagram{1} \quad \text{for} \quad i<k \, , \\
    &\nu^{i+1} \in \nu^{i} +_{d} \overline{\ydiagram{1}} \quad \text{for} \quad i \geq k \, , \\
    &\nu^{k+l} = \lambda \, .
\end{align}
Let now $\mu +_{d} (k,l)$ be the set of all labels that appear in Equation~\eqref{equ:iterated_CG_decomposition_simple}, that is the set of all $\lambda$ so that there exists a path $(\nu^{0},...,\nu^{k+l})$ as described above with $\nu^{k+l}=\lambda$. Let further $\mathcal{P}_{\mu \rightarrow \lambda}^{k,l,d}$ be the vector space of formal complex linear combinations of such paths $p_{\mu \rightarrow \lambda}^{k,l,d}$. Using these two definitions, we rewrite Equation~\eqref{equ:iterated_CG_decomposition_simple} to obtain
\begin{align}
    \label{equ:iterated_CG_decomposition_paths}
    \mathcal{Q}_{\mu}^{d} \otimes (\mathbb{C}^{d})^{\otimes k} \otimes (\overline{\mathbb{C}^{d}})^{\otimes l} \stackrel{SU(d)}{\cong} \bigoplus_{\lambda \in \mu +_{d} (k,l)}  \mathcal{P}_{\mu \rightarrow \lambda}^{k,l,d} \otimes \mathcal{Q}_{\lambda}^{d} \, .
\end{align}
We take the isometry behind the equation above as
\begin{align}
    \label{equ:isometry_to_path_space}
    I_{\mu}^{k,l,d} \in \mathcal{B} \left( \mathcal{Q}_{\mu}^{d} \otimes (\mathbb{C}^{d})^{\otimes k} \otimes (\overline{\mathbb{C}^{d}})^{\otimes l} , \bigoplus_{\lambda \in \mu +_{d} (k,l)}  \mathcal{P}_{\mu \rightarrow \lambda}^{k,l,d} \otimes \mathcal{Q}_{\lambda}^{d} \right) \, .
\end{align}
By our construction via the simple Clebsch--Gordan transforms in Equation~\eqref{equ:iterated_CG_decomposition_simple} we take $I_{\mu}^{k,l,d}$ so that it commutes with the action of $SU(d)$. For $\ket{p_{\mu \rightarrow \lambda}^{k,l,d}} \in \mathcal{P}_{\mu \rightarrow \lambda}^{k,l,d}$ and $\ket{q_{\lambda}^{d}} \in \mathcal{Q}_{\lambda}^{d}$ we define the unitary-equivariant isometric embedding
\begin{align}
    \iota_{p_{\mu \rightarrow \lambda}^{k,l,d}}: \mathcal{Q}_{\lambda}^{d} \rightarrow \mathcal{Q}_{\mu}^{d} \otimes (\mathbb{C}^{d})^{\otimes k} \otimes (\overline{\mathbb{C}^{d}})^{\otimes l} \, , \\
    \iota_{p_{\mu \rightarrow \lambda}^{k,l,d}}(\ket{q_{\lambda}^{d}}):=(I_{\mu}^{k,l,d})^{*}\left(\ket{p_{\mu \rightarrow \lambda}^{k,l,d}} \otimes \ket{q_{\lambda}^{d}}\right) \, .
\end{align}

\begin{remark}
    It is important to note here that $\ket{p_{\mu \rightarrow \lambda}^{k,l,d}}$ can in general be a superposition of paths. Given the state $\ket{p_{\mu \rightarrow \lambda}^{k,l,d}}$ on a quantum register, we see in Proposition~\ref{prop:implement_paths_embedding} that $\iota_{p_{\mu \rightarrow \lambda}^{k,l,d}}$ is efficiently implementable, as it just corresponds to a partial inverse Schur transform as described in \cite{Bacon_2005}.
\end{remark}

Another approach to decompose Equation~\eqref{equ:iterated_CG_decomposition_simple} into irreps is to first apply mixed Schur-Weyl duality to $(\mathbb{C}^{d})^{\otimes k} \otimes (\overline{\mathbb{C}^{d}})^{\otimes l}$ and afterwards use the general Clebsch--Gordan transform. This gives us
\begin{align}
    \label{equ:iterated_CG_decomposition_SW_duality}
    \mathcal{Q}_{\mu}^{d} \otimes (\mathbb{C}^{d})^{\otimes k} \otimes (\overline{\mathbb{C}^{d}})^{\otimes l} \stackrel{ SU(d) , \mathcal{A}_{k,l}^{d}}{\cong} \bigoplus_{\gamma \vdash_{d} (k,l)} \mathcal{P}_{\gamma}^{k,l,d} \otimes \mathcal{Q}_{\gamma}^{d} \otimes \mathcal{Q}_{\mu}^{d} \stackrel{SU(d)}{\cong} \bigoplus_{\gamma \vdash_{d} (k,l)} \bigoplus_{\lambda \in \mu +_{d} \gamma} \mathcal{P}_{\gamma}^{k,l,d} \otimes \mathcal{Q}_{\lambda}^{d} \otimes \mathbb{C}^{c_{\mu,\gamma}^{\lambda}} \, .
\end{align}
Comparing Equations~\eqref{equ:iterated_CG_decomposition_paths} and~\eqref{equ:iterated_CG_decomposition_SW_duality}, we see that
\begin{align}
    \label{equ:iterated_CG_decomposition_paths_isometry}
    \mathcal{P}_{\mu \rightarrow \lambda}^{k,l,d} \stackrel{\mathcal{A}_{k,l}^{d}}{\cong} \bigoplus_{\substack{\gamma \vdash_{d} (k,l): \\ \lambda \in \mu +_{d} \gamma}} \mathcal{P}_{\gamma}^{k,l,d} \otimes \mathbb{C}^{c_{\mu,\gamma}^{\lambda}} \, .
\end{align}
We take the isometry behind the equation above as
\begin{align}
    J_{\mu \rightarrow \lambda}^{k,l,d}: \mathcal{P}_{\mu \rightarrow \lambda}^{k,l,d} \rightarrow \bigoplus_{\substack{\gamma \vdash_{d} (k,l): \\ \lambda \in \mu +_{d} \gamma}} \mathcal{P}_{\gamma}^{k,l,d} \otimes \mathbb{C}^{c_{\mu,\gamma}^{\lambda}} \, ,
\end{align}
in a way so that we have the following relation
\begin{align}
    \label{equ:commuting_isometries_relations}
    \left(\bigoplus_{\gamma\vdash_{d}(k,l)}U_{\rm CG}^{\mu,\gamma} \otimes \id_{\mathcal{P}_{\gamma}^{m,n,d}}\right)\left(\id_{\mathcal{Q}_{\mu}^{d}} \otimes U_{\rm Sch}^{k,l,d}\right) = \left(\bigoplus_{\lambda \in \mu +_{d} (k,l)} J_{\mu \rightarrow \lambda}^{k,l,d} \otimes \id_{\mathcal{Q}_{\lambda}^{d}} \right) I_{\mu}^{k,l,d} \, .
\end{align}
This is equivalent to the statement, that the diagram given in Figure~\ref{fig:commuting_isometries_relations} commutes. We now state the following result.

\begin{figure}[h!]
    \centering
    \scalebox{1.1}{\begin{tikzcd}[row sep=2cm, column sep=5cm]
\mathcal{Q}_{\mu}^{d} \otimes (\mathbb{C}^{d})^{\otimes k} \otimes (\overline{\mathbb{C}^{d}})^{\otimes l} \arrow[r, "I_{\mu}^{k,l,d}"] \arrow[d, "\left(\id_{\mathcal{Q}_{\mu}^{d}} \otimes U_{\rm Sch}^{k,l,d}\right)"'] & \bigoplus\limits_{\lambda \in \mu +_{d}(k,l)} \mathcal{P}_{\mu\rightarrow\lambda}^{k,l,d} \otimes \mathcal{Q}_{\lambda}^{d} \arrow[d, "\left(\bigoplus\limits_{\lambda \in \mu +_{d}(k,l)} J_{\mu\rightarrow\lambda}^{k,l,d} \otimes \id_{\mathcal{Q}_{\lambda}^{d}}\right)"] \\
\mathcal{Q}_{\mu}^{d} \otimes \left(\bigoplus\limits_{\gamma\vdash_{d}(k,l)}\mathcal{P}_{\gamma}^{k,l,d} \otimes \mathcal{Q}_{\gamma}^{d}\right) \arrow[r, "\left(\bigoplus\limits_{\gamma\vdash_{d}(k,l,d)}U_{\rm CG}^{\mu,\gamma,d} \otimes \id_{\mathcal{P}_{\gamma}^{k,l,d}}\right)"'] & \bigoplus\limits_{\gamma\vdash_{d}(k,l)}\bigoplus\limits_{\lambda\in \mu +_{d} \gamma}\mathcal{P}_{\gamma}^{k,l,d} \otimes \mathcal{Q}_{\lambda}^{d} \otimes \mathbb{C}^{c_{\mu,\gamma}^{\lambda}}
\end{tikzcd}}
    \caption{Commutative diagram for Equation~\eqref{equ:commuting_isometries_relations}.}
    \label{fig:commuting_isometries_relations}
\end{figure}

\begin{proposition}
    \label{prop:paths_embedding}
    Let $\lambda\vdash_{d} m$ and $\mu\vdash_{d} n$, and let $\gamma \vdash_{d} (k,l)$ so that $\lambda \in \mu +_{d} \gamma$. Let further $\ket{p_{\gamma}^{m,n,d}} \in \mathcal{P}_{\gamma}^{m,n,d}$ and $\ket{\psi} \in \mathbb{C}^{c_{\mu,\gamma}^{\lambda}}$. Finally, we take
    \begin{align}
        \ket{p_{\mu\rightarrow\lambda}^{k,l,d}} := (J_{\mu \rightarrow \lambda}^{k,l,d})^{*}(\ket{p_{\gamma}^{m,n,d}} \otimes \ket{\psi}) \, .
    \end{align}
    Then we have
    \begin{align}
        \ket{p_{\gamma}^{m,n,d}} \otimes \iota_{\mu,\gamma}^{\lambda,\psi} (\ket{q_{\lambda}^{d}}) = (\id_{\mathcal{Q}_{\lambda}^{d}} \otimes U_{\rm Sch}^{k,l,d}) \left( \iota_{p_{\mu\rightarrow\lambda}^{k,l,d}} (\ket{q_{\lambda}^{d}}) \right)
    \end{align}
    for all $\ket{q_{\lambda}^{d}} \in \mathcal{Q}_{\lambda}^{d}$.
\end{proposition}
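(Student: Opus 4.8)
The plan is to reduce the claimed identity to the commuting‑square relation~\eqref{equ:commuting_isometries_relations} by unfolding the three definitions involved ($\iota_{p_{\mu\rightarrow\lambda}^{k,l,d}}$, $\iota_{\mu,\gamma}^{\lambda,\psi}$, and $\ket{p_{\mu\rightarrow\lambda}^{k,l,d}}$) and exploiting the fact that the two isometries occurring in that relation are in fact unitaries. Concretely, I would first pin down a fixed tensor‑factor ordering and abbreviate $A := \id_{\mathcal{Q}_\mu^d}\otimes U_{\rm Sch}^{k,l,d}$, $B := \bigoplus_{\gamma\vdash_d(k,l)} U_{\rm CG}^{\mu,\gamma,d}\otimes\id_{\mathcal{P}_\gamma^{k,l,d}}$, $I := I_\mu^{k,l,d}$, and $J := \bigoplus_{\lambda\in\mu+_d(k,l)} J_{\mu\rightarrow\lambda}^{k,l,d}\otimes\id_{\mathcal{Q}_\lambda^d}$, so that~\eqref{equ:commuting_isometries_relations} reads $BA = JI$. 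Since~\eqref{equ:iterated_CG_decomposition_paths} is an isomorphism, $I$ is unitary, and $B$ is unitary as a direct sum of Clebsch--Gordan transforms; left‑multiplying $BA=JI$ by $B^{*}$ and right‑multiplying by $I^{*}$ therefore gives the single identity I actually need, namely $A I^{*} = B^{*} J$.

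With that in hand I would just compute the right‑hand side of the statement. By definition $\iota_{p_{\mu\rightarrow\lambda}^{k,l,d}}(\ket{q_\lambda^d}) = I^{*}(\ket{p_{\mu\rightarrow\lambda}^{k,l,d}}\otimes\ket{q_\lambda^d})$, and applying the mixed Schur transform to the last $k+l$ legs is exactly applying $A$, so the right‑hand side equals $A I^{*}(\ket{p_{\mu\rightarrow\lambda}^{k,l,d}}\otimes\ket{q_\lambda^d}) = B^{*} J (\ket{p_{\mu\rightarrow\lambda}^{k,l,d}}\otimes\ket{q_\lambda^d})$. Next, $J$ acts block‑diagonally in $\lambda$ as $J_{\mu\rightarrow\lambda}^{k,l,d}\otimes\id_{\mathcal{Q}_\lambda^d}$, and $J_{\mu\rightarrow\lambda}^{k,l,d}$ is unitary because~\eqref{equ:iterated_CG_decomposition_paths_isometry} is an isomorphism; hence from $\ket{p_{\mu\rightarrow\lambda}^{k,l,d}} = (J_{\mu\rightarrow\lambda}^{k,l,d})^{*}(\ket{p_\gamma^{k,l,d}}\otimes\ket{\psi})$ I get $J(\ket{p_{\mu\rightarrow\lambda}^{k,l,d}}\otimes\ket{q_\lambda^d}) = \ket{p_\gamma^{k,l,d}}\otimes\ket{\psi}\otimes\ket{q_\lambda^d}$, with $\ket{\psi}$ sitting in the multiplicity leg $\mathbb{C}^{c_{\mu,\gamma}^\lambda}$ of the $\gamma$‑summand. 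Finally, on that $\gamma$‑summand $B^{*}$ acts as $\id_{\mathcal{P}_\gamma^{k,l,d}}\otimes(U_{\rm CG}^{\mu,\gamma,d})^{*}$, so $B^{*}(\ket{p_\gamma^{k,l,d}}\otimes\ket{q_\lambda^d}\otimes\ket{\psi}) = \ket{p_\gamma^{k,l,d}}\otimes(U_{\rm CG}^{\mu,\gamma,d})^{*}(\ket{q_\lambda^d}\otimes\ket{\psi})$, and the second factor is, by the defining formula $\iota_{\mu,\gamma}^{\lambda,\psi} = (U_{\rm CG}^{\mu,\gamma,d})^{*}|_{\mathcal{Q}_\lambda^d\otimes\ket{\psi}}$, precisely $\iota_{\mu,\gamma}^{\lambda,\psi}(\ket{q_\lambda^d})$. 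This is the left‑hand side, completing the argument.

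The only genuinely non‑routine ingredient is~\eqref{equ:commuting_isometries_relations} itself; if I needed to justify it rather than cite it, I would observe that $BAI^{*}$ is an $SU(d)$‑equivariant map between the two decompositions of $\mathcal{Q}_\mu^d\otimes(\mathbb{C}^d)^{\otimes k}\otimes(\overline{\mathbb{C}^d})^{\otimes l}$ coming from~\eqref{equ:iterated_CG_decomposition_paths} and~\eqref{equ:iterated_CG_decomposition_SW_duality}, so Schur's lemma on the $\mathcal{Q}_\lambda^d$‑registers forces it to be block‑diagonal of the form $\bigoplus_\lambda(\cdot)\otimes\id_{\mathcal{Q}_\lambda^d}$ — which is exactly the shape that allows one to read off $J_{\mu\rightarrow\lambda}^{k,l,d}$ and check its unitarity via a dimension count against~\eqref{equ:iterated_CG_decomposition_paths_isometry}. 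Beyond this, I expect the main obstacle to be pure bookkeeping: keeping the ordering of the $\mathcal{Q}_\mu^d$ leg, the walled‑Brauer register $\mathcal{P}_\gamma^{k,l,d}$, the multiplicity leg $\mathbb{C}^{c_{\mu,\gamma}^\lambda}$, and the output $\mathcal{Q}_\lambda^d$ consistent when translating between the ``iterated simple Clebsch--Gordan over paths'' picture and the ``mixed Schur transform plus general Clebsch--Gordan'' picture, and making sure $\ket{\psi}$ is threaded into the correct slot; once a convention is fixed, each step above is a one‑line check.
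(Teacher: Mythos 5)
Your proposal is correct and is essentially the paper's own argument: both proofs hinge on the commuting square \eqref{equ:commuting_isometries_relations} and unfold the definitions of $\iota_{p_{\mu\rightarrow\lambda}^{k,l,d}}$, $\ket{p_{\mu\rightarrow\lambda}^{k,l,d}}$, and $\iota_{\mu,\gamma}^{\lambda,\psi}=(U_{\rm CG}^{\mu,\gamma,d})^{*}|_{\mathcal{Q}_{\lambda}^{d}\otimes\ket{\psi}}$, then compare the two sides. The only difference is cosmetic — you rearrange $BA=JI$ into $AI^{*}=B^{*}J$ using unitarity of $I$, $B$ and $J_{\mu\rightarrow\lambda}^{k,l,d}$, whereas the paper effectively takes adjoints of the same relation and compares the two resulting expressions.
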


\begin{remark}
    Proposition~\ref{prop:paths_embedding} tells us that, by choosing the right superposition of paths $\ket{p_{\mu\rightarrow\lambda}^{k,l,d}} \in \mathcal{P}_{\mu\rightarrow\lambda}^{k,l,d}$ and then applying the transformation from right hand side to left hand side in Equation~\eqref{equ:iterated_CG_decomposition_paths}, we get the embedding $\iota_{\mu,\gamma}^{\lambda,\psi}$ (modulo a mixed Schur transform on the $k,l$ copies of $\mathbb{C}^{d}$ and $\overline{\mathbb{C}^{d}}$). However, since Equation~\eqref{equ:iterated_CG_decomposition_paths} came from Equation~\eqref{equ:iterated_CG_decomposition_simple}, we have reduced the problem to a repeated application of inverse simple Clebsch--Gordan transforms, in a superposition of paths given by $\ket{p_{\mu\rightarrow\lambda}^{k,l,d}} = (J_{\mu \rightarrow \lambda}^{k,l,d})^{*}(\ket{p_{\gamma}^{m,n,d}} \otimes \ket{\psi})$.
\end{remark}

\begin{proof}
    We interpret the embedding $\iota_{\mu,\gamma}^{\lambda,\psi}$ as the restriction of $(U_{\rm CG}^{\mu,\gamma})^{*}$ to the subspace $\mathcal{Q}_{\lambda}^{d} \otimes \text{span}(\ket{\psi})$, that is
    \begin{align}
        \label{equ:proof_paths_embedding_1}
        \ket{p_{\gamma}^{m,n,d}} \otimes \iota_{\mu,\gamma}^{\lambda,\psi} (\ket{q_{\lambda}^{d}}) = \ket{p_{\gamma}^{m,n,d}} \otimes(U_{\rm CG}^{\mu,\gamma})^{*}(\ket{q_{\lambda}^{d}}\otimes\ket{\psi}) \, .
    \end{align}
    On the other hand, we have
    \begin{align}
        \iota_{p_{\mu\rightarrow\lambda}^{k,l,d}} (\ket{q_{\lambda}^{d}}) = (I_{\mu}^{k,l,d})^{*}\left(\ket{p_{\mu\rightarrow\lambda}^{k,l,d}} \otimes \ket{q_{\lambda}^{d}}\right) = (I_{\mu}^{k,l,d})^{*}\left((J_{\mu \rightarrow \lambda}^{k,l,d})^{*}(\ket{p_{\gamma}^{m,n,d}} \otimes \ket{\psi}) \otimes \ket{q_{\lambda}^{d}}\right) \, .
    \end{align}
    By the commutativity of the diagram given in Figure~\ref{fig:commuting_isometries_relations}, this means that
    \begin{align}
        \label{equ:proof_paths_embedding_2}
        \iota_{p_{\mu\rightarrow\lambda}^{k,l,d}} (\ket{q_{\lambda}^{d}}) = (\id_{\mathcal{Q}_{\mu}^{d}}\otimes (U_{\rm Sch}^{k,l,d})^{*})\left(\ket{p_{\gamma}^{m,n,d}} \otimes(U_{\rm CG}^{\mu,\gamma})^{*}(\ket{q_{\lambda}^{d}}\otimes\ket{\psi})\right) \, .
    \end{align}
    Comparing Equations~\eqref{equ:proof_paths_embedding_1} and \eqref{equ:proof_paths_embedding_2} gives the proposition.
\end{proof}

To show that $\iota_{p_{\mu\rightarrow\lambda}^{k,l,d}}$ is efficiently implementable, we need an efficient implementation of the simple (dual) Clebsch--Gordan transforms, which was first given in \cite{Bacon_2005}, with an improvement in gate complexity given in \cite{CerveroMartin_2024}.

\begin{proposition}
    \label{prop:superposition_CG_transforms}
    Let $1\leq r \leq m+n$. Then the unitaries given by
    \begin{align}
        U_{\rm CG}^{m,n,d,r} := \bigoplus_{\substack{\gamma\vdash_{d} (m,n) \\ l(\gamma) \leq r}} U_{\rm CG}^{\gamma, \, \ydiagram{1}} \, , \\
        U_{\rm dCG}^{m,n,d,r} := \bigoplus_{\substack{\gamma\vdash_{d} (m,n) \\ l(\gamma) \leq r}} U_{\rm CG}^{\gamma, \, \overline{\ydiagram{1}}} \, ,
    \end{align}
    can be implemented up to error $\epsilon$ in operator norm with memory complexity $M$ and gate complexity $T$, where
    \begin{align}
        &M=O\big(rd\log_{2}^{p}(d,m,n,1/\epsilon)\big) \, ,\\
        \label{equ:gate_complexity_reduced}
        &T=O\big(r^3d\log_{2}^{p}(d,m,n,1/\epsilon)\big) \, ,
    \end{align}
    with $p\approx 1.44$.
\end{proposition}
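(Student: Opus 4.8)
\emph{Plan.} I would prove this by importing the row-by-row implementation of the simple Clebsch--Gordan transform from \cite{Bacon_2005}, with the gate-count refinement of \cite{CerveroMartin_2024}, and then tracking how every cost scales with the maximal length $r$ of the Gel'fand--Tsetlin (GT) patterns that can occur. Recall that a basis vector of $\mathcal{Q}_{\gamma}^{d}$ is a GT pattern, i.e.\ a chain of interlacing staircases $\gamma = \gamma^{(d)}, \gamma^{(d-1)}, \dots, \gamma^{(1)}$, and that $U_{\rm CG}^{\gamma,\ydiagram{1}}$ is realised by ``threading'' the extra box carried by the $\mathbb{C}^{d}$ factor through the rows $k = d, d-1, \dots, 1$: at row $k$ one performs, conditioned on $\gamma^{(k)}$, $\gamma^{(k-1)}$ and the box position passed down from row $k+1$, a Clebsch--Gordan rotation that fixes the new row $\gamma'^{(k)}$ and the box position handed on to row $k-1$, the rotation angles being reduced Wigner coefficients evaluated coherently from $\gamma^{(k)}$ and $\gamma^{(k-1)}$. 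The direct sum $\bigoplus_{\gamma}$ costs nothing extra: $\gamma$ (and the output label $\gamma'$) merely occupy a label register on which the whole procedure is controlled, exactly as in the ``variable input'' transforms $U_{\rm CG}^{m,n,d}$ and $U_{\rm dCG}^{m,n,d}$ of the preliminaries. The dual transform $U_{\rm dCG}^{m,n,d,r}$ is treated identically, with the removal of a box in place of its addition.

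\emph{Key step: $r$-sparsity bookkeeping.} If $l(\gamma)\le r$ then, because $\gamma^{(k)}$ interlaces $\gamma^{(k+1)}$ we have $l(\gamma^{(k)})\le l(\gamma^{(k+1)})$, so every row has at most $\min(k,r)$ nonzero entries, each of absolute value at most $\max(m,n)$. Hence the GT register takes only $O\big(\log_{2}(m,n)\sum_{k=1}^{d}\min(k,r)\big)=O(rd\log_{2}(m,n))$ qubits, and together with the arithmetic workspace and the precision registers this yields $M=O(rd\log_{2}^{p}(d,m,n,1/\epsilon))$. For the gate count, at each of the $O(d)$ rows only the at most $\min(k,r)$ nonzero entries of $\gamma^{(k)}$ and their neighbours in $\gamma^{(k-1)}$ are touched, so the row-$k$ step is a Clebsch--Gordan rotation on a space of dimension $O(r)$ together with the coherent evaluation of the associated reduced Wigner coefficients on numbers of $O(r\log_{2}(m,n))$ bits; re-running the complexity count of \cite{CerveroMartin_2024} with $r$ in place of $d$ as the width of the active window gives $O(r^{3})$ elementary gates per row, up to a multiplicative precision factor, hence $O(r^{3}d)$ elementary gates in total up to the same factor.

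\emph{Error control and the main obstacle.} The full circuit is a composition of $\text{poly}(d,m,n)$ reversible-arithmetic and single-qubit-rotation subroutines; synthesising each to operator-norm error $\epsilon/\text{poly}(d,m,n)$ and using subadditivity of the operator norm under composition bounds the total error by $\epsilon$, at the price of a multiplicative $\log_{2}^{p}(1/\epsilon)$ overhead per subroutine with $p\approx 1.44$, the gate-synthesis and coherent-floating-point exponent of \cite{Bacon_2005,CerveroMartin_2024}. The number of subroutines and the bit-lengths of the numbers involved are polynomial in $d,m,n$ and absorb into the $\log_{2}^{p}(d,m,n,1/\epsilon)$ factor, giving the stated $M$ and $T$. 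I expect the main obstacle to be not the skeleton of the argument, which is essentially that of \cite{Bacon_2005,CerveroMartin_2024}, but two bookkeeping points: checking that all intermediate GT patterns remain of length $\le r+1$ so the sparse encoding is never violated, and pinning down that the per-row cost really does collapse to $O(r^{3})$ under the restriction — which requires carrying the $r$-dependence faithfully through the coherent-arithmetic subroutines of \cite{CerveroMartin_2024} rather than quoting their $d$-dependent bounds verbatim.
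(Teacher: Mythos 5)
Your proposal is correct and follows essentially the same route as the paper, whose entire proof is a deferral to the analysis in Section IV.C of \cite{CerveroMartin_2024}: the row-by-row reduced-Wigner cascade of \cite{Bacon_2005} with the $r$-sparse Gel'fand--Tsetlin encoding and coherent-arithmetic gate counts of \cite{CerveroMartin_2024}. You simply spell out the $r$-dependence bookkeeping that the paper offloads to that citation, and your identification of the remaining checks (intermediate pattern lengths and carrying $r$ through the arithmetic subroutines) matches what that reference establishes.
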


\begin{remark}
    In essence, Proposition~\ref{prop:superposition_CG_transforms} tells us that we efficiently perform the simple (dual) Clebsch--Gordan transforms in superposition, conditioned on a quantum register labeling the staircase $\gamma$, and with an additional improvement in memory and y when we have the promise $l(\gamma)\leq r$.
\end{remark}

\begin{proof}
    See the analysis in Section IV.C of \cite{CerveroMartin_2024}.
\end{proof}

Proposition~\ref{prop:superposition_CG_transforms} now gives an easy algorithm for efficiently implementing the embeddings $\iota_{p_{\mu\rightarrow\lambda}^{k,l,d}}$.

\begin{proposition}
    \label{prop:implement_paths_embedding}
    Let $\lambda\vdash_{d} m$ and $\mu\vdash_{d} n$. Given the resource state $\ket{p_{\mu\rightarrow\lambda}^{k,l,d}}$, the embedding $\iota_{p_{\mu\rightarrow\lambda}^{k,l,d}}$ can be implemented up to error $\epsilon$ in operator norm with memory complexity $M$ and gate complexity $T$, where
    \begin{align}
        &M=O\big(dr log_{2}^{p}(d,m,n,k,l,1/\epsilon)\big) + M_{p_{\mu\rightarrow\lambda}^{k,l,d}} \, ,\\
        \label{equ:gate_complexity_reduced}
        &T=O\big((k+l)dr^{3} \log_{2}^{p}(d,m,n,k,l,1/\epsilon)\big) + T_{p_{\mu\rightarrow\lambda}^{k,l,d}} \, ,
    \end{align}
    with streaming output on the register $(\mathbb{C}^{d})^{\otimes k} \otimes (\overline{\mathbb{C}^{d}})^{\otimes l}$. In the above equation, we have $p\approx 1.44$, $r=\max\limits_{0\leq i \leq k+l}l(\nu^{i})$, $M_{p_{\mu\rightarrow\lambda}^{k,l,d}}$ is the memory required to store $\ket{p_{\mu\rightarrow\lambda}^{k,l,d}}$ and $T_{p_{\mu\rightarrow\lambda}^{k,l,d}}$ are the gates required to retrieve the entries of $\nu^i$ from the encoding of $p_{\mu\rightarrow\lambda}^{k,l,d} = (\nu^0,...,\nu^{k+l})$.
\end{proposition}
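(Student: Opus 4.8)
The plan is to unfold the embedding $\iota_{p_{\mu\rightarrow\lambda}^{k,l,d}}$ into a sequence of $k+l$ controlled inverse simple (dual) Clebsch--Gordan transforms and then invoke Proposition~\ref{prop:superposition_CG_transforms} for each of them. First I would use the definition $\iota_{p_{\mu\rightarrow\lambda}^{k,l,d}} = (I_{\mu}^{k,l,d})^{*}|_{\spn(\ket{p_{\mu\rightarrow\lambda}^{k,l,d}})\otimes\mathcal{Q}_{\lambda}^{d}}$ together with the construction of $I_{\mu}^{k,l,d}$ behind Equation~\eqref{equ:iterated_CG_decomposition_simple}: $I_{\mu}^{k,l,d}$ is the composition of $k$ simple Clebsch--Gordan transforms (appending the first $k$ copies of $\mathbb{C}^{d}$) followed by $l$ simple dual transforms (appending the $l$ copies of $\overline{\mathbb{C}^{d}}$), each applied coherently and controlled on the path registers $\nu^{0},\dots,\nu^{k+l}$ that record the intermediate staircases. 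Hence $(I_{\mu}^{k,l,d})^{*}$ is the reverse composition of the adjoints of these controlled transforms, and when the path register is prepared in $\ket{p_{\mu\rightarrow\lambda}^{k,l,d}}$ --- in general a superposition of paths --- running this circuit on $\ket{q_{\lambda}^{d}}$ reproduces $\iota_{p_{\mu\rightarrow\lambda}^{k,l,d}}(\ket{q_{\lambda}^{d}})$ exactly, branch by branch. The fact that Proposition~\ref{prop:superposition_CG_transforms} gives the simple (dual) transform controlled on a register labelling the staircase $\gamma$ is exactly what makes this assembly possible.

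Next I would spell out the streaming circuit, which is precisely Figure~\ref{fig:embedding_diagram}: starting from the unitary register carrying $\mathcal{Q}_{\nu^{k+l}}^{d}=\mathcal{Q}_{\lambda}^{d}$ together with the staircase label $\lambda$, apply for $i=k+l,\dots,k+1$ an adjoint simple dual transform controlled on $(\nu^{i},\nu^{i-1})$ --- mapping $\mathcal{Q}_{\nu^{i}}^{d}$ into $\mathcal{Q}_{\nu^{i-1}}^{d}\otimes\overline{\mathbb{C}^{d}}$ and emitting one $\overline{\mathbb{C}^{d}}$ copy --- and then for $i=k,\dots,1$ an adjoint simple transform controlled on $(\nu^{i},\nu^{i-1})$, emitting one $\mathbb{C}^{d}$ copy, until only the register carrying $\mathcal{Q}_{\nu^{0}}^{d}=\mathcal{Q}_{\mu}^{d}$ remains. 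Since each emitted copy leaves the processor immediately, the output on $(\mathbb{C}^{d})^{\otimes k}\otimes(\overline{\mathbb{C}^{d}})^{\otimes l}$ is produced in a streaming manner, and the workspace of one simple transform can be freed and reused for the next one.

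For the resource count I would argue as follows. At every step the controlling staircase $\nu^{i}$ lies in the support of $\ket{p_{\mu\rightarrow\lambda}^{k,l,d}}$, so $l(\nu^{i})\le r$ with $r=\max_{0\le i\le k+l}l(\nu^{i})$, and Proposition~\ref{prop:superposition_CG_transforms} applies with length promise $r$: one (dual) simple transform to accuracy $\epsilon'$ in operator norm costs $O\big(rd\log_{2}^{p}(d,m,n,k,l,1/\epsilon')\big)$ memory and $O\big(r^{3}d\log_{2}^{p}(d,m,n,k,l,1/\epsilon')\big)$ gates. There are $k+l$ of them; reusing the workspace gives total memory $O\big(dr\log_{2}^{p}(\cdot)\big)+M_{p_{\mu\rightarrow\lambda}^{k,l,d}}$, the last term for holding the resource state, and total gate count $O\big((k+l)dr^{3}\log_{2}^{p}(\cdot)\big)+T_{p_{\mu\rightarrow\lambda}^{k,l,d}}$, the last term for extracting the entries of the $\nu^{i}$ from the encoding of $p_{\mu\rightarrow\lambda}^{k,l,d}$ to drive the controls. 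Taking $\epsilon'=\epsilon/(k+l)$ makes the accumulated operator-norm error at most $\epsilon$ by subadditivity of the error under composition, and $\log(1/\epsilon')$ is polylogarithmic in $k,l,1/\epsilon$, hence absorbed into $\log_{2}^{p}(d,m,n,k,l,1/\epsilon)$; this gives the stated bounds.

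The step that needs the most care is the first one: one must verify that the coherent, path-register-controlled composition of the adjoint simple (dual) transforms really equals $(I_{\mu}^{k,l,d})^{*}$ on $\spn(\ket{p_{\mu\rightarrow\lambda}^{k,l,d}})\otimes\mathcal{Q}_{\lambda}^{d}$ --- keeping track of the correct ordering of the emitted $\mathbb{C}^{d}$ and $\overline{\mathbb{C}^{d}}$ tensor factors, the correct behaviour on superpositions of paths, and the fact that every intermediate staircase is itself of type $\vdash_{d}(\cdot,\cdot)$ with length at most $r$ so that Proposition~\ref{prop:superposition_CG_transforms} is uniformly applicable (the varying first two superscripts of the simple transforms only enter the complexity through logs). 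Once this bookkeeping is settled the remaining estimates are the routine composition bounds above, so the proposition is essentially an assembly of Proposition~\ref{prop:paths_embedding} and Proposition~\ref{prop:superposition_CG_transforms}.
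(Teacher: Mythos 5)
Your proposal is correct and follows essentially the same route as the paper's proof: write $(I_{\mu}^{k,l,d})^{*}$ as the reverse composition of $k+l$ inverse simple (dual) Clebsch--Gordan transforms conditioned on the staircase register with length promise $r$, apply Proposition~\ref{prop:superposition_CG_transforms} to each while reusing the workspace, feed in the resource state by the definition of $\iota_{p_{\mu\rightarrow\lambda}^{k,l,d}}$, and rescale $\epsilon\rightarrow\epsilon/(k+l)$ so the extra factor is absorbed into the polylogarithm. Your extra bookkeeping remarks (streaming emission of the $\mathbb{C}^{d}$ and $\overline{\mathbb{C}^{d}}$ copies, ordering of tensor factors, uniform applicability of the length bound along the path) are consistent with, and slightly more explicit than, the paper's argument.
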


\begin{remark}
    If we store the resource state $\ket{p_{\mu\rightarrow\lambda}^{k,l,d}}$ with the same encoding we use for all other algorithms, that is we store each individual staircase $\nu$ as a sequence of $d$ integers with $r$ nonzero entries (see \cite{Harrow_Th2005,Grinko_2023,Nguyen_2023}), then we get
    \begin{align}
        M_{p_{\mu\rightarrow\lambda}^{k,l,d}} = O((k+l)r\log_{2}(m+k,n+l)) \quad , \quad T_{p_{\mu\rightarrow\lambda}^{k,l,d}}=0 \, .
    \end{align}
\end{remark}

\begin{remark}
    Together, Propositions~\ref{prop:paths_embedding} and~\ref{prop:implement_paths_embedding} give an algorithm for implementing the embedding $\iota_{\mu,\overline{\gamma}}^{\lambda,\psi}$, which in turn can be used to implement the unitary-equivariant quantum channel $\Phi_{\lambda,\mu}^{\gamma,\psi}$. The only open problem is then finding and initializing the correct vector $\ket{p_{\mu\rightarrow\lambda}^{k,l,d}}$.
\end{remark}

\begin{proof}
    Proposition \ref{prop:superposition_CG_transforms} tells us that $U_{\rm CG}^{m,n,d,r}$ and $U_{\rm dCG}^{m,n,d,r}$ are efficiently implementable with memory and gate complexities $M$ and $T$. Therefore, the isometry $I_{\mu}^{k,l,d}$, given by repeated Clebsch--Gordan transforms as
    \begin{align}
        I_{\mu}^{k,l,d} = U_{\rm dCG}^{m+k,n+l-1,r} ... \left(\id_{\overline{\mathbb{C}^{d}}}^{\otimes l} \otimes \id_{\mathbb{C}^{d}}^{\otimes k-1} \otimes U_{\rm CG}^{m,n,d,r} \right),
    \end{align}
    is efficiently implementable as well, with memory and gate complexities $M$ and $(k+l)T$. The inverse $(I_{\mu}^{k,l,d})^{*}$ has the same memory and gate complexities by simply inverting the elementary gates and applying them in reverse order. By definition, we have
    \begin{align}
        \iota_{p_{\mu \rightarrow \lambda}^{k,l,d}}(\ket{q_{\lambda}^{d}}):=(I_{\mu}^{k,l,d})^{*}\left(\ket{p_{\mu \rightarrow \lambda}^{k,l,d}} \otimes \ket{q_{\lambda}^{d}}\right) \, .
    \end{align}
    We still need to rescale $\epsilon \rightarrow \epsilon / (k+l)$ to keep the total error below $\epsilon$. This introduces terms of the size $\text{poly}(k,l)$ into the logarithm.
\end{proof}

\subsection{General ansatz}

According to Theorem \ref{thm:operational_interpretation_extremal_quantum channels}, we implement an extremal quantum channel $\Phi\in \mathcal{C}_{up}^{d}(m,n)$ via the quantum channels $\Phi_{\rm USS}^{m}$, $(\Phi_{\rm USS}^{n})^{*}$ and $\Phi_{\lambda,\mu}^{\gamma,\rho}$. We first want to apply unitary Schur sampling to obtain an irrep label $\lambda$, together with a state on the unitary irrep $\mathcal{Q}_{\lambda}^{d}$. Conditioned on $\lambda$, we apply the quantum channel $\Phi_{\lambda,\mu_{\lambda}}^{\gamma_{\lambda},\rho_{\lambda}}$ to obtain a state on the unitary irrep $\mathcal{Q}_{\mu_{\lambda}}^{d}$. Finally, we apply the adjoint unitary Schur sampling quantum channel. The relevant circuit diagram is given in Figure \ref{fig:factoring_circuit_diagram}. The remaining question is now how to efficiently implement the quantum channels $\Phi_{\rm USS}^{m}$, $(\Phi_{\rm USS}^{n})^{*}$ and $\Phi_{\lambda,\mu}^{\gamma,\rho}$.

\begin{figure}[h!]
    \centering
    \scalebox{1}{\input{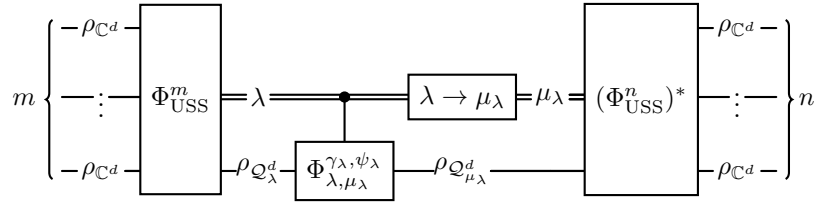}}
    \caption{Circuit diagram showing the factoring of quantum channels in Theorem \ref{thm:operational_interpretation_extremal_quantum channels}.}
    \label{fig:factoring_circuit_diagram}
\end{figure}

\begin{proposition}[Theorem 1 of \cite{CerveroMartin_2024}]
    \label{prop:unitary_Schur_sampling_implementation}
    Let $\sigma$ be a state on $S^{\otimes m}\subseteq (\mathbb{C}^{d})^{\otimes m}$, with $\dim S=r$. Then we implement the operation $\sigma\rightarrow\Phi_{\rm USS}^{m}(\sigma)$ in a streaming manner up to error $\epsilon$ in diamond norm with memory complexity $M$ and gate complexity $T$, where
    \begin{align}
        &M=O\big(rd\log_{2}^{p}(d,m,1/\epsilon)\big) \, ,\\
        \label{equ:gate_complexity_reduced}
        &T=O\big(mr^3d\log_{2}^{p}(d,m,1/\epsilon)\big) \, ,
    \end{align}
    with $p\approx 1.44$.
\end{proposition}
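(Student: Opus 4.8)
The plan is to realize $\Phi_{\rm USS}^{m}$ as a streamed sequence of length-restricted simple Clebsch--Gordan transforms and then read the cost bounds off Proposition~\ref{prop:superposition_CG_transforms}. Recall from \cite{Bacon_2005} (see Figure~\ref{fig:Schur_transform_diagram} with $n=0$) that the Schur transform $U_{\rm Sch}^{m,d}$ factors as a composition of $m-1$ simple Clebsch--Gordan transforms: the $i$-th one acts on the running irrep register $\mathcal{Q}_{\gamma^{i}}^{d}$ together with one fresh copy of $\mathbb{C}^{d}$, is controlled on the staircase label $\gamma^{i}\vdash_{d}i$, and outputs the new label $\gamma^{i+1}\vdash_{d}(i{+}1)$ and the register $\mathcal{Q}_{\gamma^{i+1}}^{d}$. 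The structural point is that $\gamma^{i}$ is never touched again after this step, and that $\Phi_{\rm USS}^{m}$ discards the whole permutation register $\mathcal{P}_{\lambda}^{m}$ — whose Gel'fand--Tsetlin label is precisely the path $(\gamma^{1},\dots,\gamma^{m-1})$. Hence each intermediate $\gamma^{i}$ may be measured (equivalently, traced out) as soon as the $(i{+}1)$-th transform has consumed it, instead of being kept to the end. This is the unitary Schur sampling trick: the input copies of $\mathbb{C}^{d}$ are fed in one at a time, and at every moment the working registers are only the current irrep register, one staircase label, and one incoming $\mathbb{C}^{d}$ — so the implementation is streaming, and its memory is controlled by the size of a single irrep register.

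Next I would pin down the length promise. Since $\sigma$ is supported on $S^{\otimes m}$ with $\dim S=r$, and the Schur transform is equivariant under $U(d)$ (every $V\in U(d)$ preserving each $\mathrm{SU}(d)$-isotypic component up to a global phase, and $S^{\otimes m}=V^{\otimes m}(\text{span}(e_{1},\dots,e_{r}))^{\otimes m}$ for a suitable $V$), the support of $\sigma$ lies in the sector of staircases $\lambda$ with $l(\lambda)\le r$. Moreover $l(\gamma^{i})$ is non-decreasing along a Gel'fand--Tsetlin path (adding a box cannot shorten a diagram), so $l(\gamma^{i})\le l(\lambda)\le r$ at every step of the stream. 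Therefore each irrep register can be stored with the floating-point Gel'fand--Tsetlin encoding of \cite{Bacon_2005,Grinko_2023,Nguyen_2023} in $O\big(rd\log_{2}^{p}(d,m,1/\epsilon)\big)$ qubits, and the $i$-th transform can be replaced by its length-restricted version $U_{\rm CG}^{i,0,d,r}$. By Proposition~\ref{prop:superposition_CG_transforms} each such transform costs $O\big(rd\log_{2}^{p}(d,m,1/\epsilon)\big)$ memory and $O\big(r^{3}d\log_{2}^{p}(d,m,1/\epsilon)\big)$ gates; the auxiliary label and $\mathbb{C}^{d}$ registers are lower-order, so the total memory is $O\big(rd\log_{2}^{p}(d,m,1/\epsilon)\big)$, and summing the gate cost over the $m-1$ steps gives $O\big(mr^{3}d\log_{2}^{p}(d,m,1/\epsilon)\big)$.

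Finally I would assemble the error bound. Each simple Clebsch--Gordan step is implemented to operator-norm error $\epsilon/(2m)$, which only inflates the polylogarithm by a $\text{poly}(m)$ factor that is absorbed into $\log_{2}^{p}$, while the measure-and-discard steps are exact; passing from unitaries to channels costs a factor $2$ per block, and by submultiplicativity and subadditivity of the diamond norm under composition the streamed channel differs from $\sigma\mapsto\Phi_{\rm USS}^{m}(\sigma)$ by at most $\epsilon$. I expect the main obstacle to be low-level bookkeeping rather than any conceptual hurdle: one must verify that the mid-circuit measurement of each $\gamma^{i}$ genuinely commutes past all remaining gates (it does, since $\gamma^{i}$ appears thereafter only as a classical control and is traced out in any case), and that the coherent floating-point arithmetic inside the Clebsch--Gordan transforms runs at the stated precision without an ancilla blow-up — exactly the fine-grained accounting carried out in Section~IV.C of \cite{CerveroMartin_2024}, which I would cite for the constants and for the exponent $p\approx 1.44$.
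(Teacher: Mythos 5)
Your proposal is correct and follows essentially the same route the paper relies on: the paper gives no independent proof of this proposition, citing Theorem~1 of \cite{CerveroMartin_2024}, and your argument — the Bacon--Chuang--Harrow cascade of simple Clebsch--Gordan transforms, streamed by measuring each exhausted label $\gamma^{i}$ early (valid because the discarded path register is exactly what $\Phi_{\rm USS}^{m}$ traces out), the length promise $l(\gamma^{i})\leq r$ from the support on $S^{\otimes m}$, and per-step costs from Proposition~\ref{prop:superposition_CG_transforms} with an $\epsilon/\mathrm{poly}(m)$ error split absorbed into $\log_{2}^{p}$ — is precisely the construction sketched in Figures~\ref{fig:Schur_transform_diagram}--\ref{fig:USS_diagram} and carried out in the cited work. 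No gaps beyond the low-level constants you correctly defer to Section~IV.C of \cite{CerveroMartin_2024}.
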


\begin{proposition}
    \label{prop:symmetrization_quantum channel_implementation}
    Let $\sigma \in \bigoplus_{\lambda\vdash_{r'}n}\mathcal{B}(\mathcal{Q}_{\lambda}^{d})$. Then we implement the operation $\sigma\rightarrow(\Phi_{\rm USS}^{n})^{*}(\sigma)$ in a streaming manner up to error $\epsilon$ in diamond norm with memory complexity $M$ and gate complexity $T$, where
    \begin{align}
        &M=O\big(r'd\log_{2}^{p}(d,n,1/\epsilon)\big) \, ,\\
        \label{equ:gate_complexity_reduced}
        &T=O\big(n(r')^3d\log_{2}^{p}(d,n,1/\epsilon)\big) \, ,
    \end{align}
    with $p\approx 1.44$.
\end{proposition}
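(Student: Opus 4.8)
The plan is to execute the algorithm of Proposition~\ref{prop:unitary_Schur_sampling_implementation} backwards, i.e. to run the reversed unitary Schur sampling circuit of Figure~\ref{fig:dUSS_diagram} in the pure case (no dual Clebsch--Gordan transforms) and to bound its cost using Proposition~\ref{prop:superposition_CG_transforms}. By Proposition~\ref{prop:dual_USS_quantum channel}, $(\Phi_{\rm USS}^{n})^{*}$ acts on the block $\sigma_{\nu}$ supported on $\mathcal{Q}_{\nu}^{d}$ by appending the maximally mixed state $\tfrac{1}{\dim\mathcal{P}_{\nu}^{n}}\id_{\mathcal{P}_{\nu}^{n}}$ on the permutation register and applying the unitary $(U_{\rm Sch}^{n,d})^{*}$. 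Since the Gel'fand--Tsetlin vectors $\ket{p_{\nu}^{n}}$ of $\mathcal{P}_{\nu}^{n}$ are orthonormal, $\tfrac{1}{\dim\mathcal{P}_{\nu}^{n}}\id_{\mathcal{P}_{\nu}^{n}}$ is the average of $\ketbra{p_{\nu}^{n}}{p_{\nu}^{n}}$ over a uniformly random path $p_{\nu}^{n}$, so it suffices to classically sample such a path, prepare $\ket{p_{\nu}^{n}}$ on the permutation register, and implement the isometry $\ket{q}\mapsto(U_{\rm Sch}^{n,d})^{*}(\ket{p_{\nu}^{n}}\otimes\ket{q})$ coherently; averaging over the sampled path then reproduces the channel exactly. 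Following~\cite{Bacon_2005} and Figure~\ref{fig:dUSS_diagram}, I would realise $(U_{\rm Sch}^{n,d})^{*}$ as the composition of the $n-1$ inverse simple Clebsch--Gordan transforms $(U_{\rm CG}^{i-1,0,d})^{*}$ for $i=n,n-1,\dots,2$, each controlled on the consecutive staircases $\gamma^{i-1},\gamma^{i}$ of the sampled path and each splitting off one factor $\mathbb{C}^{d}$; that factor is a final output register and is released immediately, which makes the implementation streaming.

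A uniformly random path $(\gamma^{0},\dots,\gamma^{n})$ with $\gamma^{n}=\nu$ is generated backwards, choosing at step $i$ a predecessor $\gamma^{i-1}$ with $\gamma^{i}\in\gamma^{i-1}+_{d}\ydiagram{1}$ with probability proportional to $\dim\mathcal{P}_{\gamma^{i-1}}^{i-1}$ (the Bratteli-diagram weights, evaluated by the hook-length formula); this is classical, runs in time $\mathrm{poly}(n,d)$ (dominated by the quantum cost), and needs only $O(1)$ staircase labels in memory at a time. The point that makes the implementation efficient is a length bound: adding a box never decreases the length of a staircase, so every $\gamma^{i}$ along a path ending at a partition $\lambda$ with $l(\lambda)\le r'$ itself satisfies $l(\gamma^{i})\le r'$. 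Consequently each inverse simple Clebsch--Gordan transform invoked acts entirely within the length-$\le r'$ sector, so Proposition~\ref{prop:superposition_CG_transforms} applies with parameter $r=r'$, contributing memory $O(r'd\log_{2}^{p}(d,n,1/\epsilon))$ and gate count $O((r')^{3}d\log_{2}^{p}(d,n,1/\epsilon))$ per transform, with $p\approx 1.44$; the running unitary register $\mathcal{Q}_{\gamma^{i}}^{d}$ fits in $O(r'd\,\mathrm{polylog})$ qubits and each staircase label in $O(r'\log_{2}n)$, both subdominant.

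Composing the $n-1$ transforms sequentially and releasing the peeled-off $\mathbb{C}^{d}$ registers keeps the memory at $O(r'd\log_{2}^{p}(d,n,1/\epsilon))$, while the gate count multiplies to $O(n(r')^{3}d\log_{2}^{p}(d,n,1/\epsilon))$, as claimed. For the error, I would run each simple transform to operator-norm accuracy $\epsilon/(2n)$; the triangle inequality for the implementing isometry, the standard bound relating the diamond-norm error of a channel to the operator-norm error of its Stinespring dilation, and convexity of the diamond norm under the classical averaging over paths then give overall diamond-norm error at most $\epsilon$, with the extra $\log_{2}n$ absorbed into the polylogarithm. The only points that require care are the averaging identity for the maximally mixed permutation register and the length-monotonicity argument confining all intermediate staircases to the length-$r'$ sector; beyond that the proof is the reverse of the analysis in Section~IV.C of~\cite{CerveroMartin_2024}.
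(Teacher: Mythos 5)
Your proposal is correct and follows essentially the same route as the paper: measure/condition on the irrep label, rewrite the maximally mixed permutation register as a uniform classical average over Gel'fand--Tsetlin paths, sample a path classically in reverse order, and realize $(U_{\rm Sch}^{n,d})^{*}$ as a streaming sequence of inverse simple Clebsch--Gordan transforms whose cost is bounded by Proposition~\ref{prop:superposition_CG_transforms} with $r=r'$ (your length-monotonicity observation is exactly why the paper may use $r'$ there). The only real deviation is the sampling subroutine: you sample each predecessor with probability proportional to $\dim\mathcal{P}_{\gamma^{i-1}}^{i-1}$ computed from hook lengths, whereas the paper uses the Greene--Nijenhuis--Wilf hook walk (Algorithms~\ref{alg:sample_one_box_removed}--\ref{alg:sample_one_box_removed_alternative}, Proposition~\ref{prop:complexity_sample_one_box_removed_alternative}) precisely to keep the classical cost at $O(n(r')^{2})$; your dimension-ratio method is also correct, but its cost can be on the order of $n^{2}$ arithmetic operations in the worst case, so the parenthetical claim that it is ``dominated by the quantum cost'' does not hold in all regimes (e.g.\ constant $d,r'$), even though it remains classical and polynomial and hence does not affect the stated quantum gate and memory bounds.
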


\begin{proof}
    With the encoding of $\sigma$ given in \cite{Bacon_2005} and used throughout this work, the label $\lambda$ can be measured without extra memory or gate complexity. This reduces the problem to applying $(\Phi_{\rm USS}^{n})^{*}(\rho)$ to a state $\rho\in\mathcal{Q}_{\mu}^{d}$. We see that
    \begin{align}
        \label{equ:proof_symmetrization_quantum channel_implementation_1}
        (\Phi_{\rm USS}^{n})^{*}(\rho) = (U_{\rm Sch}^{n,d})^{*} \left(\frac{1}{\dim \mathcal{P}_{\mu}^{n}}\id_{\mathcal{P}_{\mu}^{n}} \otimes \rho\right) U_{\rm Sch}^{n,d} = \frac{1}{\dim \mathcal{P}_{\mu}^{n}}\sum_{p_{\mu}^{n}\in\mathcal{P}_{\mu}^{n}} (U_{\rm Sch}^{n,d})^{*} \left(\ketbra{p_{\mu}^{n}}{p_{\mu}^{n}} \otimes \rho\right) U_{\rm Sch}^{n,d} \, ,
    \end{align}
    where the sum is over the Gel'fand - Tsetlin basis of $\mathcal{P}_{\mu}^{n}$. By \cite{Bacon_2005}, we know that $U_{\rm Sch}^{n,d}$ is equivalent to $n$ repeated Clebsch--Gordan transforms. In this sense, we interpret the Schur transform as a special case of Equation~\eqref{equ:iterated_CG_decomposition_paths} with $\mu=\emptyset$, $k=n$ and $l=0$, and we see that $U_{\rm Sch}^{n,d}=I_{\emptyset \rightarrow \lambda}^{0,n,d}$. In particular, we see that $\mathcal{P}_{\emptyset \rightarrow \lambda}^{0,n,d} = \mathcal{P}_{\lambda}^{m}$ by simply identifying each path with the basis element of $\mathcal{P}_{\lambda}^{m}$ labeled by that path. By the definition of $\iota_{p_{\mu}^{n}}$ and the previous considerations, we get
    \begin{align}
        (U_{\rm Sch}^{n,d})^{*} \left(\ketbra{p_{\mu}^{n}}{p_{\mu}^{n}} \otimes \rho\right) U_{\rm Sch}^{n,d} = \iota_{p_{\mu}^{n}} \rho (\iota_{p_{\mu}^{n}})^{*} \, .
    \end{align}
    Inserting this into Equation~\eqref{equ:proof_symmetrization_quantum channel_implementation_1} above, we get
    \begin{align}
        (\Phi_{\rm USS}^{n})^{*}(\rho) = \frac{1}{\dim \mathcal{P}_{\mu}^{n}}\sum_{p_{\mu}^{n}\in\mathcal{P}_{\mu}^{n}} \iota_{p_{\mu}^{n}} \rho (\iota_{p_{\mu}^{n}})^{*} \, .
    \end{align}
    This is just a uniform probabilistic mixture of all possible embeddings. Theorem \ref{thm:algorithm_sample_GT_basis} now tells us that we classically sample $p_{\mu}^{n}\in\mathcal{P}_{\mu}^{n}$ in a streaming manner. This means that if $p_{\mu}^{n}=(\nu^{0},...,\nu^{n})$, then we only need to store $\nu^{i+1}$ and $\nu^{i}$ at the same time. In the context of Proposition \ref{prop:implement_paths_embedding} this means that $M_{p_{\mu}^{n}}=O(d\log_{2}(n))$ and $T_{p_{\mu}^{n}}=0$. In addition, Proposition \ref{prop:complexity_sample_one_box_removed_alternative} tells us that the classical complexity of sampling $p_{\mu}^{n}$ is given by $O(nr^{2})$. Applying Proposition \ref{prop:implement_paths_embedding} now gives the result.
\end{proof}

Together, Propositions \ref{prop:unitary_Schur_sampling_implementation} and \ref{prop:symmetrization_quantum channel_implementation} give us the following Theorem.

\begin{theorem}
    \label{thm:complexity_unitary_equivariant_permutation_invariant_quantum channels}
    Let $\Phi$ be an extremal point of $\mathcal{C}_{up}^{d}(m,n)$, and let $\mu_{\lambda}\vdash_{d}n$, $\gamma_{\lambda}\in \mu_{\lambda}+_{d}\overline{\lambda}$ and $\ket{\psi_{\lambda}}\in \mathbb{C}^{c^{\gamma_{\lambda}}_{\overline{\lambda},\mu_{\lambda}}}$ be as in Corollary \ref{cor:classification_extremal_points_Cusd}. Let further $\sigma$ be a state on $S^{\otimes m}\subseteq (\mathbb{C}^{d})^{\otimes m}$, with $\dim S=r$, and let $l(\mu_{\lambda})\leq r'$ for all $\lambda$ with $l(\lambda)\leq r$. Then we implement the operation $\rho\rightarrow\Phi(\rho)$ in a streaming manner up to error $\epsilon$ in diamond norm with memory complexity $M$ and gate complexity $T$, where
    \begin{align}
        &M=O\big((r+r')d\log_{2}^{p}(d,m,n,1/\epsilon)\big) + \max\limits_{\lambda\vdash_{d} m} M_{\mu_{\lambda},\overline{\gamma_{\lambda}}}^{\lambda,\psi_{\lambda}} \, ,\\
        \label{equ:gate_complexity_reduced}
        &T=O\big((mr^{3}+n(r')^{3})d\log_{2}^{p}(d,m,n,1/\epsilon)\big) + \max\limits_{\lambda\vdash_{d} m} T_{\mu_{\lambda},\overline{\gamma_{\lambda}}}^{\lambda,\psi_{\lambda}} \, .
    \end{align}
    Here, $p\approx 1.44$, and $M_{\mu,\overline{\gamma}}^{\lambda,\psi}$ and $T_{\mu,\overline{\gamma}}^{\lambda,\psi}$ denote the memory and gates necessary to implement the embedding $\iota_{\mu,\overline{\gamma}}^{\lambda,\psi}$ and tracing out $\overline{\mathcal{Q}_{\gamma}^{d}}$.
\end{theorem}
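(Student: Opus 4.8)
The plan is to combine the operational factorization of Theorem~\ref{thm:operational_interpretation_extremal_quantum channels} with the three implementation primitives established above, and then book-keep memory, gates and error across the composition. By Theorem~\ref{thm:operational_interpretation_extremal_quantum channels} we have $\Phi=(\Phi_{\rm USS}^{n})^{*}\circ \left(\bigoplus_{\lambda\vdash_{d}m}\Phi_{\lambda,\mu_{\lambda}}^{\gamma_{\lambda},\psi_{\lambda}}\right) \circ \Phi_{\rm USS}^{m}$, so it suffices to implement the three stages in sequence on the same data. For the first stage, since $\sigma$ is supported on $S^{\otimes m}$ with $\dim S=r$, Schur--Weyl duality applied to $S^{\otimes m}$ shows that only irreps $\mathcal{Q}_{\lambda}^{d}$ with $l(\lambda)\leq r$ occur; Proposition~\ref{prop:unitary_Schur_sampling_implementation} then implements $\sigma\mapsto \Phi_{\rm USS}^{m}(\sigma)$ in a streaming manner with memory $O(rd\log_{2}^{p}(d,m,1/\epsilon))$ and $O(mr^{3}d\log_{2}^{p}(d,m,1/\epsilon))$ gates, producing a classical label $\lambda$ (with $l(\lambda)\leq r$) together with a state on $\mathcal{Q}_{\lambda}^{d}$.

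For the middle stage I would use the identity $\Phi_{\lambda,\mu}^{\gamma,\psi}(\rho) = \tr_{\overline{\mathcal{Q}_{\gamma}^{d}}}\big[\iota_{\mu,\overline{\gamma}}^{\lambda,\psi}\rho(\iota_{\mu,\overline{\gamma}}^{\lambda,\psi})^{*}\big]$ from the remark after Definition~\ref{def:unitary_equivariant_quantum channel_irreps}, so that the stage is realized, conditioned on the classical register holding $\lambda$, by applying the embedding $\iota_{\mu_{\lambda},\overline{\gamma_{\lambda}}}^{\lambda,\psi_{\lambda}}$ and then tracing out $\overline{\mathcal{Q}_{\gamma_{\lambda}}^{d}}$ — this is exactly the operation whose memory and gate cost are denoted $M_{\mu_{\lambda},\overline{\gamma_{\lambda}}}^{\lambda,\psi_{\lambda}}$ and $T_{\mu_{\lambda},\overline{\gamma_{\lambda}}}^{\lambda,\psi_{\lambda}}$ (implemented concretely via Propositions~\ref{prop:paths_embedding} and~\ref{prop:implement_paths_embedding}). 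Because the control register is classical, the worst case over the branches contributes $\max_{\lambda\vdash_{d}m}M_{\mu_{\lambda},\overline{\gamma_{\lambda}}}^{\lambda,\psi_{\lambda}}$ to the memory and $\max_{\lambda\vdash_{d}m}T_{\mu_{\lambda},\overline{\gamma_{\lambda}}}^{\lambda,\psi_{\lambda}}$ to the gate count, rather than a sum; and the output is a state on $\bigoplus_{\mu:\,l(\mu)\leq r'}\mathcal{Q}_{\mu}^{d}$ together with the classical label $\mu_{\lambda}$, using the hypothesis $l(\mu_{\lambda})\leq r'$ whenever $l(\lambda)\leq r$.

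This output is precisely the input form required by Proposition~\ref{prop:symmetrization_quantum channel_implementation}, which implements $(\Phi_{\rm USS}^{n})^{*}$ in a streaming manner with memory $O(r'd\log_{2}^{p}(d,n,1/\epsilon))$ and $O(n(r')^{3}d\log_{2}^{p}(d,n,1/\epsilon))$ gates, producing the output on $(\mathbb{C}^{d})^{\otimes n}$ sequentially. Adding the three stages: the peak memory is bounded by the first-stage memory (which already contains the intermediate $\mathcal{Q}_{\lambda}^{d}$ register carried into the middle stage), plus the middle-stage working memory, plus the last-stage memory, giving $O((r+r')d\log_{2}^{p}(d,m,n,1/\epsilon)) + \max_{\lambda}M_{\mu_{\lambda},\overline{\gamma_{\lambda}}}^{\lambda,\psi_{\lambda}}$; the gate count is the sum of the three contributions, $O((mr^{3}+n(r')^{3})d\log_{2}^{p}(d,m,n,1/\epsilon)) + \max_{\lambda}T_{\mu_{\lambda},\overline{\gamma_{\lambda}}}^{\lambda,\psi_{\lambda}}$. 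Finally, to ensure the total diamond-norm error is at most $\epsilon$, run each stage to accuracy $\epsilon/3$ and use subadditivity of the diamond norm under composition; the constant is absorbed into the $\log_{2}^{p}(\cdot)$ factors.

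The step I expect to need the most care is the memory and error book-keeping in the streaming model: one must check that the three stages genuinely compose streamingly (the classical labels $\lambda$ and then $\mu_{\lambda}$ are carried through while the unitary-irrep register is consumed and re-expanded, and the output of the middle stage is exactly the input format of Proposition~\ref{prop:symmetrization_quantum channel_implementation}), that the length promises $l(\lambda)\leq r$ and $l(\mu_{\lambda})\leq r'$ propagate correctly so that the $r$- and $r'$-dependent (rather than $d$-dependent) bounds of Propositions~\ref{prop:unitary_Schur_sampling_implementation} and~\ref{prop:symmetrization_quantum channel_implementation} apply, and that conditioning on a classical register lets the middle stage's cost enter as a maximum over $\lambda$ rather than a sum. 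The remaining ingredients are direct invocations of the cited propositions.
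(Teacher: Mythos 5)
Your proposal is correct and follows essentially the same route as the paper: factor $\Phi$ via Theorem~\ref{thm:operational_interpretation_extremal_quantum channels}, implement the middle stage as the embedding $\iota_{\mu_{\lambda},\overline{\gamma_{\lambda}}}^{\lambda,\psi_{\lambda}}$ followed by a partial trace (Theorem~\ref{thm:equivalence_of_unitary_equivariant_quantum channels}), invoke Propositions~\ref{prop:unitary_Schur_sampling_implementation} and~\ref{prop:symmetrization_quantum channel_implementation} for the outer stages, and use the support restriction $l(\lambda)\leq r$ (hence $l(\mu_{\lambda})\leq r'$) to get the $r,r'$-dependent bounds. Your explicit $\epsilon/3$ error split and the max-over-classical-branches observation are minor additions consistent with the paper's bookkeeping.
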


\begin{proof}
    Theorem \ref{thm:operational_interpretation_extremal_quantum channels} and Figure \ref{fig:factoring_circuit_diagram} show us that we only need to implement the quantum channels $\Phi_{\rm USS}^{m}$, $(\Phi_{\rm USS}^{n})^{*}$ and $\Phi_{\lambda,\mu_{\lambda}}^{\gamma_{\lambda},\psi_{\lambda}}$ for some $\lambda$. The combined memory and gate complexity is
    \begin{align}
        M = \max(M(\Phi_{\rm USS}^{m}), M(\Phi_{\lambda,\mu_{\lambda}}^{\gamma_{\lambda},\psi_{\lambda}}), M((\Phi_{\rm USS}^{n})^{*})) \, , \\
        T = T(\Phi_{\rm USS}^{m}) + T(\Phi_{\lambda,\mu_{\lambda}}^{\gamma_{\lambda},\psi_{\lambda}}) + T((\Phi_{\rm USS}^{n})^{*}) \, ,
    \end{align}
    where $M(\Psi)$ and $T(\Psi)$ denote the memory and gate complexities of the quantum channel $\Psi$. According to Theorem \ref{thm:equivalence_of_unitary_equivariant_quantum channels}, we implement $\Phi_{\lambda,\mu_{\lambda}}^{\gamma_{\lambda},\psi_{\lambda}}$ by implementing the embedding $\iota_{\mu_{\lambda},\overline{\gamma_{\lambda}}}^{\lambda,\psi_{\lambda}}$ and then discarding the register $\overline{\mathcal{Q}_{\gamma}^{d}}$, which gives us
    \begin{align}
        M(\Phi_{\lambda,\mu_{\lambda}}^{\gamma_{\lambda},\psi_{\lambda}}) = M_{\mu_{\lambda},\overline{\gamma_{\lambda}}}^{\lambda,\psi_{\lambda}} \quad , \quad T(\Phi_{\lambda,\mu_{\lambda}}^{\gamma_{\lambda},\psi_{\lambda}}) = T_{\mu_{\lambda},\overline{\gamma_{\lambda}}}^{\lambda,\psi_{\lambda}} \, .
    \end{align}
    For the memory and ies of $\Phi_{\rm USS}^{m}$ and $(\Phi_{\rm USS}^{n})^{*}$, we apply Propositions \ref{prop:unitary_Schur_sampling_implementation} and \ref{prop:symmetrization_quantum channel_implementation}. For the restriction given by $r'$, we use the fact (\cite[Lemma 14]{CerveroMartin_2024}) that
    \begin{align}
        U_{\rm Sch}^{m,n,d}(S^{\otimes m})\subseteq \bigoplus_{\mu\vdash_{r}m} \mathcal{P}_{\mu}^{n}\otimes\mathcal{Q}_{\mu}^{d} \, .
    \end{align}
    This implies $\Tr[\rho\Pi_{\lambda}^{m}]=0$ for all $\lambda$ with $l(\lambda)>r$. Therefore the input to the quantum channel $(\Phi_{\rm USS}^{n})^{*}$ will have support only on irreps $\mathcal{Q}_{\mu_{\lambda}}^{d}$ for $l(\lambda)\leq r$. However, by assumption this implies in turn that $l(\mu_{\lambda})\leq r'$, which means we use Proposition \ref{prop:symmetrization_quantum channel_implementation} with $r'$.
\end{proof}

\section{Three example applications}
\label{sec:examples}

Theorem \ref{thm:complexity_unitary_equivariant_permutation_invariant_quantum channels} together with Propositions \ref{prop:paths_embedding} and \ref{prop:implement_paths_embedding} now give us a recipe for implementing quantum channels in $\mathcal{C}_{up}^{d}(m,n)$.

\subsection{State symmetrization}
\label{subs:state_symmetrization}

Consider a protocol with $m$ qudits where we want to randomize their ordering. This can be due to implementing an information theoretic protocol, scrambling the information contained in the ordering or averaging errors between the $m$ registers. We describe this task via the following definition.

\begin{definition}[State symmetrization]
    The task of \textit{state symmetrization} is given as the task of implementing the \textit{permutation symmetrization quantum channel} $\Phi_{\rm sym}^{m}$, which is defined as
    \begin{align}
        &\Phi_{\rm sym}^{m}:\mathcal{B}((\mathbb{C}^{d})^{\otimes m}) \rightarrow \mathcal{B}((\mathbb{C}^{d})^{\otimes m}) \, , \\
        &\Phi_{\rm sym}^{m}(\rho) := \frac{1}{m!}\sum_{\sigma \in S_{m}} \sigma \rho \sigma^{*} \, .
    \end{align}
\end{definition}

The permutation symmetrization quantum channel is the simplest example of a unitary-equivariant and permutation-invariant quantum channel. With the framework developed in the previous chapters we obtain the following corollary of Theorem~\ref{thm:complexity_unitary_equivariant_permutation_invariant_quantum channels}.

\begin{corollary}
    \label{cor:result_state_symmetrization}
    Let $\rho$ be a state on $S^{\otimes m} \subseteq (\mathbb{C}^{d})^{\otimes m}$ with $\dim S = r$. Then we implement the operation $\rho\rightarrow\Phi_{\rm sym}^{m}(\rho)$ in a streaming manner up to error $\epsilon$ in diamond norm with memory complexity $M$ and gate complexity $T$, where
    \begin{align}
        &M=O\big(rd\log_{2}^{p}(d,m,1/\epsilon)\big) \, ,\\
        \label{equ:gate_complexity_reduced}
        &T=O\big(mr^{3}d\log_{2}^{p}(d,m,1/\epsilon)\big) \, .
    \end{align}
    Here, $p\approx 1.44$.
\end{corollary}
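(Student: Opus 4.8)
First observe that $\Phi_{\rm sym}^{m}$ lies in $\mathcal{C}_{up}^{d}(m,m)$: it is permutation-invariant by construction, and unitary-equivariant because every $\sigma\in S_{m}$ commutes with $U^{\otimes m}$. The plan is to apply Theorem~\ref{thm:complexity_unitary_equivariant_permutation_invariant_quantum channels} with $n=m$, for which I must pin down the defining tuples $(\mu_{\lambda},\gamma_{\lambda},\ket{\psi_{\lambda}})$ of $\Phi_{\rm sym}^{m}$ and show the associated irrep-level channels are trivial; equivalently I will simply compose the streaming implementations of $\Phi_{\rm USS}^{m}$ and $(\Phi_{\rm USS}^{m})^{*}$.

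The claim is that $\mu_{\lambda}=\lambda$ and $\gamma_{\lambda}=\emptyset$ for all $\lambda\vdash_{d}m$, so that $c_{\overline{\lambda},\lambda}^{\emptyset}=1$ and $\ket{\psi_{\lambda}}$ is the unique unit vector of $\mathbb{C}$, and then $\Phi_{\lambda,\lambda}^{\emptyset,\psi_{\lambda}}=\mathrm{id}_{\mathcal{Q}_{\lambda}^{d}}$: by the remark after Definition~\ref{def:unitary_equivariant_quantum channel_irreps} we have $\Phi_{\lambda,\lambda}^{\emptyset,\psi_{\lambda}}(\rho)=\tr_{\mathcal{Q}_{\emptyset}^{d}}[\iota_{\lambda,\emptyset}^{\lambda,\psi_{\lambda}}\rho(\iota_{\lambda,\emptyset}^{\lambda,\psi_{\lambda}})^{*}]$, and the embedding $\iota_{\lambda,\emptyset}^{\lambda,\psi_{\lambda}}\colon\mathcal{Q}_{\lambda}^{d}\hookrightarrow\mathcal{Q}_{\lambda}^{d}\otimes\mathcal{Q}_{\emptyset}^{d}$ into the one-dimensional $\mathcal{Q}_{\emptyset}^{d}$ together with the trace over it is the identity. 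By Theorem~\ref{thm:operational_interpretation_extremal_quantum channels} these tuples give $\Phi=(\Phi_{\rm USS}^{m})^{*}\circ\Phi_{\rm USS}^{m}$, so I must verify this composition is $\Phi_{\rm sym}^{m}$. Conjugating a permutation operator by the Schur transform gives, via Schur--Weyl duality, $U_{\rm Sch}^{m,d}\,\sigma\,(U_{\rm Sch}^{m,d})^{*}=\bigoplus_{\lambda\vdash_{d}m}r_{\lambda}(\sigma)\otimes\id_{\mathcal{Q}_{\lambda}^{d}}$, with $r_{\lambda}$ the $S_{m}$-action on $\mathcal{P}_{\lambda}^{m}$. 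Averaging over $S_{m}$ and using the irrep-twirl identity $\tfrac{1}{m!}\sum_{\sigma}(r_{\lambda}(\sigma)\otimes\id)X_{\lambda}(r_{\lambda}(\sigma)^{*}\otimes\id)=\tfrac{1}{\dim\mathcal{P}_{\lambda}^{m}}\id_{\mathcal{P}_{\lambda}^{m}}\otimes\tr_{\mathcal{P}_{\lambda}^{m}}[X_{\lambda}]$ on each block $X_{\lambda}$ of $U_{\rm Sch}^{m,d}\rho(U_{\rm Sch}^{m,d})^{*}$ yields $\Phi_{\rm sym}^{m}(\rho)=(U_{\rm Sch}^{m,d})^{*}\bigl(\bigoplus_{\lambda}\tfrac{1}{\dim\mathcal{P}_{\lambda}^{m}}\id_{\mathcal{P}_{\lambda}^{m}}\otimes\tr_{\mathcal{P}_{\lambda}^{m}}[X_{\lambda}]\bigr)U_{\rm Sch}^{m,d}$, which is exactly $(\Phi_{\rm USS}^{m})^{*}\circ\Phi_{\rm USS}^{m}(\rho)$ by the definition of $\Phi_{\rm USS}^{m}$ and the formula for $(\Phi_{\rm USS}^{m})^{*}$ in Proposition~\ref{prop:dual_USS_quantum channel} (consistent with the remark there that $(\Phi_{\rm USS}^{m})^{*}\circ\Phi_{\rm USS}^{m}$ is the projection onto permutation-symmetrized states). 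In particular $\Phi_{\rm sym}^{m}$ is extremal in $\mathcal{C}_{up}^{d}(m,m)$ by Corollary~\ref{cor:classification_extremal_points_Cusd}.

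It remains to read off the complexity. Since $\rho$ is supported on $S^{\otimes m}$ with $\dim S=r$, $\cite[Lemma~14]{CerveroMartin_2024}$ gives $U_{\rm Sch}^{m,d}(S^{\otimes m})\subseteq\bigoplus_{\lambda\vdash_{r}m}\mathcal{P}_{\lambda}^{m}\otimes\mathcal{Q}_{\lambda}^{d}$, so every surviving $\lambda$ has $l(\lambda)\leq r$, hence $l(\mu_{\lambda})=l(\lambda)\leq r$ and we may take $r'=r$ in Theorem~\ref{thm:complexity_unitary_equivariant_permutation_invariant_quantum channels}. The residual terms $M_{\mu_{\lambda},\overline{\gamma_{\lambda}}}^{\lambda,\psi_{\lambda}}$ and $T_{\mu_{\lambda},\overline{\gamma_{\lambda}}}^{\lambda,\psi_{\lambda}}$ are $O(1)$ by the triviality of $\iota_{\lambda,\emptyset}^{\lambda,\psi_{\lambda}}$. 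Substituting $n=m$ and $r'=r$ into Theorem~\ref{thm:complexity_unitary_equivariant_permutation_invariant_quantum channels} — equivalently, composing the streaming implementations of Proposition~\ref{prop:unitary_Schur_sampling_implementation} for $\Phi_{\rm USS}^{m}$ and of Proposition~\ref{prop:symmetrization_quantum channel_implementation} for $(\Phi_{\rm USS}^{m})^{*}$, each to accuracy $\epsilon/2$, taking the maximum of the memories and the sum of the gate counts, and using that the diamond norm is contractive under composition so the two errors add — yields $M=O(rd\log_{2}^{p}(d,m,1/\epsilon))$ and $T=O(mr^{3}d\log_{2}^{p}(d,m,1/\epsilon))$, the $\epsilon/2$ rescaling only affecting constants inside the polylogarithm. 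Streaming is inherited because $\Phi_{\rm USS}^{m}$ emits only the classical label $\lambda$ and a state on $\mathcal{Q}_{\lambda}^{d}$, which is precisely the input type of the streaming implementation of $(\Phi_{\rm USS}^{m})^{*}$.

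I expect the only mildly delicate point to be the identity $\Phi_{\rm sym}^{m}=(\Phi_{\rm USS}^{m})^{*}\circ\Phi_{\rm USS}^{m}$ — in particular checking that the maximally mixed permutation register reinserted by $(\Phi_{\rm USS}^{m})^{*}$ is exactly what the twirl produces — together with verifying the length-$r'$ support promise needed to invoke Proposition~\ref{prop:symmetrization_quantum channel_implementation} with $r'=r$; neither constitutes a genuine obstacle, and everything else is direct bookkeeping over the cited results.
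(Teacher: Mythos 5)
Your proposal is correct and takes essentially the same route as the paper: identify $\Phi_{\rm sym}^{m}=(\Phi_{\rm USS}^{m})^{*}\circ\Phi_{\rm USS}^{m}$ via Schur's lemma, recognize it as the extremal channel with $\mu_{\lambda}=\lambda$, $\gamma_{\lambda}=\emptyset$, trivial multiplicity and identity irrep-level channels, and then invoke Theorem~\ref{thm:complexity_unitary_equivariant_permutation_invariant_quantum channels} with $r'=r$ (justified by the support lemma). Your added details — the explicit twirl computation, the $\epsilon/2$ error split, and the correct composition order, which matches the remark after Proposition~\ref{prop:dual_USS_quantum channel} rather than the transposed order written in the paper's proof — only make explicit what the paper leaves implicit.
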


\begin{proof}
    By using Schur's lemma, we see that
    \begin{align}
        U_{\rm Sch}^{m,d}\Phi_{\rm sym}^{m}(\rho)(U_{\rm Sch}^{m,d})^{*} = \bigoplus_{\lambda\vdash_{d}m} \frac{1}{\dim \mathcal{P}_{\lambda}^{m}} \id_{\mathcal{P}_{\lambda}^{m}} \otimes \tr_{\mathcal{P}_{\lambda}^{m}}[\Pi_{\lambda}^{m,d}U_{\rm Sch}^{m,d}\rho(U_{\rm Sch}^{m,d})^{*}\Pi_{\lambda}^{m,d}] \, .
    \end{align}
    Therefore it's easy to see that $\Phi_{\rm sym}^{m}=\Phi_{\rm USS}^{m} \circ (\Phi_{\rm USS}^{m})^{*}$, and we have $\Phi_{\rm sym}^{m}\in \mathcal{C}_{up}^{d}(m,m)$. Further, we find that $\Phi_{\rm sym}^{m}$ is an extremal point in $\mathcal{C}_{up}^{d}(m,m)$ where all $\Phi_{\lambda,\mu_{\lambda}}^{\gamma_{\lambda},\psi_{\lambda}}$ correspond to the identity quantum channels and are therefore given by $\mu_{\lambda}=\lambda$ and $\gamma_{\lambda}=\emptyset$ (with no multiplicity, so $\psi_{\lambda}=1$). The quantum channels $\Phi_{\lambda,\lambda}^{\emptyset,1}$ have no gate complexity and their memory complexity is just the storage of the register $\mathcal{Q}_{\lambda}^{d}$. The corollary now follows directly from Theorem~\ref{thm:complexity_unitary_equivariant_permutation_invariant_quantum channels}.
\end{proof}

Figure \ref{fig:example_path_symmetrizing} contains an example for the paths appearing in the implementation of $\Phi_{\rm sym}^{m}$.

\begin{figure}[h]
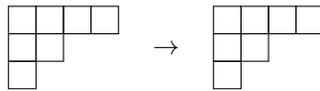

    \begin{align*}
        \ytableausetup{boxsize=1em}
        \ydiagram{4,2,1} \quad \rightarrow \quad \ydiagram{4,2,1}
        \ytableausetup{boxsize=0.5em}
    \end{align*}
    \caption{
    \label{fig:example_path_symmetrizing}
    Path used to implement $\Phi_{\rm sym}^{7}$ for a given $\mu=(4,2,1)$. The path corresponds to the identity quantum channel, so there is no tracing out.}
\end{figure}

\subsection{Symmetric cloning}

In many information theoretic tasks, the ability to copy information is either very desirable or problematic. In the quantum setting, the no-cloning theorem prevents creation of perfect copies of arbitrary quantum states. However, approximate cloning is possible, if we allow for a small error on the clones. A review of results and applications for optimal cloning is given in \cite{Fan_2014}.

\begin{definition}[Symmetric cloning]
    Let $0<m<n$. The task of \textit{symmetric cloning} is given as the task of implementing the \textit{optimal symmetric cloning quantum channel} $\Phi_{\rm cl}^{m,n} \in \mathcal{C}^{d}(m,n)$, which is defined as the quantum channel that maximizes the $m \rightarrow n$ cloning fidelity
    \begin{align}
        \mathcal{F}_{cl}(\Phi) := \min\limits_{\ket{\psi}\in \mathbb{C}^{d}} \tr[\ketbra{\psi}{\psi}^{\otimes n}\Phi(\ketbra{\psi}{\psi}^{\otimes m})] \, .
    \end{align}
\end{definition}

Let $(\mathbb{C}^{d})_{\rm sym}^{\otimes k}$ be the symmetric subspace of $(\mathbb{C}^{d})^{\otimes k}$, and let $\Pi_{\rm sym}^{k,d}$ be the projection onto $(\mathbb{C}^{d})_{\rm sym}^{\otimes k}$. Let further $d[k]$ be the dimension of $(\mathbb{C}^{d})_{\rm sym}^{\otimes k}$. In \cite{Werner_1998}, the quantum channel $\Phi_{\rm cl}^{m,n}$ is given as
\begin{align}
    \Phi_{\rm cl}^{m,n}(\rho) = \frac{d[m]}{d[n]} \Pi_{\rm sym}^{n,d}(\rho \otimes \id_{d}^{\otimes (n-m)}) \Pi_{\rm sym}^{n,d} \, .
\end{align}
In general, this expression is not trace preserving and therefore not a quantum channel. However, the task of symmetric cloning lets us restrict the input to $(\mathbb{C}^{d})_{\rm sym}^{\otimes m}$, where the above expression is actually trace preserving. We now apply the algorithms developed in this work to the above result to get the following corollary.

\begin{corollary}
    \label{cor:result_symmetric_cloning}
    Let $1<m<n$. Given $m$ copies of the state $\ket{\psi}\in \mathbb{C}^{d}$, we implement the operation $\ketbra{\psi}{\psi}^{\otimes m} \rightarrow \Phi_{\rm cl}^{m,n}(\ketbra{\psi}{\psi}^{\otimes m})$ in a streaming manner up to error $\epsilon$ in diamond norm with memory complexity $M$ and gate complexity $T$, where
    \begin{align}
        &M=O\big(d\log_{2}^{p}(d,n,1/\epsilon)\big) \, ,\\
        \label{equ:gate_complexity_reduced}
        &T=O\big(nd\log_{2}^{p}(d,n,1/\epsilon)\big) \, .
    \end{align}
    Here, $p\approx 1.44$.
\end{corollary}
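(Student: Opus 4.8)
The plan is to recognise $\Phi_{\rm cl}^{m,n}$ (on inputs supported in $(\mathbb{C}^{d})_{\rm sym}^{\otimes m}$) as an extremal channel in $\mathcal{C}_{up}^{d}(m,n)$ of the shape classified in Corollary~\ref{cor:classification_extremal_points_Cusd}, to pin down its defining tuples $(\mu_\lambda,\gamma_\lambda,\ket{\psi_\lambda})$, and then to read off the claimed bounds from Theorem~\ref{thm:complexity_unitary_equivariant_permutation_invariant_quantum channels} together with Propositions~\ref{prop:paths_embedding} and~\ref{prop:implement_paths_embedding}. First I would verify membership in $\mathcal{C}_{up}^{d}(m,n)$: unitary-equivariance follows from $U\,\id_{d}\,U^{*}=\id_{d}$, so that $\rho\otimes\id_{d}^{\otimes(n-m)}\mapsto U^{\otimes n}(\rho\otimes\id_{d}^{\otimes(n-m)})(U^{*})^{\otimes n}$, together with $[\Pi_{\rm sym}^{n,d},U^{\otimes n}]=0$; permutation-invariance on the input is immediate because $\sigma\rho\sigma^{*}=\rho$ for $\rho$ supported in $(\mathbb{C}^{d})_{\rm sym}^{\otimes m}$, and on the output it follows from $\tau\,\Pi_{\rm sym}^{n,d}=\Pi_{\rm sym}^{n,d}$, which makes $X\mapsto\Pi_{\rm sym}^{n,d}X\Pi_{\rm sym}^{n,d}$ invariant under $X\mapsto\tau X\tau^{*}$. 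Since the only input ever fed to the algorithm is $\ketbra{\psi}{\psi}^{\otimes m}$, whose support lies in $(\mathbb{C}^{d})_{\rm sym}^{\otimes m}\cong\mathcal{Q}_{(m)}^{d}$ (writing $(m):=(m,0,\dots,0)$ and using $\dim\mathcal{P}_{(m)}^{m}=1$), I may extend $\Phi_{\rm cl}^{m,n}$ to a genuine channel in $\mathcal{C}^{d}(m,n)$ by assigning arbitrary valid tuples for $\lambda\neq(m)$; these never affect the output, and by the $r$-truncation in Theorem~\ref{thm:complexity_unitary_equivariant_permutation_invariant_quantum channels} with $\dim S=r=1$ they are not charged for.

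Next I would identify the tuple at $\lambda=(m)$. Since $\Phi_{\rm cl}^{m,n}(\rho)\propto\Pi_{\rm sym}^{n,d}(\cdots)\Pi_{\rm sym}^{n,d}$ is supported in $(\mathbb{C}^{d})_{\rm sym}^{\otimes n}=\mathcal{P}_{(n)}^{n}\otimes\mathcal{Q}_{(n)}^{d}$ with $\dim\mathcal{P}_{(n)}^{n}=1$, we must have $\mu_{(m)}=(n)$ (so $r'=1$), with the permutation register maximally mixed on a one-dimensional space. To obtain $\gamma_{(m)}$ I would compute $C_{\Phi_{\rm cl}^{m,n}}=\tfrac{d[m]}{d[n]}\sum_{i,j}\ketbra{i}{j}\otimes\Pi_{\rm sym}^{n,d}(\ketbra{i}{j}\otimes\id_{d}^{\otimes(n-m)})\Pi_{\rm sym}^{n,d}$ over a basis of $(\mathbb{C}^{d})_{\rm sym}^{\otimes m}$, diagonalise it with the isometry $I$ from Corollary~\ref{cor:classification_extremal_points_Cusd}, and match; equivalently, and more transparently, I would rewrite the formula of~\cite{Werner_1998} as the dilation $\Phi_{\rm cl}^{m,n}(\rho)=\tr_{\overline{\mathcal{Q}_{(n-m)}^{d}}}[\iota_{(n),\overline{(n-m)}}^{(m),1}\,\rho\,(\iota_{(n),\overline{(n-m)}}^{(m),1})^{*}]$, which is exactly the channel $\Phi_{(m),(n)}^{(n-m),1}$ of Definition~\ref{def:unitary_equivariant_quantum channel_irreps} and the Remark following it. This fixes $\gamma_{(m)}=(n-m)$ — a single-row diagram, matching the $\lambda+\gamma_\lambda=\mu_\lambda$ structure noted in Section~\ref{sec:discussion} — with trivial multiplicity $c^{(m)}_{(n),\overline{(n-m)}}=c^{(n)}_{(n-m),(m)}=1$, hence $\ket{\psi_{(m)}}=1$. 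I would also check that the prefactors $\tfrac{1}{\dim\mathcal{P}_{(n)}^{n}}=1$ and $\tfrac{\dim\mathcal{Q}_{(m)}^{d}}{\dim\mathcal{Q}_{(n-m)}^{d}}=\tfrac{d[m]}{d[n-m]}$ come out consistent with trace preservation on $(\mathbb{C}^{d})_{\rm sym}^{\otimes m}$.

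With the tuples fixed, the bounds follow by substitution. By Theorem~\ref{thm:operational_interpretation_extremal_quantum channels}, $\Phi_{\rm cl}^{m,n}=(\Phi_{\rm USS}^{n})^{*}\circ\Phi_{(m),(n)}^{(n-m),1}\circ\Phi_{\rm USS}^{m}$ on the relevant block, and the middle factor is implemented by realising the embedding $\iota_{(n),\overline{(n-m)}}^{(m),1}$ and discarding $\overline{\mathcal{Q}_{(n-m)}^{d}}$. Proposition~\ref{prop:paths_embedding} realises that embedding via the Bratteli-diagram path $\big((n),(n-1),\dots,(m)\big)$, so $k=0$, $l=n-m$, every intermediate staircase $\nu^{i}=(n-i)$ has a single row, and $\tilde r=\max_{i}l(\nu^{i})=1$; since this is the only such path, the resource state $\ket{p_{(n)\to(m)}^{0,n-m,d}}$ is deterministic and generated in a streaming fashion (holding only $\nu^{i}$ and $\nu^{i+1}$), with $M_{\rm resource}=O(d\log_{2}n)$ and negligible $T_{\rm resource}$, while the streamed output registers $(\overline{\mathbb{C}^{d}})^{\otimes(n-m)}$ are simply discarded. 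Proposition~\ref{prop:implement_paths_embedding} then gives $M_{(m),(n)}^{(n-m),1}=O\big(d\log_{2}^{p}(d,n,1/\epsilon)\big)$ and $T_{(m),(n)}^{(n-m),1}=O\big((n-m)d\log_{2}^{p}(d,n,1/\epsilon)\big)$, and feeding $r=r'=1$ and $m<n$ into Theorem~\ref{thm:complexity_unitary_equivariant_permutation_invariant_quantum channels} produces $M=O\big((r+r')d\log_{2}^{p}(d,m,n,1/\epsilon)\big)+M_{(m),(n)}^{(n-m),1}=O\big(d\log_{2}^{p}(d,n,1/\epsilon)\big)$ and $T=O\big((mr^{3}+n(r')^{3})d\log_{2}^{p}(d,m,n,1/\epsilon)\big)+T_{(m),(n)}^{(n-m),1}=O\big(nd\log_{2}^{p}(d,n,1/\epsilon)\big)$; splitting the diamond-norm error $\epsilon$ across the three stages only changes polynomial factors inside the logarithm.

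I expect the main obstacle to be the middle step — carefully tracing $\tfrac{d[m]}{d[n]}\Pi_{\rm sym}^{n,d}(\rho\otimes\id_{d}^{\otimes(n-m)})\Pi_{\rm sym}^{n,d}$ through (mixed) Schur--Weyl duality and the Clebsch--Gordan embedding $\mathcal{Q}_{(m)}^{d}\hookrightarrow\mathcal{Q}_{(n)}^{d}\otimes\overline{\mathcal{Q}_{(n-m)}^{d}}$ to confirm $(\mu_{(m)},\gamma_{(m)},\ket{\psi_{(m)}})=((n),(n-m),1)$ and that the dimension normalisations are exactly those in Corollary~\ref{cor:classification_extremal_points_Cusd}. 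Everything downstream is a mechanical substitution into the ansatz of Theorem~\ref{thm:complexity_unitary_equivariant_permutation_invariant_quantum channels}.
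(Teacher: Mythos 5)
Your proposal follows essentially the same route as the paper: recognize the Werner formula as the irrep-level channel $\Phi_{(m,\underline{0}),(n,\underline{0})}^{(n-m,\underline{0}),1}$ via Theorem \ref{thm:equivalence_of_unitary_equivariant_quantum channels} (so that $\Phi_{\rm cl}^{m,n}=(\Phi_{\rm USS}^{n})^{*}\circ\Phi_{(m,\underline{0}),(n,\underline{0})}^{(n-m,\underline{0}),1}\circ\Phi_{\rm USS}^{m}$ on symmetric inputs), observe that the path space $\mathcal{P}_{(n,\underline{0})\rightarrow(m,\underline{0})}^{0,n-m,d}$ is one-dimensional, and read off the bounds from Proposition \ref{prop:implement_paths_embedding} and Theorem \ref{thm:complexity_unitary_equivariant_permutation_invariant_quantum channels}. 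Your additional checks (equivariance/invariance, $r=r'=\tilde r=1$, trivial multiplicity) are consistent with and only elaborate on the paper's argument, so the proof is correct and not a genuinely different approach.
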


\begin{proof}
    Let $(k,\underline{0}) \vdash_{d} k$ be the partition given by $(k,0,...,0)$. Then we know that $(\mathbb{C}^{d})_{\rm sym}^{\otimes k} \stackrel{SU(d)}{\cong} \mathcal{Q}_{(k,\underline{0})}^{d}$, and we apply Theorem~\ref{thm:equivalence_of_unitary_equivariant_quantum channels} to see that
    \begin{align}
        \Phi_{\rm cl}^{m,n} = (\Phi_{\rm USS}^{n})^{*} \circ \Phi_{(m,\underline{0}),(n,\underline{0})}^{(n-m,\underline{0}),1} \circ \Phi_{\rm USS}^{m} \, .
    \end{align}
    The map $\Phi_{\rm cl}^{m,n}$ is a quantum channel only if the input is restricted to $(\mathbb{C}^{d})_{\rm sym}^{\otimes m}$, which is the case for symmetric cloning. Theorem~\ref{thm:complexity_unitary_equivariant_permutation_invariant_quantum channels} now gives us the memory and gate complexity of $\Phi_{\rm cl}^{m,n}$ up to our implementation of $\iota_{(n,\underline{0}),\overline{(n-m,\underline{0})}}^{(m,\underline{0}),1}$. We easily check that there is just one way to remove $n-m$ boxes from $(n,\underline{0})$ to obtain $(m,\underline{0})$. In other words, the vector space $\mathcal{P}_{(n,\underline{0})\rightarrow (m,\underline{0})}^{0,n-m,d}$ is one-dimensional. Proposition \ref{prop:implement_paths_embedding} then immediately gives us the memory and gate complexity of $\iota_{(n,\underline{0}),\overline{(n-m,\underline{0})}}^{(m,\underline{0}),1}$ together with the partial trace.
\end{proof}

Figure \ref{fig:example_path_cloning} contains an example for the paths appearing in the implementation of $\Phi_{\rm cl}^{m,n}$.

\begin{figure}[h]
    \begin{align*}
        \ytableausetup{boxsize=1em}
        \ydiagram{4} \quad \rightarrow \quad \ydiagram{5} \otimes \overline{\ydiagram{1}} \quad \rightarrow \quad ... \quad \rightarrow \quad \ydiagram{8} \otimes \overline{\ydiagram{1}}^{\otimes 4}
        \ytableausetup{boxsize=0.5em}
    \end{align*}
    \caption{
    \label{fig:example_path_cloning}
    Path used to implement $\Phi_{\rm cl}^{4,8}$. We trace out the four copies $\overline{\ydiagram{1}}^{\otimes 4}$ of the dual defining irrep.}
\end{figure}

\subsection{Purity amplification}

Assume that Alice creates a pure quantum state $\ket{\psi} \in \mathbb{C}^{d}$ and wants to transmit it to Bob. However, in the process of creating, sending and receiving the state, there is some noise that replaces the state with a random output $\ket{i}\in\mathbb{C}^{d}$ for $1\leq i \leq d$. Classically, Alice can just send her message multiple times and Bob can, with high probability, deduce the original message. This serves as an inspiration for the \emph{purity amplification} protocol, which allows Alice to transmit multiple noisy copies of $\ket{\psi}$ so that Bob can reconstruct the pure state with high fidelity. This problem has been investigated first for qubits in \cite{Cirac_1999}, and a slight variation of it in the form of quantum majority vote was investigated in \cite{Buhrman_2022}. Recently, the case of qudits has been treated in \cite{Li_2025}. The task of purity amplification can formally be defined as follows.
    
\begin{definition}[Purity amplification]
    Let $0 < \alpha < 1$. The task of \textit{purity amplification} is given as the task of implementing the \textit{purity amplification quantum channel} $\Phi_{\rm pa}^{m,n,\alpha} \in \mathcal{C}^{d}(m,n)$, which is defined as the quantum channel that maximizes the $m \rightarrow n$, $\alpha$-depolarizing fidelity
    \begin{align}
        \mathcal{F}_{\rm pa}(\Phi) := \min\limits_{\ket{\psi}\in \mathbb{C}^{d}} \tr[\ketbra{\psi}{\psi}^{\otimes n}\Phi(\rho_{\psi}^{\otimes m})] \, ,
    \end{align}
    where
    \begin{align}
        \rho_{\psi} := (1-\alpha)\ketbra{\psi}{\psi} + \frac{\alpha}{d} \id_{d} \, .
    \end{align}
\end{definition}

In \cite{Li_2025}, it was shown that for the optimal quantum channel is independent of $\alpha$, and that for $n=1$ it is given as
\begin{align}
    \Phi_{\rm pa}^{m,1,\alpha} = \Phi_{\rm pa}^{m,1} = \left(\bigoplus_{\lambda \vdash_{d} m} \Phi_{\lambda, \, \ydiagram{1}}^{\overline{\gamma_{\lambda,\min}},1}\right) \circ \Phi_{\rm USS}^{m} \, ,
\end{align}
where $\gamma_{\lambda,\min} \in \lambda +_{d} \overline{\ydiagram{1}}$ is the unique partition so that for all $\gamma \in \lambda +_{d} \overline{\ydiagram{1}}$ we have $\gamma_{\lambda,min} \preceq \gamma$. In practice, this means taking $\lambda = (\lambda_{1}, \lambda_{2}, ..., \lambda_{d})$ and finding the smallest $1 \leq i \leq d$ so that $\lambda_{i} > \lambda_{i+1}$ (with the convention that $\lambda_{d}>\lambda_{d+1}$) and then take $\gamma_{\lambda,min} = (\lambda_{1}, ..., \lambda_{i}-1, ..., \lambda_{d})$. Again, we apply the algorithms developed in this work to the above result and get the following corollary.

\begin{corollary}
    \label{cor:result_purity_amplification}
    We implement the operation $\rho \rightarrow \Phi_{\rm pa}^{m,1}(\rho)$ in a streaming manner up to error $\epsilon$ in diamond norm with memory complexity $M$ and gate complexity $T$, where
    \begin{align}
        &M=O\big(d^{2}\log_{2}^{p}(d,m,1/\epsilon)\big) \, ,\\
        \label{equ:gate_complexity_reduced}
        &T=O\big(md^{4}\log_{2}^{p}(d,m,1/\epsilon)\big) \, .
    \end{align}
    Here, $p\approx 1.44$.
\end{corollary}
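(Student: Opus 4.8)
The plan is to recognize $\Phi_{\rm pa}^{m,1}$ as an extremal point of $\mathcal{C}_{up}^{d}(m,1)$ whose irrep‑level channels are especially cheap, and then to invoke Theorem~\ref{thm:complexity_unitary_equivariant_permutation_invariant_quantum channels}. First I would use the result of \cite{Li_2025} recalled above, $\Phi_{\rm pa}^{m,1}=\big(\bigoplus_{\lambda\vdash_{d}m}\Phi_{\lambda,\,\ydiagram{1}}^{\overline{\gamma_{\lambda,\min}},1}\big)\circ\Phi_{\rm USS}^{m}$, and note that since the $S_{1}$‑register in the Schur decomposition of $\mathbb{C}^{d}$ is one‑dimensional and $U_{\rm Sch}^{1,d}=\id$, one has $(\Phi_{\rm USS}^{1})^{*}=\id$; hence this matches the factorization of Theorem~\ref{thm:operational_interpretation_extremal_quantum channels} with the tuples $\mu_{\lambda}=\ydiagram{1}$, $\gamma_{\lambda}=\overline{\gamma_{\lambda,\min}}$ and $\psi_{\lambda}=1$ for every $\lambda$ (the last because removing a single box carries no multiplicity, so $c^{\overline{\gamma_{\lambda,\min}}}_{\overline{\lambda},\ydiagram{1}}=1$). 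By Corollary~\ref{cor:classification_extremal_points_Cusd} this makes $\Phi_{\rm pa}^{m,1}$ an extremal point of $\mathcal{C}_{up}^{d}(m,1)$, so Theorem~\ref{thm:complexity_unitary_equivariant_permutation_invariant_quantum channels} applies with $n=1$, $r=d$ (the input $\rho$ lives on all of $(\mathbb{C}^{d})^{\otimes m}$, so $S=\mathbb{C}^{d}$) and $r'=1$ (every $\mu_{\lambda}=\ydiagram{1}$ has length one). Its leading terms then already read $O(d^{2}\log_{2}^{p}(d,m,1/\epsilon))$ memory and $O(md^{4}\log_{2}^{p}(d,m,1/\epsilon))$ gates, so what remains is to bound $M_{\ydiagram{1},\gamma_{\lambda,\min}}^{\lambda,1}$ and $T_{\ydiagram{1},\gamma_{\lambda,\min}}^{\lambda,1}$ uniformly in $\lambda$, i.e. the cost of the embedding $\iota_{\ydiagram{1},\gamma_{\lambda,\min}}^{\lambda,1}\colon\mathcal{Q}_{\lambda}^{d}\hookrightarrow\mathbb{C}^{d}\otimes\mathcal{Q}_{\gamma_{\lambda,\min}}^{d}$ followed by discarding $\mathcal{Q}_{\gamma_{\lambda,\min}}^{d}$.

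The crucial step — and the one I expect to carry the real content — is to observe that, although $\gamma_{\lambda,\min}$ is a partition of $m-1$, this embedding couples only a single box and hence needs just one call to the efficient simple Clebsch--Gordan primitive of Proposition~\ref{prop:superposition_CG_transforms}, rather than the generic path route. Concretely $\iota_{\ydiagram{1},\gamma_{\lambda,\min}}^{\lambda,1}=(U_{\rm CG}^{\ydiagram{1},\gamma_{\lambda,\min},d})^{*}|_{\mathcal{Q}_{\lambda}^{d}}$, and because $\lambda\in\gamma_{\lambda,\min}+_{d}\ydiagram{1}$ (the removed box is added back), $U_{\rm CG}^{\ydiagram{1},\gamma_{\lambda,\min},d}$ agrees with the simple transform $U_{\rm CG}^{\gamma_{\lambda,\min},\ydiagram{1},d}$ up to swapping the two output tensor factors and up to phases on the irrep blocks — both harmless here, since $\mathcal{Q}_{\gamma_{\lambda,\min}}^{d}$ is traced out. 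So, given the state on $\mathcal{Q}_{\lambda}^{d}$ together with the classical label $\lambda$ supplied by $\Phi_{\rm USS}^{m}$, I would compute $\gamma_{\lambda,\min}$ (remove a box from the first row $i$ of $\lambda$ with $\lambda_{i}>\lambda_{i+1}$) in $O(d)$ gates, then apply the inverse simple transform $(U_{\rm CG}^{m-1,0,d,r})^{*}$ with the row promise $r=d$ (valid since $l(\gamma_{\lambda,\min})\leq d$), controlled on $\gamma_{\lambda,\min}$, and discard the $\mathcal{Q}_{\gamma_{\lambda,\min}}^{d}$ register, keeping $\mathbb{C}^{d}$. Proposition~\ref{prop:superposition_CG_transforms} with $r=d$ then yields $M_{\ydiagram{1},\gamma_{\lambda,\min}}^{\lambda,1}=O(d^{2}\log_{2}^{p}(d,m,1/\epsilon))$ and $T_{\ydiagram{1},\gamma_{\lambda,\min}}^{\lambda,1}=O(d^{4}\log_{2}^{p}(d,m,1/\epsilon))$, uniformly in $\lambda$.

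Finally I would substitute these into Theorem~\ref{thm:complexity_unitary_equivariant_permutation_invariant_quantum channels} with $n=1$, $r=d$, $r'=1$: both extra summands are absorbed into the leading terms, giving $M=O(d^{2}\log_{2}^{p}(d,m,1/\epsilon))$ and $T=O(md^{4}\log_{2}^{p}(d,m,1/\epsilon))$, with the error budget across the $O(m)$ Clebsch--Gordan blocks handled exactly as in that theorem (rescale $\epsilon$ by a $\mathrm{poly}(m,d)$ factor, which only changes constants and the polynomial inside the logarithm). The only genuine obstacle is the single‑box observation of the middle paragraph; without it, instantiating Propositions~\ref{prop:paths_embedding} and~\ref{prop:implement_paths_embedding} for this channel would force $k=m-1$, $l=0$, hence an $\Omega(md)$‑qubit resource register, which would overshoot the claimed $O(d^{2}\log_{2}^{p})$ memory bound. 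Everything else is routine substitution.
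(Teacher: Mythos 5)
Your proposal is correct and follows essentially the same route as the paper: factor $\Phi_{\rm pa}^{m,1}$ via the result of \cite{Li_2025}, observe that the irrep-level embedding involves only a single box with trivial multiplicity, implement it with one simple Clebsch--Gordan call, and plug into Theorem~\ref{thm:complexity_unitary_equivariant_permutation_invariant_quantum channels} with $r=d$, $r'=1$. The paper phrases your ``single-box observation'' as the one-dimensionality of the path space $\mathcal{P}_{\gamma_{\lambda,\min}\rightarrow\lambda}^{1,0,d}$ and invokes Proposition~\ref{prop:implement_paths_embedding} with $k=1$, $l=0$, which is the same primitive and yields the same bounds as your direct use of Proposition~\ref{prop:superposition_CG_transforms}.
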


\begin{proof}
    Since $\mathcal{Q}_{\ydiagram{1}}^{d} = \mathbb{C}^{d}$, we see that $\iota_{\ydiagram{1},\gamma_{\lambda,\min}}^{\lambda}$ can be implemented via a path in $\mathcal{P}_{\gamma_{\lambda,\min} \rightarrow \lambda}^{1,0,d}$, as described in Proposition \ref{prop:implement_paths_embedding}. It is easy to see that there is only one way to add a specific box, so $\mathcal{P}_{\gamma_{\lambda,\min}\rightarrow \lambda}^{1,0,d}$ is one-dimensional. Therefore, we ignore the memory and gate complexity for dealing with the path. Using Theorem \ref{thm:complexity_unitary_equivariant_permutation_invariant_quantum channels} together with Proposition \ref{prop:implement_paths_embedding} then gives us the desired result.
\end{proof}

Figure \ref{fig:example_path_purity_amplification} contains an example for the paths appearing in the implementation of $\Phi_{\rm pa}^{m,1}$.

\begin{figure}[h]
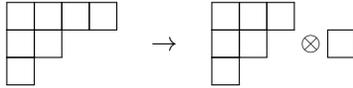

    \begin{align*}
        \ytableausetup{boxsize=1em}
        \ydiagram{4,2,1} \quad \rightarrow \quad \ydiagram{3,2,1} \otimes \ydiagram{1}
        \ytableausetup{boxsize=0.5em}
    \end{align*}
    \caption{    
    \label{fig:example_path_purity_amplification}
    Path used to implement $\Phi_{\rm pa}^{7,1}$ for a given $\mu=(4,2,1)$. We trace out the irrep labelled by $(3,2,1)$ in the end.}
\end{figure}

\section*{Acknowledgements}
We are grateful to Marco Fanizza for suggesting the state symmetrization application, as well as for various insightful comments. We are further grateful for helpful discussion with Tommaso Aschieri, Thomas Fraser, Dmitry Grinko, Zhaoyi Li, Harold Nieuwboer and Māris Ozols.

ET and LM are supported by ERC grant (QInteract, Grant No 101078107) and VILLUM FONDEN (Grant No 10059 and 37532).

\bibliographystyle{IEEEtran}
\bibliography{references}

\appendix

\section*{Appendix}

\subsection{Unitary-equivariant quantum channels on irreps}

In this section, we deal with vector space made up of tensors and different ways of interpreting these tensors as operators on other vector space. To facilitate some calculations, we assume that we have chosen an orthonormal basis $\{\ket{i}\}_{i\in I}$ on each of the vector space we're dealing with. It is important to realize however, that the results in this section are true for all bases, with the exception of those statements that explicitly include a transpose or vectorization.

\begin{definition}
    Let $M\in\mathcal{B}(\mathbb{C}^{a},\mathbb{C}^{b})$, and let $M$ be given by
    \begin{align}
        \sum_{j=1}^{b}\sum_{j=1}^{a} M_{ij} \ketbra{i}{j} \, .
    \end{align}
    Then we define the \emph{vectorization} of $M$ as
    \begin{align}
        \vec (M) := \sum_{j=1}^{b}\sum_{j=1}^{a} M_{ij} \ket{i} \otimes \ket{j} \quad \in \mathbb{C}^{b} \otimes \mathbb{C}^{a} \, .
    \end{align}
\end{definition}

\begin{remark}
    The vectorization is just the partial transpose on the input space. In addition, for $M\in\mathcal{B}(\mathbb{C}^{a},\mathbb{C}^{b})$, we also interpret $\vec(M) \in \mathbb{C}^{b} \otimes \mathbb{C}^{a} \stackrel{SU(d)}{\cong} \mathbb{C}^{a} \otimes \mathbb{C}^{b}$, as there is a canonical isomorphism between the two spaces.
\end{remark}

\begin{fact}
    \label{fct:trace_vectorization}
    Let $M\in\mathcal{B}(\mathbb{C}^{a},\mathbb{C}^{b})$ and let $A\in \mathcal{B}(\mathbb{C}^{a})$ and $B\in \mathcal{B}(\mathbb{C}^{b})$. Let further $\ket{V}:=\vec(M)$. Then we have
    \begin{align}
        \tr_{\mathbb{C}^{a}}\left[\ketbra{V}{V}(A^{T} \otimes B)\right] = MAM^{*}B \, .
    \end{align}
\end{fact}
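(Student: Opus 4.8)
The plan is to prove Fact~\ref{fct:trace_vectorization} by a short direct computation in the fixed orthonormal bases. This is justified here precisely because the identity involves a transpose, so (as flagged at the start of this section) it is a statement about a chosen basis rather than a basis-free one. Writing $\ket{V} = \vec(M) = \sum_{i=1}^{b}\sum_{j=1}^{a} M_{ij}\,\ket{i}\otimes\ket{j} \in \mathbb{C}^{b}\otimes\mathbb{C}^{a}$, I would start from
\begin{align}
    \ketbra{V}{V}(A^{T}\otimes B) = \sum_{i,j,k,l} M_{ij}\,\overline{M_{kl}}\;\big(\ketbra{i}{k}B\big)\otimes\big(\ketbra{j}{l}A^{T}\big) \, ,
\end{align}
where $B$ acts on the $\mathbb{C}^{b}$-factor and $A^{T}$ on the $\mathbb{C}^{a}$-factor (the one to be traced). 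Expanding $\bra{k}B = \sum_{q} B_{kq}\bra{q}$ and $\bra{l}A^{T} = \sum_{s} A_{sl}\bra{s}$ reindexes this as a sum of rank-one operators of the form $\ketbra{i}{q}\otimes\ketbra{j}{s}$.

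Next I would take $\tr_{\mathbb{C}^{a}}$, which forces $j=s$ and leaves
\begin{align}
    \tr_{\mathbb{C}^{a}}\left[\ketbra{V}{V}(A^{T}\otimes B)\right] = \sum_{i,q}\Big(\sum_{j,k,l} M_{ij}\,A_{jl}\,\overline{M_{kl}}\,B_{kq}\Big)\ketbra{i}{q} \, .
\end{align}
It then remains only to recognize the inner sum as a chain of matrix products: contracting $j$ gives $(MA)_{il}$; contracting $l$ against $\overline{M_{kl}} = (M^{*})_{lk}$ gives $(MAM^{*})_{ik}$; and contracting $k$ against $B_{kq}$ gives $(MAM^{*}B)_{iq}$. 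Hence the right-hand side equals $MAM^{*}B$, as claimed. The appearance of $A$ untransposed is exactly the point of writing $A^{T}$ in the hypothesis: vectorization transposes the input leg of $M$, and $A^{T}$ undoes that transpose under the trace. As a sanity check, $A = \id_{\mathbb{C}^{a}}$, $B = \id_{\mathbb{C}^{b}}$ recovers the familiar $\tr_{\mathbb{C}^{a}}[\ketbra{V}{V}] = MM^{*}$.

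Alternatively, one can avoid indices entirely via the standard vectorization identities $(X\otimes Z)\vec(M) = \vec(XMZ^{T})$ (for the convention $\vec(\ketbra{i}{j}) = \ket{i}\otimes\ket{j}$ used here) and $\tr_{\mathbb{C}^{a}}[\vec(P)\vec(Q)^{*}] = PQ^{*}$: from these, $\vec(M)^{*}(A^{T}\otimes B) = \vec(B^{*}MA^{*})^{*}$, and applying the partial-trace identity gives $M(B^{*}MA^{*})^{*} = MAM^{*}B$ in one line. Either way there is no real obstacle — the statement is essentially a bookkeeping exercise, and the only thing to watch is which operator acts on which tensor factor together with the placement of the transpose/adjoint, which is why committing to the fixed bases (or to the two vectorization identities) from the start keeps the argument clean.
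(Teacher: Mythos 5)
Your computation is correct: with the paper's convention $\vec(M)=\sum_{i,j}M_{ij}\ket{i}\otimes\ket{j}\in\mathbb{C}^{b}\otimes\mathbb{C}^{a}$ and $A^{T}$ acting on the traced $\mathbb{C}^{a}$ factor, the index bookkeeping (and equally the one-line route via $(X\otimes Z)\vec(M)=\vec(XMZ^{T})$ and $\tr_{\mathbb{C}^{a}}[\vec(P)\vec(Q)^{*}]=PQ^{*}$) yields $MAM^{*}B$ exactly as claimed. The paper states this as a Fact without proof, so there is nothing to compare against; your verification is precisely the routine argument being taken for granted, and you correctly identified the only delicate points, namely which tensor factor each operator acts on and the placement of the transpose introduced by vectorization.
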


We turn to the Littlewood-Richardson coefficients and their associated multiplicity spaces.

\begin{proposition}
    \label{prop:multiplicity_irreps}    
    Let $\lambda\vdash_{d}m$, $\mu\vdash_{d}n$ and $\nu\vdash_{d}l$. Then we have
    \begin{align}
        c_{\lambda,\mu}^{\nu} = c_{\mu,\lambda}^{\nu}= c_{\mu,\overline{\nu}}^{\overline{\lambda}} = c_{\overline{\lambda},\overline{\mu}}^{\overline{\nu}}\, .
    \end{align}
    Let now $c$ denote the integer given by the equality above. There is a consistent choice of $U_{\rm CG}^{\lambda,\mu,d}$, $U_{\rm CG}^{\mu,\overline{\nu},d}$ and $U_{\rm CG}^{\overline{\lambda},\overline{\mu},d}$ so that we have
    \begin{align}
        \vec \left((U_{\rm CG}^{\lambda,\mu,d})^{*}|_{\mathcal{Q}_{\nu}^{d}\otimes \mathbb{C}^{c}}\right) &= \vec \left((U_{\rm CG}^{\mu,\lambda,d})^{*}|_{\mathcal{Q}_{\nu}^{d}\otimes \mathbb{C}^{c}}\right) = \\
        &= \sqrt{\frac{\dim \mathcal{Q}_{\nu}^{d}}{\dim \mathcal{Q}_{\lambda}^{d}}} \vec \left((U_{\rm CG}^{\mu,\overline{\nu},d})^{*}|_{\overline{\mathcal{Q}_{\lambda}^{d}} \otimes \mathbb{C}^{c}}\right) = \\
        &= \vec \left((\id_{\overline{\mathcal{Q}_{\nu}^{d}}}\otimes \id_{\mathbb{C}^{c}})U_{\rm CG}^{\overline{\lambda},\overline{\mu},d}\right) \, ,
    \end{align}
    where the image of the last operator is restricted to $\overline{\mathcal{Q}_{\nu}^{d}} \otimes \mathbb{C}^{c}$.
\end{proposition}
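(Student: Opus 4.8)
The plan is to handle the two assertions in turn: first the chain of equalities among the multiplicities, then the coherence statement about the Clebsch--Gordan isometries.

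\emph{The multiplicity equalities.} The idea is to reduce everything to the dimension of a space of $SU(d)$-invariant vectors, which is manifestly symmetric under permuting tensor factors and under global complex conjugation. Concretely, $c_{\lambda,\mu}^{\nu}=\dim\mathrm{Hom}_{SU(d)}(\mathcal{Q}_{\nu}^{d},\mathcal{Q}_{\lambda}^{d}\otimes\mathcal{Q}_{\mu}^{d})=\dim(\mathcal{Q}_{\lambda}^{d}\otimes\mathcal{Q}_{\mu}^{d}\otimes\mathcal{Q}_{\overline{\nu}}^{d})^{SU(d)}$, using $\overline{\mathcal{Q}_{\gamma}^{d}}=\mathcal{Q}_{\overline{\gamma}}^{d}$ and the standard identification $\mathrm{Hom}_{SU(d)}(A,B\otimes C)\cong(\overline{A}\otimes B\otimes C)^{SU(d)}$. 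This last number depends only on the multiset $\{\lambda,\mu,\overline{\nu}\}$, which already yields $c_{\lambda,\mu}^{\nu}=c_{\mu,\lambda}^{\nu}$; rewriting the multiset as $\{\mu,\overline{\nu},\lambda\}$ and reading off the corresponding Hom-space gives $c_{\lambda,\mu}^{\nu}=c_{\mu,\overline{\nu}}^{\overline{\lambda}}$; and since complex conjugation is an antilinear bijection $V^{SU(d)}\to\overline{V}^{SU(d)}$, passing to the dual multiset $\{\overline{\lambda},\overline{\mu},\nu\}$ gives $c_{\lambda,\mu}^{\nu}=c_{\overline{\lambda},\overline{\mu}}^{\overline{\nu}}$.

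\emph{Coherence of the Clebsch--Gordan transforms.} The strategy is to recognise each of the four vectors in the statement as an element of one and the same space of triple invariants, and then to \emph{define} the multiplicity-space parts of the transforms so that they coincide with a single chosen vector. The bridge is vectorization: for an $SU(d)$-intertwiner $M$, $\vec(M)$ is an invariant vector in $(\text{target})\otimes\overline{(\text{source})}$, and conversely the choice of an orthonormal basis of $(\mathcal{Q}_{\lambda}^{d}\otimes\mathcal{Q}_{\mu}^{d}\otimes\mathcal{Q}_{\overline{\nu}}^{d})^{SU(d)}$ is the same datum as the choice of the $\nu$-block of the residual multiplicity unitary inside $U_{\rm CG}^{\lambda,\mu,d}$. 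So I would fix, once and for all, an orthonormal basis of each triple-invariant space, use the canonical factor-permutation isomorphisms to read off from it the $\overline{\lambda}$-block of $U_{\rm CG}^{\mu,\overline{\nu},d}$ (producing the first three vectors of the chain, up to the canonical identifications $\mathcal{Q}_{\lambda}^{d}\otimes\mathcal{Q}_{\mu}^{d}\cong\mathcal{Q}_{\mu}^{d}\otimes\mathcal{Q}_{\lambda}^{d}$ and $\overline{\mathcal{Q}_{\gamma}^{d}}=\mathcal{Q}_{\overline{\gamma}}^{d}$), and use the conjugate basis to read off the $\overline{\nu}$-block of $U_{\rm CG}^{\overline{\lambda},\overline{\mu},d}$ (the last vector). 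This assignment is consistent: the block of $U_{\rm CG}^{a,b,d}$ with output label $e$ corresponds to the multiset $\{a,b,\overline{e}\}$, so no transform block is constrained by two different multisets and there is no cyclic obstruction; moreover when $\mu=\ydiagram{1}$ (the base case that has to agree with the conventions of \cite{Grinko_2023,Nguyen_2023,Vilenkin_1995}) all the multiplicities involved are at most $1$, so there is no freedom and nothing to check there. Once the identifications are in place, equality of the four vectors is immediate up to normalization, which I would pin down with $\|\vec(M)\|^{2}=\tr(M^{*}M)$: the first, second and fourth vectors have squared norm $c\cdot\dim\mathcal{Q}_{\nu}^{d}$ (the fourth because $M$ is the partial isometry $\Pi_{\overline{\nu}}U_{\rm CG}^{\overline{\lambda},\overline{\mu},d}$, so $\tr(M^{*}M)=\tr\Pi_{\overline{\nu}}=c\cdot\dim\mathcal{Q}_{\nu}^{d}$), whereas the third has squared norm $c\cdot\dim\mathcal{Q}_{\lambda}^{d}$ — which is exactly why the factor $\sqrt{\dim\mathcal{Q}_{\nu}^{d}/\dim\mathcal{Q}_{\lambda}^{d}}$ sits in front of that term and only that term.

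I expect the main obstacle to be the careful bookkeeping of bases, phases, and the identification $\overline{\mathcal{Q}_{\gamma}^{d}}=\mathcal{Q}_{\overline{\gamma}}^{d}$: vectorization and the factor-permutation and conjugation isomorphisms are all basis-dependent, so one has to check they are compatible with the Gel'fand--Tsetlin conventions of \cite{Grinko_2023,Nguyen_2023,Vilenkin_1995} and that no spurious phases appear — in particular, in self-dual cases one needs conjugation on the triple-invariant space to be an involution, so that a real basis exists. Since the precise choice of general Clebsch--Gordan transform is irrelevant for everything downstream, it is enough to verify compatibility on the single-box base case and then simply \emph{adopt} the recipe above as the definition of the general transforms; the rest is the norm computation just described together with a routine unwinding of the definition of $\vec$.
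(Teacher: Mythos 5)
Your proposal is correct and follows essentially the same route as the paper: the multiplicity equalities are obtained by identifying all the coefficients with the dimension of the $SU(d)$-invariant subspace of the triple tensor product (via the Hom-space/vectorization correspondence), and the coherence of the Clebsch--Gordan transforms is obtained by recognizing the vectorized restricted CG blocks as invariant vectors, fixing the multiplicity-space freedom consistently, and pinning the relative normalization with a Schur's-lemma/norm computation that yields exactly the factor $\sqrt{\dim\mathcal{Q}_{\nu}^{d}/\dim\mathcal{Q}_{\lambda}^{d}}$. The only organizational difference is that you define all blocks at once from a fixed orthonormal basis of each triple-invariant space, whereas the paper starts from a given $U_{\rm CG}^{\lambda,\mu,d}$ and absorbs the residual multiplicity unitary into the choice of the other transforms; the content is the same.
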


\begin{remark}
    The Clebsch--Gordan transforms are uniquely defined up to a phase and up to their image on the multiplicity spaces. Proposition \ref{prop:multiplicity_irreps} now tells us that there is a way to compare multiplicity spaces for corresponding triples $_{\lambda,{\mu}}^{\nu}$ between different Clebsch--Gordan transforms. Notice also that if $c_{\lambda,\mu}^{\nu}=1$, there is a unique choice on the multiplicity spaces and thus the operators are uniquely defined up to phase.
\end{remark}

\begin{remark}
    We apply Proposition \ref{prop:multiplicity_irreps} multiple times to compare the triples
    \begin{align}
        _{\lambda,{\mu}}^{\nu} \quad \leftrightarrow \quad _{\lambda,\overline{\nu}}^{\overline{\mu}} \quad \leftrightarrow \quad _{\overline{\lambda},\nu}^{\mu} \, ,
    \end{align}
    which gives us
    \begin{align}
        \vec \left((U_{\rm CG}^{\lambda,\mu,d})^{*}|_{\mathcal{Q}_{\nu}^{d}\otimes \mathbb{C}^{c}}\right) = \sqrt{\frac{\dim \mathcal{Q}_{\nu}^{d}}{\dim \mathcal{Q}_{\lambda}^{d}}} \vec \left((\id_{\overline{\mathcal{Q}_{\mu}^{d}}}\otimes \id_{\overline{\mathbb{C}}^{c}})U_{\rm CG}^{\overline{\lambda},\nu,d}\right) \, .
    \end{align}
\end{remark}

\begin{proof}
    The equality $c_{\lambda,\mu}^{\nu}= c_{\mu,\lambda}^{\nu}$ and the corresponding equality of vectorizations follow immediately from the commutativity of the tensor product. For the next equality, we use Schur's Lemma and see that the number of linearly independent homomorphisms from $\mathcal{Q}_{\nu}^{d}$ to $\mathcal{Q}_{\lambda}^{d} \otimes \mathcal{Q}_{\mu}^{d}$ that commute with $SU(d)$ is exactly the multiplicity between the relevant irreps, that is
    \begin{align}
        \label{equ:proof_dimension_homomorphisms}
        \dim \mathcal{B}_{SU(d)} \left(\mathcal{Q}_{\nu}^{d},\mathcal{Q}_{\lambda}^{d}\otimes \mathcal{Q}_{\mu}^{d}\right) = c_{\lambda,\mu}^{\nu} \, .
    \end{align}
    Here, we denote by $\mathcal{B}_{SU(d)}(V,W)$ the elements of $\mathcal{B}(V,W)$ that commute with the action of $SU(d)$. By vectorization, we interpret
    \begin{align}
        \vec \left(\mathcal{B}_{SU(d)} \left(\mathcal{Q}_{\nu}^{d},\mathcal{Q}_{\lambda}^{d}\otimes \mathcal{Q}_{\mu}^{d}\right) \right) \subseteq \overline{\mathcal{Q}_{\nu}^{d}} \otimes \mathcal{Q}_{\lambda}^{d} \otimes \mathcal{Q}_{\mu}^{d} \, .
    \end{align}
    In particular, $\vec \left(\mathcal{B}_{SU(d)} \left(\mathcal{Q}_{\nu}^{d},\mathcal{Q}_{\lambda}^{d}\otimes \mathcal{Q}_{\mu}^{d}\right) \right)$ is equal to the subspace of trivial $SU(d)$ irreps, by the requirement that the homomorphisms commute with the action of $SU(d)$. The trivial $SU(d)$ irrep is one-dimensional, so by comparing dimensions with Equation \eqref{equ:proof_dimension_homomorphisms}, we see that there are $c_{\lambda,\mu}^{\nu}$ copies of it in $\overline{\mathcal{Q}_{\nu}^{d}} \otimes \mathcal{Q}_{\lambda}^{d} \otimes \mathcal{Q}_{\mu}^{d}$. We take $\mathcal{B}_{SU(d)} \left(\overline{\mathcal{Q}_{\lambda}^{d}}, \mathcal{Q}_{\mu}^{d} \otimes \overline{\mathcal{Q}_{\nu}^{d}}\right)$ and we see that it is also isomorphic to the invariant subspace of $\overline{\mathcal{Q}_{\nu}^{d}} \otimes \mathcal{Q}_{\lambda}^{d} \otimes \mathcal{Q}_{\mu}^{d}$. By the same argument as before, we get that $c_{\lambda,\mu}^{\nu}= c_{\mu,\overline{\nu}}^{\overline{\lambda}}$. Finally, we know that the multiplicity of a given irrep in a vector space $\mathcal{H}$ is the same as the multiplicity of the dual irrep under the dual action on $\overline{\mathcal{H}}$. Since the trivial irrep is self-dual, it has multiplicity $c_{\mu,\lambda}^{\nu}$ in $\mathcal{Q}_{\nu}^{d} \otimes \overline{\mathcal{Q}_{\lambda}^{d}} \otimes \overline{\mathcal{Q}_{\mu}^{d}}$. Going through the previous argument with $\mathcal{B}_{SU(d)} \left(\overline{\mathcal{Q}_{\nu}^{d}}, \overline{\mathcal{Q}_{\lambda}^{d}} \otimes \overline{\mathcal{Q}_{\mu}^{d}}\right)$ then yields $c_{\lambda,\mu}^{\nu} = c_{\overline{\lambda},\overline{\mu}}^{\overline{\nu}}$.

    The previous arguments allow us to set $c:=c_{\mu,\lambda}^{\nu}= c_{\mu,\overline{\nu}}^{\overline{\lambda}} = c_{\overline{\lambda},\overline{\mu}}^{\overline{\nu}}$. For the next statement, we consider the restriction $(U_{\rm CG}^{\lambda,\mu,d})^{*}|_{\mathcal{Q}_{\nu}^{d}\otimes \mathbb{C}^{c}}$, and we reinterpret its vectorization as follows
    \begin{align}
        &A \in \mathcal{B}_{SU(d)} \left(\overline{\mathcal{Q}_{\lambda}^{d}} \otimes \mathbb{C}^{c}, \mathcal{Q}_{\mu}^{d} \otimes \overline{\mathcal{Q}_{\nu}^{d}}\right) \, , \\
        &\vec (A) := \ket{V} = \vec \left((U_{\rm CG}^{\lambda,\mu,d})^{*}|_{\mathcal{Q}_{\nu}^{d}\otimes \mathbb{C}^{c}}\right) \in \mathcal{Q}_{\lambda}^{d} \otimes \mathcal{Q}_{\mu}^{d} \otimes \overline{\mathcal{Q}_{\nu}^{d}} \otimes \mathbb{C}^{c} \, .
    \end{align}
    We know that $\ket{V}$ lies in the subspace of trivial irreps, so by Schur's lemma we have
    \begin{align}
        \label{equ:proof_multiplicity_irreps_A*A}
        A^{*}A = \frac{1}{\dim \overline{\mathcal{Q}_{\lambda}^{d}}} \id_{\overline{\mathcal{Q}_{\lambda}^{d}}} \otimes \tr_{\overline{\mathcal{Q}_{\lambda}^{d}}}\left[A^{*}A\right] \, .
    \end{align}
    We further deduce
    \begin{align}
        \tr_{\overline{\mathcal{Q}_{\lambda}^{d}}}\left[A^{*}A\right] = \tr_{\mathcal{Q}_{\nu}^{d}}\left[\left((U_{\rm CG}^{\lambda,\mu,d})^{*}|_{\mathcal{Q}_{\nu}^{d}\otimes \mathbb{C}^{c}}\right)^{*}(U_{\rm CG}^{\lambda,\mu,d})^{*}|_{\mathcal{Q}_{\nu}^{d}\otimes \mathbb{C}^{c}}\right] = \dim \mathcal{Q}_{\nu}^{d} \, \id_{\mathbb{C}^{c}} \, .
    \end{align}
    For the last equality we have used that $U_{\rm CG}^{\lambda,\mu,d}$ is unitary. Together with Equation \eqref{equ:proof_multiplicity_irreps_A*A} we see that $\sqrt{\frac{\dim \overline{\mathcal{Q}_{\lambda}^{d}}}{\dim \mathcal{Q}_{\nu}^{d}}}A$ is an isometry and commutes with the action of $SU(d)$. This means it is equal to $(U_{\rm CG}^{\mu,\overline{\nu},d})^{*}|_{\overline{\mathcal{Q}_{\lambda}^{d}} \otimes \mathbb{C}^{c}}$ up to a unitary rotation $U_{c}$ on $\mathbb{C}^{c}$. For a given choice of $U_{\rm CG}^{\lambda,\mu,d}$, we simply choose $U_{\rm CG}^{\mu,\overline{\nu},d}$ so that $U_{c}=\id_{\mathbb{C}^{c}}$, which gives
    \begin{align}
        \vec \left((U_{\rm CG}^{\lambda,\mu,d})^{*}|_{\mathcal{Q}_{\nu}^{d}\otimes \mathbb{C}^{c}}\right) = \sqrt{\frac{\dim \mathcal{Q}_{\nu}^{d}}{\dim \overline{\mathcal{Q}_{\lambda}^{d}}}} \vec \left((U_{\rm CG}^{\mu,\overline{\nu},d})^{*}|_{\overline{\mathcal{Q}_{\lambda}^{d}} \otimes \mathbb{C}^{c}}\right) \, .
    \end{align}
    Together with the fact that $\dim \overline{\mathcal{Q}_{\lambda}^{d}} = \dim \mathcal{Q}_{\lambda}^{d}$, we get the first equality.
    
    For the last equality, we take
    \begin{align}
        &\vec (B^{*}) := \ket{V} \in \mathcal{Q}_{\lambda}^{d} \otimes \mathcal{Q}_{\mu}^{d} \otimes \overline{\mathcal{Q}_{\nu}^{d}} \otimes \mathbb{C}^{c} \, , \\
        &B \in \mathcal{B}_{SU(d)} \left(\overline{\mathcal{Q}_{\lambda}^{d}} \otimes \overline{\mathcal{Q}_{\mu}^{d}}, \overline{\mathcal{Q}_{\nu}^{d}} \otimes \mathbb{C}^{c}\right) \, .
    \end{align}
    By similar arguments as above, we see that there is a choice for $U_{\rm CG}^{\overline{\lambda},\overline{\mu},d}$ so that
    \begin{align}
        B = (\id_{\overline{\mathcal{Q}_{\nu}^{d}}}\otimes \id_{\mathbb{C}^{c}})U_{\rm CG}^{\overline{\lambda},\overline{\mu},d} \, .
    \end{align}
    We inductively fix an arbitrary basis on the multiplicity spaces for all triples $(\lambda,\mu,\nu)$ and the triples related to them by the calculations above, and thereby obtain a consistent choice of $U_{\rm CG}^{\lambda,\mu,d}$, $U_{\rm CG}^{\mu,\overline{\nu},d}$ and $U_{\rm CG}^{\overline{\lambda},\overline{\mu},d}$ that fulfils the above requirements.
\end{proof}

Let $\mathcal{C}_{u}(\mathcal{Q}_{\lambda}^{d},\mathcal{Q}_{\mu}^{d})$ be the set of unitary-equivariant quantum channels from $\mathcal{Q}_{\lambda}^{d}$ to $\mathcal{Q}_{\mu}^{d}$. The following is the main theorem of this section, which has been stated for $SU(2)$ in \cite{Aschieri_2024}, and which has been partially introduced in \cite{Nuwairan_2013}.

\begin{theorem}
    \label{thm:equivalence_of_unitary_equivariant_quantum channels}
    The extremal points of $\mathcal{C}_{u}(\mathcal{Q}_{\lambda}^{d},\mathcal{Q}_{\mu}^{d})$ are given by the quantum channels $\Phi_{\lambda,\mu}^{\gamma,\psi}$. Further, for $A\in \mathcal{B}(\mathcal{Q}_{\lambda}^{d})$, we find
    \begin{align}
        \label{equ:equivariant_quantum channels_forms_1}
        \Phi_{\lambda,\mu}^{\gamma,\psi}(A) &= \frac{\dim \mathcal{Q}_{\lambda}^{d}}{\dim \mathcal{Q}_{\gamma}^{d}}\tr_{\overline{\mathcal{Q}_{\lambda}^{d}}}\left[ \iota_{\overline{\lambda},\mu}^{\gamma,\psi} (\iota_{\overline{\lambda},\mu}^{\gamma,\psi})^{*} (A^{T}\otimes \id_{\mathcal{Q}_{\mu}^{d}})\right] \, , \\
        \label{equ:equivariant_quantum channels_forms_2}
        \Phi_{\lambda,\mu}^{\gamma,\psi}(A) &= \tr_{\overline{\mathcal{Q}_{\gamma}^{d}}}\left[\iota_{\mu,\overline{\gamma}}^{\lambda,\psi}A(\iota_{\mu,\overline{\gamma}}^{\lambda,\psi})^{*}\right] \, , \\
        \label{equ:equivariant_quantum channels_forms_3}
        \Phi_{\lambda,\mu}^{\gamma,\psi}(A) &= \frac{\dim \mathcal{Q}_{\lambda}^{d}}{\dim \mathcal{Q}_{\mu}^{d}} (\iota_{\lambda,\gamma}^{\mu,\overline{\psi}})^{*}(A\otimes \id_{\mathcal{Q}_{\gamma}^{d}}) \iota_{\lambda,\gamma}^{\mu,\overline{\psi}} \, .
    \end{align}
    Here, the vector $\ket{\overline{\psi}}$ is the complex conjugate of $\ket{\psi}$ in the same bases used to define the transpose for $\mathbb{C}^{c_{\overline{\lambda},\mu}^{\gamma}}$.
\end{theorem}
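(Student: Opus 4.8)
The plan is to separate the statement into the classification of extremal channels and the three explicit formulas, and to reuse the Choi-formalism + Schur's-lemma machinery already developed for Theorem~\ref{thm:classification_Cusd} together with Proposition~\ref{prop:multiplicity_irreps} and Fact~\ref{fct:trace_vectorization}.

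\emph{Classification of extremal points.} For $\Phi\in\mathcal{C}_{u}(\mathcal{Q}_{\lambda}^{d},\mathcal{Q}_{\mu}^{d})$ the Choi matrix $C_{\Phi}$ lives on $\overline{\mathcal{Q}_{\lambda}^{d}}\otimes\mathcal{Q}_{\mu}^{d}$ and, by the irrep version of Proposition~\ref{prop:choi_matrix_equivariant}, commutes with $\overline{r_{\lambda}(U)}\otimes r_{\mu}(U)$ for all $U\in SU(d)$. Conjugating by $U_{\rm CG}^{\overline{\lambda},\mu,d}$ block-diagonalizes this action as $\bigoplus_{\gamma\in\overline{\lambda}+_{d}\mu}\mathcal{Q}_{\gamma}^{d}\otimes\mathbb{C}^{c_{\overline{\lambda},\mu}^{\gamma}}$, so Schur's lemma forces $U_{\rm CG}^{\overline{\lambda},\mu,d}C_{\Phi}(U_{\rm CG}^{\overline{\lambda},\mu,d})^{*}=\bigoplus_{\gamma}\id_{\mathcal{Q}_{\gamma}^{d}}\otimes M_{\gamma}$. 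Positivity of $C_{\Phi}$ gives $M_{\gamma}\geq 0$, and imposing $\tr_{\mathcal{Q}_{\mu}^{d}}[C_{\Phi}]=\id_{\overline{\mathcal{Q}_{\lambda}^{d}}}$ yields, by exactly the partial-trace-and-Schur computation in the proof of Theorem~\ref{thm:classification_Cusd}, the single normalization $\sum_{\gamma}\tr[M_{\gamma}]\tfrac{\dim\mathcal{Q}_{\gamma}^{d}}{\dim\mathcal{Q}_{\lambda}^{d}}=1$. After rescaling $M_{\gamma}$ as in Corollary~\ref{cor:classification_extremal_points_Cusd}, the extreme points of this convex set of admissible data are those supported on a single $\gamma$ with $M_{\gamma}=\ketbra{\psi}{\psi}$ rank one; their Choi matrices are precisely those of Definition~\ref{def:unitary_equivariant_quantum channel_irreps}, i.e.\ the $\Phi_{\lambda,\mu}^{\gamma,\psi}$.

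\emph{The first formula.} Since $\iota_{\overline{\lambda},\mu}^{\gamma,\psi}=(U_{\rm CG}^{\overline{\lambda},\mu,d})^{*}|_{\mathcal{Q}_{\gamma}^{d}\otimes\ket{\psi}}$, its range projector satisfies $\iota_{\overline{\lambda},\mu}^{\gamma,\psi}(\iota_{\overline{\lambda},\mu}^{\gamma,\psi})^{*}=(U_{\rm CG}^{\overline{\lambda},\mu,d})^{*}(\id_{\mathcal{Q}_{\gamma}^{d}}\otimes\ketbra{\psi}{\psi})U_{\rm CG}^{\overline{\lambda},\mu,d}$, so $C_{\Phi_{\lambda,\mu}^{\gamma,\psi}}=\tfrac{\dim\mathcal{Q}_{\lambda}^{d}}{\dim\mathcal{Q}_{\gamma}^{d}}\iota_{\overline{\lambda},\mu}^{\gamma,\psi}(\iota_{\overline{\lambda},\mu}^{\gamma,\psi})^{*}$. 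Substituting this into the standard identity $\Phi(A)=\tr_{\overline{\mathcal{Q}_{\lambda}^{d}}}[C_{\Phi}(A^{T}\otimes\id_{\mathcal{Q}_{\mu}^{d}})]$ (read off directly from the definition of the Choi matrix) gives Equation~\eqref{equ:equivariant_quantum channels_forms_1} immediately.

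\emph{The second and third formulas.} These amount to moving the legs of the embedding around, and this is where Proposition~\ref{prop:multiplicity_irreps} does the work. First identify the multiplicity spaces $c_{\overline{\lambda},\mu}^{\gamma}=c_{\mu,\overline{\gamma}}^{\lambda}=c_{\lambda,\gamma}^{\mu}$ via the chain of equalities in that proposition, and fix the relevant Clebsch--Gordan transforms consistently so that its vectorization identities hold: $\vec(\iota_{\overline{\lambda},\mu}^{\gamma,\psi})$, reinterpreted under the canonical reordering of tensor factors, equals $\sqrt{\dim\mathcal{Q}_{\gamma}^{d}/\dim\mathcal{Q}_{\lambda}^{d}}\,\vec(\iota_{\mu,\overline{\gamma}}^{\lambda,\psi})$ and, on the leg carrying $\mathbb{C}^{c}$ (hence a complex conjugation), a reordering of $\vec(\iota_{\lambda,\gamma}^{\mu,\overline{\psi}})$. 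Plugging these identities into Equation~\eqref{equ:equivariant_quantum channels_forms_1} and applying Fact~\ref{fct:trace_vectorization} to turn each $\tr[\ketbra{V}{V}(A^{T}\otimes B)]$ into $MAM^{*}B$ (and its transpose) converts~\eqref{equ:equivariant_quantum channels_forms_1} into~\eqref{equ:equivariant_quantum channels_forms_2} and~\eqref{equ:equivariant_quantum channels_forms_3}; the $\sqrt{\dim}$ factors combine with the prefactor $\tfrac{\dim\mathcal{Q}_{\lambda}^{d}}{\dim\mathcal{Q}_{\gamma}^{d}}$ to give $1$ in~\eqref{equ:equivariant_quantum channels_forms_2} and $\tfrac{\dim\mathcal{Q}_{\lambda}^{d}}{\dim\mathcal{Q}_{\mu}^{d}}$ in~\eqref{equ:equivariant_quantum channels_forms_3}, and $\ket{\psi}$ becomes $\ket{\overline{\psi}}$ exactly on the leg where Proposition~\ref{prop:multiplicity_irreps} introduces a conjugation. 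As a consistency check one verifies that~\eqref{equ:equivariant_quantum channels_forms_2} is trace preserving because $\iota_{\mu,\overline{\gamma}}^{\lambda,\psi}$ is an isometry, and that~\eqref{equ:equivariant_quantum channels_forms_2}--\eqref{equ:equivariant_quantum channels_forms_3} are manifestly completely positive and unitary-equivariant.

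\emph{Main obstacle.} Everything except the third step is a routine reuse of the argument already carried out for Theorem~\ref{thm:classification_Cusd}. The delicate part is the bookkeeping in the third step: tracking which of the several isomorphic multiplicity spaces each $\ket{\psi}$ or $\ket{\overline{\psi}}$ sits in, arranging the $\sqrt{\dim\mathcal{Q}_{\gamma}^{d}/\dim\mathcal{Q}_{\lambda}^{d}}$ factors from Proposition~\ref{prop:multiplicity_irreps} so that they cancel against the Choi prefactor correctly, and making sure the tensor-factor reorderings used when invoking Fact~\ref{fct:trace_vectorization} are compatible with the fixed orthonormal bases, since vectorization and transpose are basis dependent as the appendix warns.
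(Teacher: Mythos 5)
Your proposal is correct and follows essentially the same route as the paper: the classification part redoes the Choi-matrix/Schur's-lemma argument of Theorem~\ref{thm:classification_Cusd} on a single pair of irreps, formula~\eqref{equ:equivariant_quantum channels_forms_1} is read off from the Choi matrix and the definition of $\iota_{\overline{\lambda},\mu}^{\gamma,\psi}$, and formulas~\eqref{equ:equivariant_quantum channels_forms_2}--\eqref{equ:equivariant_quantum channels_forms_3} are obtained exactly as in the paper by combining the vectorization identities of Proposition~\ref{prop:multiplicity_irreps} (with their $\sqrt{\dim}$ factors and the conjugation on the multiplicity leg) with Fact~\ref{fct:trace_vectorization}. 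The only difference is that you leave the final bookkeeping of partial traces and tensor-factor reorderings at the level of a plan, whereas the paper writes out the three partial traces of $\ketbra{V}{V}$ explicitly, but the identified tools and factor tracking are the same.
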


\begin{proof}
    For the first part, we proceed similarly to the proof of Theorem \ref{thm:classification_Cusd}. Let $\Phi\in \mathcal{C}_{u}(\mathcal{Q}_{\lambda}^{d},\mathcal{Q}_{\mu}^{d})$. Then the Choi matrix $C_{\Phi}$ is given by
    \begin{align}
        C_{\Phi} = (U_{\rm CG}^{\overline{\lambda},\mu,d})^{*} \left(\bigoplus_{\gamma \in \overline{\lambda} +_{d} \mu} \frac{\dim \mathcal{Q}_{\lambda}^{d}}{\dim \mathcal{Q}_{\gamma}^{d}} \id_{\mathcal{Q}_{\gamma}^{d}} \otimes M_{\gamma} \right) U_{\rm CG}^{\overline{\lambda},\mu,d} \, ,
    \end{align}
    with $M_{\gamma} \in \mathcal{B}(\mathbb{C}^{c_{\overline{\lambda},\mu}^{\gamma}})$ and
    \begin{align}
        M_{\gamma} \geq 0 \quad , \quad \sum_{\gamma \in \overline{\lambda} +_{d} \mu} \tr M_{\gamma} = 1 \, .
    \end{align}
    Any extremal quantum channel must have $\tr M_{\gamma} = 1$ for some $\gamma$, and $M_{\gamma} = 0$ for all $\gamma^{'} \neq \gamma$. This means we interpret the $M_{\gamma}$ as a state. It is well known that the extremal points of the set of states are rank $1$ projections, and therefore $M_{\gamma} = \ketbra{\psi}{\psi}$ for some $\ket{\psi} \in \mathbb{C}^{c_{\overline{\lambda},\mu}^{\gamma}}$. This yields exactly the definition of $\Phi_{\lambda,\mu}^{\gamma,\psi}$ given in Definition \ref{def:unitary_equivariant_quantum channel_irreps}.

    For the second part, we remember the definition of $\Phi_{\lambda,\mu}^{\gamma,\psi}(A)$ via its Choi matrix as
    \begin{align}
        \Phi_{\lambda,\mu}^{\gamma,\psi}(A) := \frac{\dim \mathcal{Q}_{\lambda}^{d}}{\dim \mathcal{Q}_{\gamma}^{d}}\tr_{\overline{\mathcal{Q}_{\lambda}^{d}}}\left[ (U_{\rm CG}^{\overline{\lambda},\mu,d})^{*}(\id_{\mathcal{Q}_{\gamma}^{d}}\otimes \ketbra{\psi}{\psi})U_{\rm CG}^{\overline{\lambda},\mu,d} (A^{T}\otimes \id_{\mathcal{Q}_{\mu}^{d}})\right] \, ,
    \end{align}
    and together with the definition of $\iota_{\overline{\lambda},\mu}^{\gamma,\psi}$ as
    \begin{align}
        \iota_{\overline{\lambda},\mu}^{\gamma,\psi}(\ket{q_{\gamma}^{d}}) := (U_{\rm CG}^{\overline{\lambda},\mu,d})^{*}(\ket{q_{\gamma}^{d}} \otimes \ket{\psi})
    \end{align}
    we obtain Equation \eqref{equ:equivariant_quantum channels_forms_1}. By applying Proposition \ref{prop:multiplicity_irreps} to get an equivalency between the triples
    \begin{align}
        (\overline{\lambda},\mu,\gamma) \quad , \quad  (\mu,\overline{\gamma},\lambda) \quad , \quad (\overline{\lambda},\overline{\gamma},\overline{\mu}) \, ,
    \end{align}
    we see that $c:=c_{\mu,\lambda}^{\nu}= c_{\mu,\overline{\nu}}^{\overline{\lambda}} = c_{\lambda,\gamma}^{\mu}$, and we define
    \begin{align}
        \label{equ:proof_definition_vectorized_CG_transform}
        \ket{V} := \vec \left((U_{\rm CG}^{\overline{\lambda},\mu,d})^{*}|_{\mathcal{Q}_{\gamma}^{d}\otimes \mathbb{C}^{c}}\right) &= \sqrt{\frac{\dim \mathcal{Q}_{\gamma}^{d}}{\dim \mathcal{Q}_{\lambda}^{d}}} \vec \left((U_{\rm CG}^{\mu,\overline{\gamma},d})^{*}|_{\mathcal{Q}_{\lambda}^{d} \otimes \mathbb{C}^{c}}\right) = \\
        &= \sqrt{\frac{\dim \mathcal{Q}_{\gamma}^{d}}{\dim \mathcal{Q}_{\mu}^{d}}} \vec \left((\id_{\mathcal{Q}_{\mu}^{d}}\otimes \id_{\mathbb{C}^{c}})U_{\rm CG}^{\lambda,\gamma,d}\right) \, .
    \end{align}
    We further define the operator $C_{\overline{\gamma},\overline{c},\overline{\lambda},\mu}$ for $C_{\gamma}\in \mathcal{B}(\mathcal{Q}_{\gamma}^{d})$, $C_{\lambda}\in \mathcal{B}(\mathcal{Q}_{\lambda}^{d})$, $C_{\mu}\in \mathcal{B}(\mathcal{Q}_{\mu}^{d})$ and $C_{c}\in \mathcal{B}(\mathbb{C}^{c})$ as
    \begin{align}
        C_{\overline{\gamma},\overline{c},\overline{\lambda},\mu} := C_{\gamma}^{T}\otimes C_{c}^{T} \otimes C_{\lambda}^{T} \otimes C_{\mu} \, .
    \end{align}
    The definition of $\ket{V}$ in Equation \eqref{equ:proof_definition_vectorized_CG_transform} together with Fact \ref{fct:trace_vectorization} then give us
    \begin{align}
        &\tr_{\overline{\mathcal{Q}_{\gamma}^{d}} \otimes \overline{\mathbb{C}^{c}}} \left[\ketbra{V}{V} C_{\overline{\gamma},\overline{c},\overline{\lambda},\mu}\right] = (U_{\rm CG}^{\overline{\lambda},\mu,d})^{*}(C_{\gamma}\otimes C_{c})U_{\rm CG}^{\overline{\lambda},\mu,d} (C_{\lambda}^{T} \otimes C_{\mu}) \, , \\
        &\tr_{\overline{\mathcal{Q}_{\gamma}^{d}} \otimes \overline{\mathcal{Q}_{\lambda}^{d}}} \left[\ketbra{V}{V} C_{\overline{\gamma},\overline{c},\overline{\lambda},\mu}\right] = \frac{\dim \mathcal{Q}_{\gamma}^{d}}{\dim \mathcal{Q}_{\lambda}^{d}} (U_{\rm CG}^{\overline{\gamma},\mu,d})^{*}(C_{\lambda}\otimes C_{c})U_{\rm CG}^{\overline{\gamma},\mu,d} ( C_{\gamma}^{T} \otimes C_{\mu}) \, , \\
        &\tr_{\overline{\mathcal{Q}_{\lambda}^{d}} \otimes \overline{\mathbb{C}^{c}}} \left[\ketbra{V}{V} C_{\overline{\gamma},\overline{c},\overline{\lambda},\mu}\right] = \frac{\dim \mathcal{Q}_{\gamma}^{d}}{\dim \mathcal{Q}_{\mu}^{d}} U_{\rm CG}^{\lambda,\gamma,d} (C_{\lambda} \otimes C_{\gamma}) (U_{\rm CG}^{\lambda,\gamma,d})^{*}(C_{\mu}\otimes C_{c}^{T}) \, .
    \end{align}
    Taking the trace over the remaining subspace apart from $\mathcal{Q}_{\mu}^{d}$ and setting $C_{\gamma}=\id_{\mathcal{Q}_{\gamma}^{d}}$, $C_{\lambda}=A$, $C_{\mu}=\id_{\mathcal{Q}_{\mu}^{d}}$ and $C_{c}=\ketbra{\psi}{\psi}$, we get
    \begin{align}
        \Phi_{\lambda,\mu}^{\gamma,\psi}(A) &= \frac{\dim \mathcal{Q}_{\lambda}^{d}}{\dim \mathcal{Q}_{\gamma}^{d}} \tr_{\overline{\mathcal{Q}_{\lambda}^{d}}}\left[(U_{\rm CG}^{\overline{\lambda},\mu,d})^{*}(\id_{\mathcal{Q}_{\gamma}^{d}}\otimes \ketbra{\psi}{\psi})U_{\rm CG}^{\overline{\lambda},\mu,d}(A^{T}\otimes \id_{\mathcal{Q}_{\mu}^{d}})\right]= \\
        &= \tr_{\overline{\mathcal{Q}_{\gamma}^{d}}}\left[(U_{\rm CG}^{\overline{\gamma},\mu,d})^{*}(A\otimes \ketbra{\psi}{\psi})U_{\rm CG}^{\overline{\gamma},\mu,d}\right] = \\
        &= \frac{\dim \mathcal{Q}_{\lambda}^{d}}{\dim \mathcal{Q}_{\mu}^{d}} \tr_{\mathbb{C}^{c}}\left[U_{\rm CG}^{\lambda,\gamma,d}(A \otimes \id_{\mathcal{Q}_{\gamma}^{d}}) (U_{\rm CG}^{\lambda,\gamma,d})^{*}(\id_{\mathcal{Q}_{\mu}^{d}}\otimes (\ketbra{\psi}{\psi})^{T})\right] \, .
    \end{align}
    The first two equalities are already of the desired form. For the third equality, we just need to remember that $(\ketbra{\psi}{\psi})^{T} = \ketbra{\overline{\psi}}{\overline{\psi}}$, where the complex conjugation is performed with respect the same bases used to define the transpose.
\end{proof}

\subsection{Uniformly random Gel'fand-Tsetlin base vector for $\mathcal{P}_{\lambda}^{d}$}

We introduce two algorithms that allow us to sample uniformly at random from the Gel'fand-Tsetlin base vector for $\mathcal{P}_{\lambda}^{d}$. Algorithm \ref{alg:sample_one_box_removed} lets us obtain a partition of $m-1$ from a partition of $m$. Example \ref{exa:algorithm_sample_one_box_removed} provides an instance of this algorithm. Algorithm \ref{alg:sample_GT_basis_vector} consists of a repeated application of Algorithm \ref{alg:sample_one_box_removed}. As proven in \cite{Greene_1979}, we use this algorithm to sample a uniformly random path of partitions corresponding to a Gel'fand-Tsetlin base vector for $\mathcal{P}_{\lambda}^{d}$.

\begin{figure}[h]
\begin{algorithm}[H]\label{alg:sample_one_box_removed}
\SetAlgoLined
\caption{Sample $\mu\in \lambda +_{d} \overline{\ydiagram{1}}$}
\SetKwInOut{Input}{Input}
\SetKwInOut{Output}{Output}
\SetKwRepeat{Repeat}{repeat}{until}
\Input{A partition $\lambda\vdash_{d} m$}
\Output{A partition $\mu\in \lambda +_{d} \overline{\ydiagram{1}}$}

Pick uniformly at random a box in the Young diagram of shape $\lambda$, labelled by column and row indices $(i,j)$\;

\While{there is a box to the right or below $(i,j)$}{
    Pick uniformly at random a box $(i',j')$ with either $i'=i$, $j'>j$ or $i'>i$\;
    Set $(i,j)\leftarrow (i',j')$\;
}

Set $\mu \leftarrow \lambda$\;
Set $\mu^{i}\leftarrow \mu^{i} - 1$\;

\Return $\mu$
\end{algorithm}
\end{figure}

\begin{example}
    \label{exa:algorithm_sample_one_box_removed}
    We assume that $\lambda = (5,3,3,2)$. Then a possible sequence of $(i,j)$ is given by
    \begin{align}
        (1,2) \rightarrow (3,2) \rightarrow (3,3) \, .
    \end{align}
    We visualize this as
    \begin{align}
        \ytableausetup{boxsize=1em}
        \ytableaushort{\none} * [*(yellow)]{5,3,3,2} \quad \rightarrow \quad \ytableaushort{\none \cross} * [*(yellow)]{2+3,1+1,1+1,1+1} * {5,3,3,2} \quad \rightarrow \quad \ytableaushort{\none, \none, \none \cross} * [*(yellow)]{0,0,2+1,1+1} * {5,3,3,2} \quad \rightarrow \quad \ytableaushort{\none, \none, \none \none \cross} * {5,3,3,2} \, .
        \ytableausetup{boxsize=0.5em}
    \end{align}
    The coloured part in each step indicates the boxes from which we pick uniformly at random and the $\cross$ indicates the particular instance of our choice. For this sequence, we remove the box $(3,3)$ to obtain $\mu = (5,3,2,2)$.    
\end{example}

\begin{figure}[h]
\begin{algorithm}[H]\label{alg:sample_GT_basis_vector}
\SetAlgoLined
\caption{Uniformly random sampling of a Gel'fand-Tsetlin base vector}
\SetKwInOut{Input}{Input}
\SetKwInOut{Output}{Output}
\SetKwRepeat{Repeat}{repeat}{until}
\Input{A partition $\lambda\vdash_{d} m$}
\Output{A sequence of partitions $p_{\lambda}^{m}$ corresponding to a Gel'fand-Tsetlin base vector for $\mathcal{P}_{\lambda}^{d}$}

Set $\lambda_{m}\leftarrow\lambda$\;

\For{$i=d$ to $1$}{
    Apply Algorithm \ref{alg:sample_one_box_removed} to $\lambda_{i}$ to obtain $\lambda_{i-1}$\;
}

\Return $p_{\lambda}=(\lambda_{0},...,\lambda_{m})$
\end{algorithm}
\end{figure}

\begin{theorem}[Section 2 of \cite{Greene_1979}]
    \label{thm:algorithm_sample_GT_basis}
    Let $\lambda \vdash_{d} m$ with $l(\lambda)=r$. Then we classically uniformly random sample a Gel'fand-Tsetlin base vector for $\mathcal{P}_{\lambda}^{d}$ by applying Algorithm \ref{alg:sample_GT_basis_vector}.
\end{theorem}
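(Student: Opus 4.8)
The plan is to recognize Algorithm~\ref{alg:sample_GT_basis_vector} as the Greene--Nijenhuis--Wilf hook walk and to reduce the statement to its central lemma, the remainder being a short induction. First I would record the combinatorial reformulation. Iterating the branching rule $\mathcal{P}_{\lambda}^{m}\stackrel{\mathcal{A}_{m-1,0}^{d}}{\cong}\bigoplus_{\nu\vdash_{d}m-1,\,\lambda\in\nu+_{d}\ydiagram{1}}\mathcal{P}_{\nu}^{m-1}$ all the way down to the one-dimensional algebra $\mathcal{A}_{1,0}^{d}$, a Gel'fand--Tsetlin base vector of $\mathcal{P}_{\lambda}^{d}$ is precisely a path $(\gamma^{0},\dots,\gamma^{m})$ with $\gamma^{0}=\emptyset$, $\gamma^{m}=\lambda$ and $\gamma^{i}\in\gamma^{i-1}+_{d}\ydiagram{1}$ for every $i$, that is, a standard Young tableau of shape $\lambda$; since $l(\lambda)\le d$, each $\gamma^{i}$ automatically satisfies $l(\gamma^{i})\le d$, so $d$ imposes no extra constraint and the basis is indexed by all $f^{\lambda}$ such tableaux, where $f^{\kappa}$ denotes the number of standard Young tableaux of shape $\kappa$ (equivalently $\dim\mathcal{P}_{\kappa}^{|\kappa|}$). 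The theorem is then the assertion that Algorithm~\ref{alg:sample_GT_basis_vector} outputs each of these $f^{\lambda}$ paths with probability $1/f^{\lambda}$.

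Next I would isolate a single-step claim. Algorithm~\ref{alg:sample_GT_basis_vector} is the iterated application of Algorithm~\ref{alg:sample_one_box_removed}, each round deleting one corner box of the current shape and recording it; after $m$ rounds one reaches $\emptyset$ and has produced a path. So it suffices to prove the \emph{hook walk lemma}: for every partition $\kappa$ with $l(\kappa)\le d$ and every removable corner $c$ of $\kappa$, one application of Algorithm~\ref{alg:sample_one_box_removed} to $\kappa$ returns $\kappa\setminus c$ with probability exactly $f^{\kappa\setminus c}/f^{\kappa}$. Granting this, the theorem follows by induction on $m=|\lambda|$: given a path $\gamma^{0}\lessdot\cdots\lessdot\gamma^{m}=\lambda$ with last box $c=\lambda\setminus\gamma^{m-1}$, the probability that Algorithm~\ref{alg:sample_GT_basis_vector} produces it factors as the probability that the first round removes $c$ times the probability that the remaining rounds --- which are exactly Algorithm~\ref{alg:sample_GT_basis_vector} run on $\gamma^{m-1}$ --- produce $\gamma^{0}\lessdot\cdots\lessdot\gamma^{m-1}$, and by the lemma together with the inductive hypothesis this equals $\frac{f^{\gamma^{m-1}}}{f^{\lambda}}\cdot\frac{1}{f^{\gamma^{m-1}}}=\frac{1}{f^{\lambda}}$; the case $m=0$ is vacuous.

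Finally, the hook walk lemma is exactly the content of Greene--Nijenhuis--Wilf~\cite{Greene_1979}, which in the write-up I would invoke directly; I expect this step to be the only genuinely nontrivial ingredient, everything above being bookkeeping. For completeness one recalls the mechanism: Algorithm~\ref{alg:sample_one_box_removed} picks a uniform cell of $\kappa$ and then jumps to a uniform cell of its current hook until it reaches a corner $c=(a,b)$, which it deletes. Writing $h_{i,j}$ for the hook length of the cell $(i,j)$ and using $f^{\kappa}=|\kappa|!\big/\prod_{(i,j)\in\kappa}h_{i,j}$, one has the identity $\frac{f^{\kappa\setminus c}}{f^{\kappa}}=\frac{1}{|\kappa|}\prod_{i<a}\frac{h_{i,b}}{h_{i,b}-1}\prod_{j<b}\frac{h_{a,j}}{h_{a,j}-1}$, and expanding the two products over subsets $A\subseteq\{1,\dots,a-1\}$ and $B\subseteq\{1,\dots,b-1\}$ matches the total probability of the hook-walk trajectories ending at $c$ once these are grouped by the sets $A,B$ of rows and columns they visit, each group contributing $\frac{1}{|\kappa|}\prod_{i\in A}\frac{1}{h_{i,b}-1}\prod_{j\in B}\frac{1}{h_{a,j}-1}$. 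The main obstacle is therefore this telescoping identity and the accounting of trajectories by visited rows and columns; the representation-theoretic translation to paths in Young's lattice and the induction are routine, and the constraint $l(\lambda)\le d$ never interferes since deleting a box cannot increase the number of rows.
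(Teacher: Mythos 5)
Your proposal is correct and follows essentially the same route as the paper, which offers no independent argument but simply attributes the statement to the hook-walk result of Greene--Nijenhuis--Wilf \cite{Greene_1979}: your reduction of Gel'fand--Tsetlin vectors to standard-tableau paths, the telescoping induction $\frac{f^{\gamma^{m-1}}}{f^{\lambda}}\cdot\frac{1}{f^{\gamma^{m-1}}}=\frac{1}{f^{\lambda}}$, and the identification of Algorithm~\ref{alg:sample_one_box_removed} with the hook walk (whose single-step probability $f^{\kappa\setminus c}/f^{\kappa}$ is exactly the cited lemma) is precisely the intended content of the citation. The only substance beyond the citation is the bookkeeping you supply, and it is accurate, including the observation that the bound $l(\lambda)\le d$ plays no role since intermediate shapes only lose boxes.
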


The total gate complexity of the algorithm is given by $T=\sum_{i=1}^{m}T_{i}$, where $T_{i}$ is the time necessary to perform Algorithm \ref{alg:sample_one_box_removed} with input $\lambda_{i}$. For a straightforward implementation, we see that $T_{i}=O(i)$, which leads to a total gate complexity of $O(m^{2})$. We reduce this number by applying Algorithm \ref{alg:sample_one_box_removed_alternative} instead, which we show is equivalent to Algorithm \ref{alg:sample_one_box_removed}. Example \ref{exa:algorithm_sample_one_box_removed_alternative} provides an instance of the algorithm. 

\begin{figure}[h]
\begin{algorithm}[H]\label{alg:sample_one_box_removed_alternative}
\SetAlgoLined
\caption{Sample $\mu\in \lambda +_{d} \overline{\ydiagram{1}}$}
\SetKwInOut{Input}{Input}
\SetKwInOut{Output}{Output}
\SetKwRepeat{Repeat}{repeat}{until}
\Input{A partition $\lambda\vdash_{d} m$ with $l(\lambda)=r$}
\Output{A partition $\mu\in \lambda +_{d} \overline{\ydiagram{1}}$}
\# Create the weighted partition $\nu$\;
Define $(k_{1},...,i_{r'})$ as the integers so that $i_{1}=1$, $0<\lambda^{i_{k+1}}$ and $i_{k+1}=:\min\{i:\lambda^{i}<\lambda^{k}\}$ \;
Define the partition $\nu$ with $\nu^{k}:=\lambda^{i_{k}}$ for $1\leq k \leq r'$\;
Define the weights $v(k):=i_{k+1}-i_{k}$ for $1\leq k \leq r'-1$\;
Define the weight $v(r'):=r-i_{r'}$\;
Define the weights $w(l):=\nu^{r'+1-l}-\nu^{r'-l+2}$ for $2\leq l \leq r'$\;
Define the weight $w(1):=\nu^{r'}$\;
\# Pick a starting point\;
Define the probabilities $P(k,l):=v(k)w(l)/(\sum_{a,b}v(a)w(b))$\;
Pick at random a box in the Young diagram of shape $\nu$, labelled by column and row indices $(k,l)$, according to the probability $P(k,l)$\;
\#Pick boxes in the same row or column\;
\While{there is a box to the right or below $(k,l)$}{
    Define the probabilities $Q(k,l'):=w(l')/(\sum_{a>k}v(a)+\sum_{b>j}w(b))$\;
    Define the probabilities $Q(k',l):=v(k')/(\sum_{a>k}v(a)+\sum_{b>j}w(b))$\;
    Pick at random a box $(k',l')$ with either $k'=k$, $l'>l$ or $k'>k$, $l'=l$, according to the probabilities $Q(k',l')$\;
    Set $(k,l)\leftarrow (k',l')$\;
}
\# Return new diagram\;
Set $\mu \leftarrow \lambda$\;
Set $\mu^{i_{k}+v(k)-1}\leftarrow \mu^{i_{k}+v(k)-1} - 1$\;

\Return $\mu$
\end{algorithm}
\end{figure}

\begin{example}
    \label{exa:algorithm_sample_one_box_removed_alternative}
    We assume again $\lambda = (5,3,3,2)$. Then first, we convert $\lambda$ into the weighted tableau given by 
    \begin{align}
        \label{equ:squashing_boxes_YT}
        \ytableausetup{boxsize=1.6em}
        \begin{ytableau}
            1/1 & 1/1 & 1/1 & 1/1 & 1/1 \\
            *(yellow) 1/1 & *(yellow) 1/1 & *(yellow) 1/1 \\
            *(yellow) 1/1 & *(yellow) 1/1 & *(yellow) 1/1 \\
            1/1 & 1/1
        \end{ytableau}
        \quad \rightarrow \quad
        \begin{ytableau}
            *(yellow) 1/1 & *(yellow) 1/1 & 1/1 & *(yellow) 1/1 & *(yellow) 1/1 \\
            *(yellow) 2/1 & *(yellow) 2/1 & 2/1 \\
            *(yellow) 1/1 & *(yellow) 1/1
        \end{ytableau}
        \quad \rightarrow \quad
        \begin{ytableau}
            1/2 & 1/1 & 1/2 \\
            2/2 & 2/1 \\
            1/2
        \end{ytableau} \, .
        \ytableausetup{boxsize=0.5em}
    \end{align}
    The diagram gets squashed from top to bottom and left to right so that there are no duplicate rows or columns. In each step, the colored boxes are being squashed, and the weights $\ytableausetup{boxsize=1.6em} \begin{ytableau} v/w \end{ytableau} \ytableausetup{boxsize=0.5em}$ count the number of squashed rows and columns respectively. A possible sequence of $(k,l)$ is now given by
    \begin{align}
        (1,1) \rightarrow (2,1) \rightarrow (2,2) \, .
    \end{align}
    This sequence corresponds to the sequence given in Example \ref{exa:algorithm_sample_one_box_removed}, and we visualize it as
    \begin{align}
        \ytableausetup{boxsize=1em}
        \ytableaushort{212, 42, 2} * [*(yellow)]{3,2,1} \quad \rightarrow \quad \ytableaushort{\cross 12 , 2, 1} * [*(yellow)]{1+2,1,1} * {3,2,1} \quad \rightarrow \quad \ytableaushort{\none, \cross 1, 1} * [*(yellow)]{3+0,1+1,1} * {3,2,1} \quad \rightarrow \quad \ytableaushort{\none, \none \cross, \none} * {3,2,1} \, .
        \ytableausetup{boxsize=0.5em}
    \end{align}
    The coloured part in each step indicates the boxes from which we pick at random and the $\cross$ indicates the particular instance of our choice. The weights correspond to the relative probability weights in each step. For this sequence, we land on the box $(2,2)$, which means removing the box on the bottom right corner of the rectangle that got squished into $(2,2)$. In our case, this is box $(3,3)$ of $\lambda$, which yields $\mu = (5,3,2,2)$.
\end{example}

\begin{proposition}
    \label{prop:complexity_sample_one_box_removed_alternative}
    Let $\lambda \vdash_{d} m$ with $l(\lambda)=r$. Then Algorithms \ref{alg:sample_one_box_removed} and \ref{alg:sample_one_box_removed_alternative} are isomorphic, and we implement Algorithm \ref{alg:sample_one_box_removed_alternative} using $O(r)$ uniformly random samples on $[0,1]$ and $T=O(r^{2})$ basic arithmetical operations.
\end{proposition}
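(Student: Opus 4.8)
The plan is to present Algorithm~\ref{alg:sample_one_box_removed_alternative} as a \emph{lumping} of the hook walk of Algorithm~\ref{alg:sample_one_box_removed}. Concretely, I would group the boxes of $\lambda$ into rectangular blocks $B_{k,l}$, the intersection of the $k$-th super-row (the $v(k)$ equal-length rows $i_k,\dots,i_k+v(k)-1$) with the $l$-th super-column (the $w(l)$ equal-height columns). These are exactly the $v(k)\times w(l)$ rectangles indexed by the boxes $(k,l)\in\nu$, and they tile $\lambda$, so $\sum_{(k,l)\in\nu}v(k)w(l)=m$. The ``isomorphism'' then takes the concrete form: recording which block each box visited by the hook walk of Algorithm~\ref{alg:sample_one_box_removed} lies in, and deleting consecutive repetitions, produces precisely the walk $(k,l)$ of Algorithm~\ref{alg:sample_one_box_removed_alternative}; and the hook walk terminates at the bottom-right box of its terminal block, which is exactly the box Algorithm~\ref{alg:sample_one_box_removed_alternative} removes (decrementing row $i_k+v(k)-1$). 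Since both outputs $\mu$ are deterministic functions of that terminal block, equality of these two block-valued processes yields equality of the output distributions. Examples~\ref{exa:algorithm_sample_one_box_removed} and~\ref{exa:algorithm_sample_one_box_removed_alternative} already exhibit this trajectory correspondence on an instance. (Note that the hook walk only ever moves down or right, so a block, once left, is never revisited, and the block-path is a down/right path in $\nu$ from the start block to a corner, just like the Algorithm~\ref{alg:sample_one_box_removed_alternative} walk.)

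To prove the two block-valued processes agree I would first note that the starting laws match: a uniform box of $\lambda$ lands in $B_{k,l}$ with probability $v(k)w(l)/m=P(k,l)$. The transitions are the real content. For a box $b\in B_{k,l}$ at within-block coordinates $(p,q)$, the fact that all rows of a super-row have equal length and all columns of a super-column equal height forces the hook of $b$ (minus $b$) to meet each lower block $B_{k',l}$ in a full column of it, namely exactly $v(k')$ cells, each right block $B_{k,l'}$ in a full row of it, namely exactly $w(l')$ cells, and $B_{k,l}$ itself in $(v(k)-p)+(w(l)-q)$ further cells. Hence the hook weight lying outside $B_{k,l}$ is the \emph{position-independent} number $D=\sum_{k'>k}v(k')+\sum_{l'>l}w(l')$ appearing in Algorithm~\ref{alg:sample_one_box_removed_alternative}, and conditioned on the next step leaving $B_{k,l}$ the walk enters a given neighbour with probability proportional to that neighbour's $v(k')$ or $w(l')$. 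A one-line first-step analysis (the exit-into-a-fixed-neighbour probability $f(s)$ from a state $s\in B_{k,l}$ satisfies $f(s)=w_{\mathrm{nb}}/c_s+\sum_{s'\in B_{k,l}}\Pr(s\to s')f(s')$ with $c_s$ the hook-size-minus-one of $s$, and once one knows the walk leaves $B_{k,l}$ almost surely the unique solution is the constant $w_{\mathrm{nb}}/D$) then shows the conditional exit block is independent of where in $B_{k,l}$ the walk currently sits. Finally, if $B_{k,l}$ is not a corner of $\nu$ then $D\ge 1$ and the walk leaves a.s.; if it is a corner ($D=0$) the walk is an honest hook walk inside the $v(k)\times w(l)$ rectangle and is absorbed a.s.\ at its bottom-right box, a removable corner of $\lambda$. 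This position-independence is precisely the lumpability that makes the block-valued process Markov with Algorithm~\ref{alg:sample_one_box_removed_alternative}'s transition and absorption rules.

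For the complexity I would argue as follows. Computing the indices $i_k$, the compressed shape $\nu$ and the weights $v,w$ is a single pass over $\lambda$, hence $O(r)$ arithmetic operations and no randomness. Sampling the starting block from $\{v(k)w(l)/m\}$, a distribution supported on the at most $\binom{r'+1}{2}=O(r^2)$ boxes of $\nu$ (indeed $\nu$ is the staircase $(r',r'-1,\dots,1)$), takes $O(r^2)$ arithmetic to build a cumulative table plus one uniform sample on $[0,1]$ by inverse-CDF (one can instead sample $k$ and then $l\mid k$ in $O(r)$ using the telescoping of the $w$'s, but $O(r^2)$ already suffices). Each pass of the while-loop moves $(k,l)$ strictly down or strictly right, and both coordinates are bounded by $r'\le r$, so there are $O(r)$ passes; one pass forms the $O(r)$ probabilities $Q(\cdot,\cdot)$ over the $\le r'$ cells below and to the right in $O(r)$ arithmetic and draws one uniform sample. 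Summing over the whole run gives $O(r)$ uniform samples on $[0,1]$ and $O(r^2)$ basic arithmetic operations, as claimed.

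The main obstacle is the hook-counting identity used at the start of the second paragraph: that a cell of block $B_{k,l}$ sees each lower block through exactly $v(k')$ of its hook cells and each right block through exactly $w(l')$, independently of the cell's position $(p,q)$ inside $B_{k,l}$. Once this is established, the lumpability follows from the routine first-step analysis above and the complexity count is straightforward bookkeeping; but proving the identity cleanly requires carefully tracking which cells of $\lambda$ a given hook meets and invoking, at each turn, the equal-length structure of super-rows and the equal-height structure of super-columns.
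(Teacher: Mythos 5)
Your proposal is correct, and it takes a route that is related to, but genuinely distinct from, the paper's. The paper argues backwards: it introduces $p_{s}(i,j)$, the probability that the walk of Algorithm \ref{alg:sample_one_box_removed}, currently at box $(i,j)$, ends by removing the corner box in row $s$, writes the one-step recursion for $p_{s}$, and shows by induction that $p_{s}$ is constant on each maximal rectangle of boxes having equal $n_{r}$ and $n_{d}$; contracting these rectangles and weighting them by $v(k)$ and $w(l)$ then identifies the two output distributions directly. You argue forwards, via lumpability of the block-valued process: your key lemma is the hook-decomposition identity (the hook weight outside the current block equals $D=\sum_{k'>k}v(k')+\sum_{l'>l}w(l')$, split as $v(k')$ per lower block and $w(l')$ per right block, independently of the position inside the block), and your first-step analysis with the uniqueness observation is sound, since every within-block step strictly increases $p+q$, making the within-block transition matrix nilpotent, which yields the constant exit probability $w_{\mathrm{nb}}/D$; hence the sequence of blocks visited has exactly the law of the walk of Algorithm \ref{alg:sample_one_box_removed_alternative}, and matching the initial laws ($v(k)w(l)/m=P(k,l)$) together with the fact that the terminal corner block determines the removed box completes the equivalence. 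The hook identity you flag as the main remaining obstacle is indeed true and follows exactly as you indicate from the equal row lengths within a super-row and the equal column heights within a super-column; it is the same structural fact the paper exploits, but the paper uses it only implicitly through the hypotheses $n_{d}(i,j)=n_{d}(i,j+1)$ and $n_{r}(i,j)=n_{r}(i+1,j)$ of its induction. Your approach buys a stronger, trajectory-level correspondence and a transparent derivation of the weights $P$ and $Q$, while the paper's backward induction is shorter because it never needs the explicit hook count; your complexity accounting is essentially identical to the paper's ($O(r)$ preprocessing, $O((r')^{2})$ for the table of $P(k,l)$ plus one uniform sample, $O(r')$ arithmetic and one sample per block step, and $O(r')$ steps).
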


\begin{proof}
    We first prove the complexity of Algorithm \ref{alg:sample_one_box_removed_alternative}. All the initial definitions take $O(r)$ operations. Further, calculating the probabilities $P(k,l)$ takes $O((r')^{2})$ operations, and we need to sample once for the first step. During the while loop, calculating the probabilities $Q(k,l)$ takes $O(r')$ operations, and we sample a random value in each step. We have at most $r'$ steps, and together with the fact that $r'<r$ this gives us the number of random samples $O(r)$ and the number of arithmetic operations $O(r^{2})$.

    Now we prove that this algorithm is equivalent to Algorithm \ref{alg:sample_one_box_removed}. To this end, we show that the probability of ending in a certain box is the same for boxes that get squashed together, as depicted by Equation \eqref{equ:squashing_boxes_YT}. Let now $n_{r}(i,j)$ be the number of boxes to the right of the box indexed by $(i,j)$ in a given Young diagram, and let $n_{d}(i,j)$ be the number of boxes below. Let further $p_{s}(i,j)$ be the probability to remove the box $(s,\lambda^{s})$ given that Algorithm \ref{alg:sample_one_box_removed} is currently at box $(i,j)$. Then $p_{s}(i,j)$ is given by the sum of the probabilities of jumping to box $(i',j')$ multiplied with $p_{s}(i',j')$. Thus
    \begin{align}
        \label{equ:probability_remove_box_given_location}
        p_{s}(i,j)=\frac{1}{n_{r}(i,j)+n_{d}(i,j)}\left(\sum_{i'=i+1}^{i+n_{d}(i,j)}p_{s}(i',j)+\sum_{j'=j+1}^{j+n_{r}(i,j)}p_{s}(i,j')\right) \, .
    \end{align}
    We prove by induction that if $n_{d}(i,j)=n_{d}(i,j+1)$, then $p_{s}(i,j)=p_{s}(i,j+1)$. First we assume $n_{d}(i,j)=n_{d}(i,j+1)=0$. Inserting into Equation \eqref{equ:probability_remove_box_given_location} gives
    \begin{align}
        p_{s}(i,j)=\frac{n_{r}(i,j)-1}{n_{r}(i,j)}p_{s}(i,j+1)+\frac{1}{p_{s}(i,j)}p_{s}(i,j+1)=p_{s}(i,j+1) \, .
    \end{align}
    Now we assume that $p_{s}(k,j)=p_{s}(k,j+1)$ for all $i<k\leq i+n_{d}(i,j)$. Then we have
    \begin{align}
        p_{s}(i,j)
        &=\frac{1}{n_{r}(i,j)+n_{d}(i,j)}\left(\sum_{i'=i+1}^{i+n_{d}(i,j)}p_{s}(i',j)+\sum_{j'=j+1}^{j+n_{r}(i,j)}p_{s}(i,j')\right)= \\
        &=\frac{1}{n_{r}(i,j)+n_{d}(i,j)}\left(\sum_{i'=i+1}^{i+n_{d}(i,j)}p_{s}(i',j+1)+p_{s}(i,j+1)+\sum_{j'=j+2}^{j+n_{r}(i,j)}p_{s}(i,j')\right)=\\
        &=\frac{1}{n_{r}(i,j)+n_{d}(i,j)}\left((n_{r}(i,j)+n_{d}(i,j)-1)p_{s}(i,j+1)+p_{s}(i,j+1)\right)=p_{s}(i,j+1) \, .
    \end{align}
    In the same way, we show that $p_{s}(i,j)=p_{s}(i+1,j)$ if $n_{r}(i,j)=n_{r}(i+1,j)$.

    This means that if we have $n_{r}(i,j)=...=n_{d}(i+a,j)$ and $n_{d}(i,j)=...=n_{d}(i,j+b)$, then the boxes $(i',j')$ with $i\leq i' \leq i+a$ and $j\leq j' \leq j+b$ forming a rectangle of width $b$ and height $a$ are equivalent for the algorithm. We therefore contract this rectangle into a single box $(k,l)$, and we keep track of the number of original boxes by remembering $v(k)=a$ and $w(l)=b$. This process is shown in Equation \eqref{equ:squashing_boxes_YT}. Instead of keeping track of specific boxes as in Algorithm \ref{alg:sample_one_box_removed}, we now simply track the rectangles represented by the new boxes $(k,l)$. However, in the original algorithm, we chose all boxes uniformly at random. We remedy this by weighting the contracted rectangle according to their widths and heights. In the first step, we pick any box, so we weight the rectangles by $v(k)\cdot w(l)$. In the subsequent steps, we just pick a box from a given row or column, so we just weight the rectangles by $w(l)$ and $v(k)$ respectively. Additionally, changing to a box within the same rectangle won't change the final probabilities, so we just omit these boxes.
\end{proof}

\end{document}